\newtheorem{theorem}{Theorem}
\newtheorem{lemma}{Lemma}
\newtheorem{proposition}{Proposition}
\newtheorem{corollary}{Corollary}
\newtheorem{remark}{Remark}
\newtheorem{assumption}{Assumption}
\newtheorem{example}{Example}
\theoremstyle{definition}
\theoremstyle{remark}
\begin{document}

\setlength{\abovedisplayskip}{2.5pt}
\setlength{\belowdisplayskip}{2.5pt}

\title{Density Evolution Analysis of the Iterative Joint Ordered-Statistics Decoding for NOMA}

\author{Chentao Yue,~\IEEEmembership{Member,~IEEE,}
		Mahyar Shirvanimoghaddam,~\IEEEmembership{Senior Member,~IEEE,}\\
		Alva Kosasih,~\IEEEmembership{Student Member,~IEEE,} 
		Giyoon Park, Ok-Sun Park, \\
		Wibowo Hardjawana,~\IEEEmembership{Member,~IEEE,}
		Branka Vucetic,~\IEEEmembership{Life Fellow,~IEEE,}\\
		and Yonghui Li,~\IEEEmembership{Fellow,~IEEE}
		
		\thanks{C. Yue, M. Shirvanimoghaddam, A. Kosasih, W. Hardjawana, B. Vucetic, and Y. Li are with the School of Electrical and Information Engineering, the University of Sydney, NSW 2006, Australia (email:\{chentao.yue, mahyar.shm, alva.kosasih, wibowo.hardjawana, branka.vucetic, yonghui.li\}@sydney.edu.au).}
		\thanks{G. Park and O. Park are with the Electronics and Telecommunications Research Institute, Daejeon, South Korea (email: gypark@etri.re.kr; ospark@etri.re.kr).}
}

\maketitle

\begin{abstract}
 In this paper, we develop a density evolution (DE) framework for analyzing the iterative joint decoding (JD) for non-orthogonal multiple access (NOMA) systems, where the ordered-statistics decoding (OSD) is applied to decode short block codes. We first investigate the density-transform feature of the soft-output OSD (SOSD), by deriving the density of the extrinsic log-likelihood ratio (LLR) with known densities of the priori LLR. Then, we represent the OSD-based JD by bipartite graphs (BGs), and develop the DE framework by characterizing the density-transform features of nodes over the BG. Numerical examples show that the proposed DE framework accurately tracks the evolution of LLRs during the iterative decoding, especially at moderate-to-high SNRs. Based on the DE framework, we further analyze the BER performance of the OSD-based JD, and the convergence points of the two-user and equal-power systems.

\end{abstract}

\begin{IEEEkeywords}
 NOMA, joint decoding, ordered statistics decoding, short block code, density evolution.
\end{IEEEkeywords}

\IEEEpeerreviewmaketitle

\vspace{-0.5em}
\section{Introduction}
\vspace{-0.5em}

   Ultra-reliable and low-latency communications (URLLC) has attracted particular attention in 5G and the upcoming 6G for mission-critical services \cite{Mahyar2019ShortCode,popovski2019wireless,dogra2020survey}. The key performance requirements of URLLC are the hundreds-of-microsecond time-to-transmit latency, block error rate (BLER) of $10^{-5}$, and the bit-level granularity of the codeword size and code rate. These ultra-low latency requirements necessitate a high processing speed in the physic layer\cite{shao2019survey}, and mandate the use of short block-length codes ($\leq150$ bits) \cite{Mahyar2019ShortCode}. The mission-critical services in URLLC also raise the challenge of achieving scalable and reliable connectivity to accommodate massive users and devices with limited channel spectrum resources \cite{bockelmann2016massive}. Non-orthogonal multiple access (NOMA) has recently gained increased attention as a promising technique for achieving superior spectral efficiency \cite{makki2020survey, dai2018survey, wong2017key}, which has been considered as a promising technique for 5G and beyond. It enables users to transmit signals non-orthogonally in the frequency, time, or code domains. 
   
   In the asymptotic large block-length scenario, NOMA can achieve certain corner points of the multiple access channel (MAC) capacity region using successive interference cancellation (SIC) \cite{wang2019near}, and can achieve arbitrary points on the boundary of the MAC capacity region with rate splitting and joint decoding (JD) techniques \cite{kramer2008topics, hou2006implementing}. However, in URLLC applications, SIC is inferior due to its sequential nature; the decoding order will critically affect the transmission latency and reliability of users accommodated. For instance, the last decoded user will experience the largest latency, while the early decoded users will be subject to severe multiple-access interference (MAI). Therefore, when providing URLLC services, NOMA should avoid using SIC and instead, use efficient JD schemes with short block-length codes.
    
    The design of low-complexity JD schemes with near-optimal performance is challenging due to MAI \cite{al2014uplink}. For uncoded NOMA systems, the maximum-likelihood (ML) detection of superposed signals is an NP-hard problem with the complexity of $\mathcal{O}(|\mathcal{S}|^{n_u})$ \cite{liu2019capacity}, where $|\mathcal{S}|$ is the cardinality of symbol set, $n_u$ is the number of users, and $\mathcal{O}(\cdot)$ is the big-O operand. In coded systems, the ML JD has the complexity as high as $\mathcal{O}(|\mathcal{S}|^{n_u}2^k)$, where $k$ is the information block length of the coding scheme applied. In the asymptotically large block-length scenario, many JD schemes have been devised for NOMA \cite{liu2019capacity,ping2004approaching,wang2019near,ebada2020iterative,zhang2017channel,gao2014low,montanari2005belief,ping2003interleave,balatsoukas2018design,sharifi2015ldpc}. They typically combines a multi-user detector (MUD) with  a-posterior probability  (APP) decoder to perform iterative decoding. The MUD can be configured to use either the parallel interference cancellation (PIC) \cite{marinkovic2001space} or the (linear) minimum mean square error (MMSE) \cite{madhow1994mmse} technique \cite{liu2019capacity,ping2004approaching,wang2019near,ebada2020iterative,ping2003interleave,gao2014low}. It has been proved that this structure with MUD and APP decoder can approach the boundary of the MAC capacity region in asymptotically large block-length scenarios \cite{boutros2002iterative,ping2003interleave,ping2004approaching}. In relation to the APP decoder, low density parity check (LDPC) codes were analyzed and optimized in \cite{wang2019near,sharifi2015ldpc,balatsoukas2018design}. Owning to their bipartite graph (BG) representation, LDPC codes were intelligently constructed to match with the MUD by using the extrinsic-information-transform (EXIT) chart analysis \cite{li2005exit, li2007analysis,wang2019near}. Moreover, moderate-to-long polar codes and successive cancellation list decoding  (SCL) were also investigated in \cite{ebada2020iterative,xiang2021iterative} to be concatenated with MUD. Apart from the large block-length scenario, \cite{sun2018short} studied the optimal rate and power allocation for short-packet NOMA. In \cite{schiessl2020noma}, the short-packet transmission latency was studied with imperfect channel state information (CSI). Nevertheless, designing an efficient JD for NOMA in the short block-length regime is rarely attempted.
    
    Recently, an efficient JD scheme based on ordered-statistics decoding (OSD) was proposed for the short-block NOMA system towards URLLC applications \cite{yue2021noma}. This OSD-based JD was designed by considering the decoding complexity and error-correction capability of short block codes. Specifically, only short codes that approach the Normal Approximation (NA) bound (Polyanskiy \emph{et. al.} \cite{PPV2010l}) were presumed. Generally, codes approaching NA have large minimum Hamming distances and high-density generator matrices, e.g., Bose-Chaudhuri-Hocquenghem (BCH) codes \cite{Mahyar2019ShortCode,liva2016codeSurvey}; however, their near-ML decoding is always challenging. As a universal decoder for block codes, OSD \cite{Fossorier1995OSD} has rekindled the interests and been applied in decoding NA-approaching short codes recently \cite{dhakal2016error,NewOSD-5GNR, yue2021linear, yue2021probability}. For a linear block code $\mathcal{C}(n,k)$ with minimum distance $d_{\mathrm{H}}$, it has been proved that an OSD with the order of $m = \lceil d_{\mathrm{H}}/4-1\rceil$ is asymptotically approaching the ML decoding performance\cite{Fossorier1995OSD}. In OSD, a higher decoding order indicates improved BLER performance at the expense of increased decoding complexity.
    
    The OSD-based JD \cite{yue2021noma} iteratively performs PIC and the soft-output OSD (SOSD). PIC conducts interference cancellation and outputs the estimated log-likelihood ratios (LLR) of symbols per user, which serves as the priori information fed to SOSD. Then SOSD outputs the extrinsic LLR, which will be the input of PIC for the next decoding iteration. Two novel techniques, decoding switch (DS) and decoding combiner (DC), are further used to accelerate the convergence speed \cite{yue2021noma}. DS controls the engagement of SOSD; when DS is turned off, the SOSD decoding is skipped. Moreover, DC adaptively combines the priori LLR and the extrinsic LLR according to a parameter representing the decoding quality. Even simple DS and DC strategies were shown to significantly reduce the number of iterations required to to reach convergence \cite{yue2021noma}. For instance, when SOSD is simply skipped for $n_u$ iterations before the engagement, the OSD-based JD requires a fewer number of decoding iterations than that of SIC \cite{yue2021noma}, indicating a significant reduction in overall JD complexity.

    Despite that the OSD-based JD is efficient for the NOMA system with short block codes, it lacks appropriate analysis, hindering further optimization. Specifically, if the distribution (i.e., density) of the LLRs propagated between PIC and SOSD can be characterized, then optimized strategies of DS and DC can be designed. In addition, if the relationship among the convergence point of JD, the coding schemes, and the decoding order of OSD can be determined, one can carefully select the optimal decoding order of OSD per iteration to further reduce the JD complexity. In previous work, many techniques have been developed for analyzing the iterative receiver for multiple-input-multiple-output (MIMO) and NOMA systems, including the variance-transform or signal-interference-to-noise-ratio-transform (SINR-transform) chart \cite{liu2019capacity,yuan2014energy,yuan2008evolution}, and the EXIT-chart analysis \cite{chi2018practical, liu2019capacity}; they are, however, insufficient for analyzing the OSD-based JD. Precisely, these chart-based analytical approaches generally assume Gaussian MAI, which is achieved either through applying multiple receiving antennas (MIMO-NOMA) or by accommodating a large number of users with similar or identical receiving power. As a result, they fall short of analyzing power-domain NOMA systems with limited numbers of users, for which MAI normally does not follow a Gaussian distribution. On the other hand, the EXIT-chart analysis can readily utilize the mutual-information-transform feature well developed for LDPC codes; nevertheless, for other codes, Monte Carlo method is usually applied.
    
    To provide tools for further optimizing the OSD-based JD for short-block NOMA, we develop an analytical framework based on density evolution (DE) in this paper, which allows us to track the evolution of the densities of LLRs propagated in the iterative JD at each iteration. The main contributions of this work are summarized below.
    
    \begin{itemize}
    \item We analyze the density-transform feature of SOSD; that is, we derive the distribution of the extrinsic LLRs output by SOSD with given priori LLRs, by introducing a variant algorithm, Dual-OSD, to simplify the analysis. Dual-OSD has two phases of decoding process, namely the phase-0 and phase-1 reprocessing. It is shown that by carefully selecting the parameters in the phase-0 and phase-1 reprocessing, Dual-OSD can approach the density-transform feature of SOSD or its variants. Numerical results and simulation results for short extended BCH (eBCH) codes validate the derived density transform feature.
    
    \item Based on the density-transform feature of SOSD, we develop the DE framework for the OSD-based JD. We propose a numerical approach for determining the density of LLRs output by PIC, when DS is turned off (SOSD is skipped). On the other hand, when DS is turned on, we develop the DE technique by examining the density-transform feature of the PIC and the SOSD decoder separately. Numerical results for two-user and three-user systems verify that the proposed DE can accurately describe the evolution of the priori and extrinsic LLRs during the JD process.
    
    \item We further provide some analytical examples based on the DE framework, including the bit error rate (BER) and the convergence point of the considered JD. First, it is demonstrated that the DE framework can be used to theoretically determine the BER performance of JD per iteration and user. Furthermore, the convergence condition and the converged LLR densities of the two-user system and the equal-power system are discussed. Numerical results show that the converged LLR densities can be directly determined by the proposed DE technique. Finally, we outline some future works based on the proposed analytical framework.

    \end{itemize}

        The rest of this paper is organized as follows. Section \ref{sec::Preliminaries} describes the preliminaries and reviews the OSD-based JD. Section \ref{Sec::DTF-OSD} studies the density-transform feature of SOSD. The DE framework is developed in Section \ref{Sec::DE}. Section \ref{Sec::Example} provides analytical examples regarding BER and convergence conditions. Finally, Section \ref{sec::Conclusion} concludes the paper.

    \subsubsection*{Notation} In this paper, we use $\mathrm{Pr}(\cdot)$ to denote the probability of an event. A bold letter, e.g., $\mathbf A$, represents a matrix, and a lowercase bold letter, e.g., $\mathbf{a}$, denotes a row vector. We also use $[a]_u^v$ to denote a row vector containing element $a_{\ell}$ for $u\le \ell\le v$, i.e., $[a]_u^v = [a_u,\ldots,a_v]$. We use a calligraphic uppercase letter to denote a set, e.g., $\mathcal{S}$, or the density of a random variable, also known as the probability density function ($\mathrm{pdf}$). For example, $\mathcal{V}(x)$ is the density of the random variable $V$. Furthermore, $\mathcal{V}(x;Y)$ denote the density of $V$ parameterized by a variable $Y$, and $\mathcal{V}(x|Y=y)$ is the density conditioning on $\{Y=y\}$. In particular, we simply denote the Gaussian density as $f(x) = \mathcal{N}(\mu,\sigma^2)(x)$ with the mean $\mu$ and variance $\sigma^2$. Other notations will be specifically stated.

\vspace{-0.5em}
\section{Preliminaries} \label{sec::Preliminaries}
\vspace{-0.5em}

\subsection{System Model}
\vspace{-0.5em}

    We consider a binary phase shift keying (BPSK) signal transmission in the uplink power-domain NOMA with $n_u$ simultaneous users, where the channel remains constant for the duration of one code block and changes independently between blocks. Given the block code ${\mathcal C}(n,k)$, all users share the identical codebook represented by the generator matrix $\mathbf{G}$. The information block of user $u$, $\mathbf{b}^{(u)} = [b^{(u)}]_1^k$, is encoded to its codeword, $\mathbf{c}^{(u)} = [c^{(u)}]_1^n$, with $\mathbf{c}^{(u)} = \mathbf{b}^{(u)}\mathbf{G}$, where $k$ and $n$ denote the information block and codeword lengths, respectively. The codeword $\mathbf{c}^{(u)}$ is interleaved by an interleaver $\Pi_u$, which is randomly selected for each user. All users simultaneously transmit the modulated symbol to the base station non-orthogonally. At the base station, the superposed signal $\mathbf{r}$ is received with a single antenna, i.e.,
        \begin{equation}\small \label{equ::Pri::Sysmod}
            \mathbf{r} = \mathbf{h}\mathbf{X} + \mathbf{w},
        \end{equation}
    where $\mathbf{h} = [h^{(1)},\ldots,h^{(n_u)}]$ is a $1\times n_u$ channel coefficient vector. For the simplicity of analysis, we consider the additive-white-Gaussian-noise (AWGN) channel in this paper. Specifically, coefficient $h^{(u)} \in \mathbb{R}$, $1\leq u \leq n_u$, and thus $(h^{(u)})^2$ represents the strength of receiving power of user $u$.  One can extend the results in this paper to fading channels by taking known distributions of complex $\mathbf{h}$ into account. $\mathbf{X} = [\mathbf{x}^{(1)};\mathbf{x}^{(2)};\ldots;\mathbf{x}^{(n_u)}]$ is a $n_u\times n$ matrix of modulated symbols, where $\mathbf{x}^{(u)}$ is the symbol vector of $\mathbf{c}^{(u)}$, i.e., $x_{i}^{(u)} = 1 - 2c_{i}^{(u)}$ for $1\leq i\leq n$. $\mathbf{w} = [w]_1^n$ is the vector of AWGN variables, where each entry $w_i \sim \mathcal{N}(0,\sigma^2)$ with $\sigma^2$ being the noise power. At the receiver, we assume that the channel coefficients are known a priori, and define the multi-user SNR as $\mathrm{SNR} = \sum_{u=1}^{n_u}\frac{1}{\sigma^2}(h^{(u)})^2$. 

\vspace{-0.5em}
\subsection{OSD Algorithm} \label{sec::Pri::OSD}
  \vspace{-0.5em}    
  
        We briefly introduce the OSD algorithm and its soft-output variant \cite{Fossorier1995OSD}. When decoding, a sequence of LLR $\bm{\ell} = [\ell]_1^n$ of a transmitted codeword $\mathbf{c} = [c]_1^n$ is input to OSD, defined as ${\ell}_{i} \triangleq \log \frac{\mathrm{Pr}(c_{i}=1|\bar{x}_i)}{\mathrm{Pr}(c_{i}=0|\bar{x}_i)}$ conditioning on an observation $\bar{x}_i$ of $c_i$. The observation $\bar{x}_i$ can be obtained from the channel or fed by the MUD. Then, OSD outputs the codeword estimate $\mathbf{c}_{\mathrm{op}}$ of $\mathbf{c}$ by decoding $\bm{\ell}$.

        At the beginning of the OSD, the bit-wise hard decision vector $\mathbf{y}= [y]_{1}^n$ is obtained according to the following rule: $y_{i}=1 $ for $\ell_{i}<0$, and $y_{i}=0 $ for $\ell_{i}\geq 0$. We define the magnitude of LLR $\ell_{i}$ as the reliability of $y_{i}$, denoted by $\alpha_{i}$, i.e., $\alpha_{i} = |\ell_{i}|$, where $|\cdot|$ is the absolute operation. 
        
        Then, a permutation $\pi_1$ is performed to sort the LLR $\bm{\ell}$ and corresponding columns of $\mathbf{G}$ in descending order of reliabilities $\bm{\alpha}$, which obtains $\pi_{1}(\bm\alpha)$ and $\pi_1(\mathbf{G})$. Next, OSD performs Gaussian elimination (GE) over $\pi_1(\mathbf{G})$ to derive the systematic matrix $\mathbf{\widetilde G} = [\mathbf{I}_k \  \mathbf{\widetilde{P}}]$, where $\mathbf{I}_k$ is a $k\times k$ identity matrix and $\mathbf{\widetilde{P}}$ is the parity sub-matrix. A permutation $\pi_{2}$ may be performed during GE to ensure that the first $k$ columns of $\mathbf{\widetilde G}$ are linearly independent. Finally, vectors $\mathbf{y}$, $\bm{\ell}$, and $\bm{\alpha}$ are respectively permuted to $\widetilde{\mathbf{y}} = \pi_{2}(\pi_{1}(\mathbf{y}))$, $\widetilde{\bm{\ell}} = \pi_{2}(\pi_{1}(\bm{\ell}))$, and $\bm{\widetilde \alpha} = \pi_{2}(\pi_{1}(\bm\alpha))$, corresponding to the columns of $\widetilde{\mathbf{G}}$. As shown by \cite[Eq. (59)]{Fossorier1995OSD}, $\pi_{2}$ is usually negligible and can be omitted, i.e., taking $\mathbf{a} = \pi_{2}(\mathbf{a})$ for an arbitrary length-$n$ vector $\mathbf{a}$.
    
        After the above permutations, the first $k$ positions of $\mathbf{\widetilde y}$, denoted by $\widetilde{\mathbf{y}}_{\mathrm{B}} =[\widetilde y]_1^k$, are referred to as the most reliable basis (MRB) \cite{Fossorier1995OSD}. A test error pattern (TEP) $\mathbf{e} = [e]_1^k$ is added to $\widetilde{\mathbf{y}}_{\mathrm{B}}$ to obtain a codeword estimate by re-encoding, i.e., $\mathbf{\widetilde c}_{\mathbf{e}} = \left(\widetilde{\mathbf{y}}_{\mathrm{B}}\oplus \mathbf{e}\right)\mathbf{\widetilde G}$
    	where $\mathbf{\widetilde c}_{\mathbf{e}}$ is the codeword estimate with respect to TEP $\mathbf{e}$. In OSD, a number of TEPs are re-encoded, which is referred to as reprocessing \cite{Fossorier1995OSD}. A general reprocessing strategy is starting from TEPs with zero Hamming weight, and increasing the weight until the maximum allowed weight is reached \cite{Fossorier1995OSD}. The maximum Hamming weight is known as the decoding order of OSD. It is proved that an OSD with order $m = \lceil d_{\mathrm{H}}/4-1\rceil$ is asymptotically approaching the ML decoding performance for a code with the minimum Hamming distance $d_{\mathrm{H}}$ \cite{Fossorier1995OSD}. Because of the overhead of multiplications in each re-encoding, the overall complexity of OSD is dominated by the number of TEPs.
    	
    	With BPSK modulation, OSD finds the best ordered codeword estimation $\widetilde{\mathbf{c}}_{\mathrm{op}}$ by minimizing the weighted Hamming distance (WHD) between each estimate $\widetilde{\mathbf{c}}_{\mathbf{e}}$ and $\mathbf{\widetilde{y}}$, which is defined as \cite{valembois2002comparison}
    	\begin{equation}\small \label{equ::WHDdefine}
        	d(\mathbf{\widetilde c}_{\mathbf{e}},\mathbf{\widetilde y}) \triangleq \sum_{\substack{1 \leq i \leq n \\ \widetilde{c}_{\mathbf{e},i}\neq \widetilde{y}_{i}}} \widetilde\alpha_{i}. 
    	\end{equation}
        Finally, the estimate $\mathbf{c}_{\mathrm{op}}$ of $\mathbf{c}$ is output by performing inverse permutations over $\mathbf{\widetilde c}_{\mathrm{op}}$, i.e.
    	$\mathbf{c}_{\mathrm{op}} = \pi_{1}^{-1}(\pi_{2}^{-1}(\mathbf{\widetilde c}_{\mathrm{op}}))$. Throughout this paper, we imply the following relationships $\widetilde{\mathbf{a}} = \pi_{2}(\pi_{1}(\mathbf{a}))$ and $\mathbf{a} = \pi_{1}^{-1}(\pi_{2}^{-1}(\widetilde{\mathbf{a}}))$ for an arbitrary length-$n$ vector $\mathbf{a}$.
    
    	OSD was modified to output soft information in \cite{fossorier1998soft}, known as SOSD. Given the input LLR sequence $\bm{\ell} = [\ell]_1^n$, the extrinsic LLR of the $i$-th input LLR, $\ell_i$, is derived by SOSD as 
        \begin{equation}\small \label{equ::LowSISO::extLLR}
        \begin{split}
            \delta_{i} = \sum_{j=1}^{n}\ell_j\left(c_j(i\!:\!0) - c_j(i\!:\!1)\right) - \ell_i ,
        \end{split}
        \end{equation}
        where $\mathbf{c}(i\!:\! 0) = [c(i\!:\! 0)]_1^n$ and $\mathbf{c}(i\!:\! 1)=[c(i\!:\! 1)]_1^n$ are codeword estimates generated by OSD whose $i$-th bit is 0 and 1, respectively, with the lowest WHD to $\mathbf{y}$. It is shown that SOSD \cite{fossorier1998soft} with order $m = \lceil d_{\mathrm{H}}/4\rceil$ almost always delivers the same soft outputs as the Max-Log-MAP algorithm  \cite{hagenauer1996iterative}.

         \begin{figure*} 
    		\begin{center}
    			\includegraphics[scale=0.55] {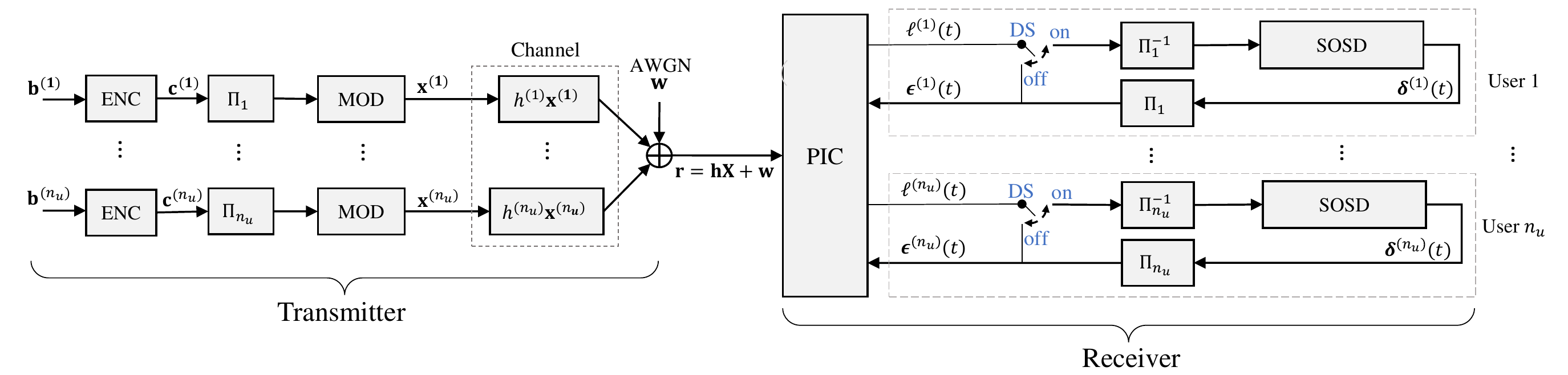}
    			\caption{The structure of the transmitter and receiver}
    			\label{Fig::Structure}
    		\end{center}
    	\end{figure*}
    	
    \vspace{-0.5em}
        \subsection{NOMA Joint Decoding with SOSD}
        \vspace{-0.5em}
        
        The signal $\mathbf{r}$ given by (\ref{equ::Pri::Sysmod}) is decoded by an iterative JD receiver based on SOSD \cite{yue2021noma}, as shown in Fig. \ref{Fig::Structure}. For clarity of notation, we do not differentiate variables before and after interleavers. We note that inteleavers are effective in reducing the correlation between the signals of different users \cite{ping2003interleave,wang2019near}. The JD receiver has two major phases in relation to the decoding switch (DS): 1) the DS-off phase and 2) the DS-on phase. At the beginning of JD, DS is turned off (DS-off phase). PIC finds the LLRs, $\bm{\ell}^{(u)}(t) = [\ell^{(u)}(t)]_1^n$, of each user at iteration $t$. Then, $\bm{\epsilon}^{(u)}(t) \leftarrow \bm{\ell}^{(u)}(t)$ is directly fedback to PIC, serving as the priori LLRs for the next iteration. After a few iterations, DS is turned on (DS-on phase). In the DS-on phase, $\bm{\ell}^{(u)}(t)$ is input to the SOSD decoder to output the extrinsic LLRs $\bm{\delta}^{(u)}(t)$. Finally, $\bm{\epsilon}^{(u)}(t) \leftarrow \bm{\delta}^{(u)}(t)$ is fedback to PIC for the next iteration. In each iteration, the SOSD decoding is performed in parallel for all users. The process of JD is terminated when the maximum iteration $t_{\max}$ is achieved, or when the decoding results converge.
        
        \subsubsection{Parallel interference cancellation}
         PIC \cite{liu2019capacity,kosasih2021bayesian} is applied to perform MUD in the proposed OSD-based JD. Taking the procedure for user $u$ as an example, the priori information $\bm{\epsilon}^{(u)}(t-1)$ is fed to PIC at the beginning of iteration $t$, $t>1$. We initialize $\bm{\epsilon}^{(u)}(0) = 0$ for the first iteration. For the $i$-th transmitted symbol of user $u$, $x_{i}^{(u)}$, PIC estimates its mean and variance, respectively, as \cite{kosasih2021bayesian}
         \begin{equation}\small     \label{equ::Receiver::Primean}
             \mu_{i}^{(u)}(t) = \tanh\left(\frac{\epsilon_{i}^{(u)}(t-1)}{2}\right) \ \ \text{and}  \ \  \upsilon_{i}^{(u)}(t) =  1 - \left(\mu_{i}^{(u)}(t)\right)^2.
         \end{equation}
         Next, PIC performs the interference cancellation and estimates the LLR of symbol ${x}_i^{(u)}$ as \cite{wang2019near,kosasih2021bayesian}
        \begin{equation}\small  \label{equ::Receiver::PICLLR}
             \ell_{i}^{(u)}(t)  = 2\frac{\frac{1}{h^{(u)}}\left(r_i - \sum_{j\neq u}h^{(j)}\mu_{i}^{(j)}(t)\right)}{\sum_{j\neq u}\left(\frac{h^{(j)}}{h^{(u)}}\right)^2 \upsilon_{i}^{(j)}(t) + \left(\frac{1}{h^{(u)}}\right)^2\sigma^2}.
        \end{equation}
        A decision statistics combiner (DSC) can be implemented with PIC to smooth the convergence behavior\cite{marinkovic2001space,kosasih2021bayesian}. DSC generally combines $\ell_{i}^{(u)}(t)$ of adjacent iterations with a parameter $\beta$ ($0\leq\beta\leq1$) according to $\tanh(\ell_{i}^{(u)}(t)) \!\leftarrow \! \beta \tanh(\ell_{i}^{(u)}(t) ) +(1-\beta) \tanh(\ell_{i}^{(u)}(t-1))$.
        
        \subsubsection{Decoding Switch} \label{Sec::DS}
       
        DS was proposed in our previous work \cite{yue2021noma} to determine the engagement of SOSD in JD iterations. Specifically, when PIC fails to cancel the MAI properly and produces low-quality LLRs, the DS is set to ``off''. When the quality of outputs of PIC improves after a few iterations of the DS-off phase, the DS is turned ``on'' and decoding begins. In \cite{yue2021noma}, a ``simple DS'' strategy was introduced: the receiver performs $n_u$ iterations without decoding, and then turns on DS for subsequent iterations. In other words, the DS-off phase has $n_u$ iterations. It has been shown that this simple strategy can significantly reduce the number of JD iterations required to achieve convergence.
       
       \subsubsection{SOSD Decoders}
       
       If the original SOSD is employed, the output of decoder, $\bm{\delta}^{(u)}(t)$, is obtained according to (\ref{equ::LowSISO::extLLR}). Alternatively, the original SOSD can be replaced by a low-complexity SOSD (LC-SOSD) devised in \cite{yue2021noma}. In a nutshell, LC-SOSD determines $\bm{\delta}^{(u)}(t)$ in a reduced space of codewords compared to the original SOSD, which applies a  stopping condition that terminates the decoding process early. It was validated that LC-SOSD has a very similar mutual information transform property to the original SOSD with a significantly reduced complexity.  We refer interested readers to  \cite{yue2021noma} for details. Moreover, \cite{yue2021noma} showed that the input and output of SOSD can be adaptively combined according to a predefined decoding quality $\gamma$ to accelerate the convergence, known as decoding combiner (DC). Specifically, at the output of the decoder, we have 
       $\tanh(\delta_i^{(u)}(t)) \!\leftarrow \! \gamma \tanh(\delta_{i}^{(u)}(t)) +(1-\gamma) \tanh(\ell_{i}^{(u)}(t))$.
       
       \subsubsection{Outputs of the Joint Decoding}
       At iteration $t$, the decoding result of user $u$, denoted by $\hat{\mathbf{c}}^{(u)}(t)$, is given by the hard decision over the posterior LLR, i.e., $\hat{c}_i^{(u)}(t) = 0$ for $\delta_{i}^{(u)}(t) + \ell_{i}^{(u)}(t) \geq 0$, and $\hat{c}_i^{(u)}(t) = 1$, for $\delta_{i}^{(u)}(t) + \ell_{i}^{(u)}(t) < 0$. The decoding iteration is stopped when $\hat{\mathbf{c}}^{(u)}(t) = \hat{\mathbf{c}}^{(u)}(t-1)$ holds for all users $u$, $1\leq u \leq n_u$. 
       
       Aiming at developing a general analytical framework for the OSD-based JD, we consider the JD without DSC and DC. Nonetheless, with known parameters $\beta$ and $\gamma$, one can easily characterize the effect of DSC and DC based on the framework introduced in this paper.
      
     \vspace{-0.5em}   
	\section{Density-Transform Feature of SOSD} \label{Sec::DTF-OSD}
	\vspace{-0.5em}
	
    In this section, we characterize the LLR density-transform feature of SOSD. That is, we theoretically determine the density (i.e., $\mathrm{pdf}$) of $\delta_{i}$ in (\ref{equ::LowSISO::extLLR}), when the density of input LLR $\ell_i$, the decoding order $m$, and the codebook $\mathcal{C}(n,k)$ are known in priori. The density-transform feature of SOSD will be used to develop the DE framework for the iterative JD in Section \ref{Sec::DE}.
    
    We assume that $\mathcal{C}(n,k)$ is a random code in the analysis of this section for simplicity. Specifically, $\mathbf{G}$ is a randomly constructed binary matrix and the weight spectrum of $\mathcal{C}(n,k)$ follows a binomial distribution $\mathcal{B}(n,\frac{1}{2})$. A number of widely used high-density codes, e.g., BCH codes and Primitive Rateless (PR) codes \cite{Mahyar2021primitive}, etc., have near binomial weight spectrum \cite{macwilliams1977codingtheory}. The random code assumption can significantly simplify the analysis. Henceforth, we regard LLRs (e.g., $\ell_i$ and $\delta_i$) as random variables without introducing extra notations. To start, we have the following assumption.
    \begin{assumption} \label{assum::Indep_Decoder_Input}
        The inputs of SOSD, denoted by $\bm{\ell} = [\ell]_1^n$, within the same block are identically and independently distributed (i.i.d.) variables.
    \end{assumption}
    Assumption \ref{assum::Indep_Decoder_Input} naturally holds for the single-user transmissions over a memoryless channel. In the case of the iterative JD, $\bm{\ell}$ is given by the output of PIC, which can also be regarded as i.i.d. variables because of the deep and random iterleavers applied between PIC and SOSD (also refer to Assumption \ref{Assum::PICbitsindependence}). Similar assumptions were widely used in the analyses of iterative MIMO receivers \cite{wang2019near,liu2019capacity}, turbo decoders \cite{richardson2001capacity,divsalar2001iterative} and the concatenated decoding \cite{ten2001convergence}. Under Assumption \ref{assum::Indep_Decoder_Input}, we introduce a variant of SOSD, namely Dual-OSD, that enables further analysis of the density-transform feature.
    
    \vspace{-0.5em}
    \subsection{Dual-OSD}
    \vspace{-0.5em}
    Based on the input sequence of LLR $\bm{\ell} = [\ell]_1^n$, Dual-OSD first preprocesses the generator matrix $\mathbf{G}$ as follows. Let $\mathbf{g}_i$ denote the $i$-th column of $\mathbf{G}$, $1\leq i\leq n$. For a given $i$, we first swap the column $\mathbf{g}_i$ and the first column of $\mathbf{G}$, and the rest of columns of $\mathbf{G}$ are permuted in the descending order of reliabilities $\bm{\alpha}' = [\alpha_1,\ldots,\alpha_{i-1},\alpha_{i+1},\ldots,\alpha_n]$. We denote the permuted generator matrix as $\mathbf{G}_i$. Then, similar to OSD, GE is performed over $\mathbf{G}_i$ to obtain the systematic generator matrix $\mathbf{G}'$. We denote the column index permutation between $\mathbf{G}'$ and $\mathbf{G}_i$ as $\pi'$. Next, we punctured the first column, $\mathbf{g}_1'$, of $\mathbf{G}'$ and obtain a matrix $\mathbf{G}^{*}$. Finally, we denote the first row of $\mathbf{G}^{*}$ as a length-$(n-1)$ vector $\mathbf{z}$, and we define a matrix $\bar{\mathbf{G}}$ obtained by puncturing $\mathbf{z}$ from $\mathbf{G}^{*}$.  We demonstrate the relationship between $\bar{\mathbf{G}}$ and $\mathbf{G}$ in Fig. \ref{Fig::Gtrans}. We note that although $\mathbf{G}'$, $\mathbf{G}^{*}$, and $\bar{\mathbf{G}}$ are constructed relevantly to $i$, we omit their subscription $i$.
    
     \begin{figure} 
		\begin{center}
			\includegraphics[scale=0.7] {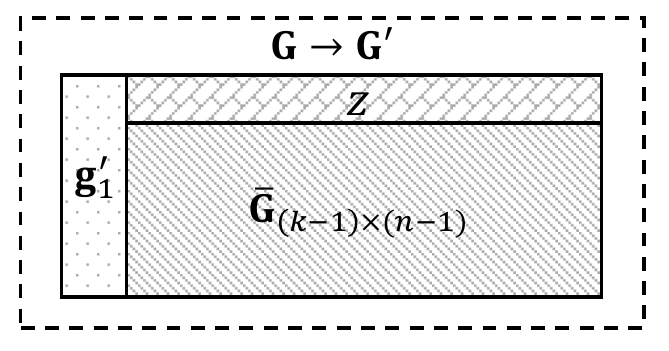}
			\caption{Obtain $\bar{\mathbf{G}}$ of Dual-OSD from $\mathbf{G}$.}
			\label{Fig::Gtrans}
		\end{center}
	\end{figure}
	
	Since $\mathbf{G}'$ is a systematic matrix, $\bar{\mathbf{G}}$ is also a systematic matrix and defines a punctured code $\mathcal{C}(n-1,k-1)$. Let $\bar{\bm{\ell}}$ be the sequence of LLRs corresponding to the columns of $\bar{\mathbf{G}}$, which is obtained by ordering $\bm{\ell}' = [\ell_1,\ldots,\ell_{i-1},\ell_{i+1},\ldots,\ell_n]$ in the descending order of $\bm{\alpha}'$. Also, let $\bar{\mathbf{y}}$ be the hard-decision vector obtained from $\bar{\bm{\ell}}$. Using the terminology of OSD, $[\bar{y}]_1^{k-1}$ is the MRB with respect to Dual-OSD, denoted by $\bar{\mathbf{y}}_{\mathrm{B}}$. 
	
	Next, we consider two phases of reprocessing over $\bar{\mathbf{G}}$, namely the phase-0 and phase-1 reprocessings. The phase-0 reprocessing re-encodes TEPs from a given TEP set $\bar{\mathcal{E}}^{(0)}$ with cardinality $|\bar{\mathcal{E}}^{(0)}| = N_0$, while the phase-1 reprocessing re-encodes TEPs from set $\bar{\mathcal{E}}^{(1)}$ with $|\bar{\mathcal{E}}^{(1)}| = N_1$. In both reprocessings, each length-$(k-1)$ TEP $\mathbf{e}$ is re-encoded to obtain a codeword as $\bar{\mathbf{c}}_{\mathbf{e}} = \left(\bar{\mathbf{y}}_{\mathrm{B}}\oplus \mathbf{e}\right)\bar{\mathbf{G}}$. Codewords generated by the phase-0 and phase-1 reprocessings are respectively included in sets $\bar{\mathcal{C}}_i^{(0)}$ and $\bar{\mathcal{C}}_i^{(1)}$. We assume that TEPs are sequentially processed in a manner similar to the original OSD (i.e., increasing the Hamming weight from 0) in both reprocessings; thus, the difference of these two reprocessing is that they re-encodes different numbers of TEPs. Particularly, we have $\bar{\mathcal{C}}_i^{(0)} = \bar{\mathcal{C}}_i^{(1)}$ when $N_{0} = N_{1}$. 
	
	Despite that $\bar{\mathcal{C}}_i^{(0)}$ and $\bar{\mathcal{C}}_i^{(1)}$ have duplicate elements (unnecessary for practical implementation), Dual-OSD can be used to conveniently find the density-transform feature of SOSD. Later, we will show that Dual-OSD with a proper selection of $N_1$ and $N_0$ is equivalent to SOSD (see Proposition \ref{Pro::Dual-SISO::MRB} and \ref{Pro::Dual-SISO::LRB}). 
	
	\vspace{-0.5em}
	\subsection{Outputs of Dual-OSD}
	\vspace{-0.5em}
	
	Let a random variable $V_{0}$ represent the minimum WHD from $\bar{\mathbf{y}}$ to codeword estimates in $\bar{\mathcal{C}}_i^{(0)}$, and a random variable $V_{1}$ represent the minimum WHD from $\bar{\mathbf{y}}\oplus \mathbf{z}$ to codeword estimates in $\bar{\mathcal{C}}_i^{(0)}$. Then, we have the following Lemma.
	
	\begin{lemma} \label{Lem::Dual-OSD}
        Let a length-$n$ sequence of i.i.d. LLRs be decoded by a Dual-OSD with $N_{0}$ and $N_{1}$. Then the $i$-th extrinsic LLR $\delta_i$ is given by 
        \begin{equation}\small\label{equ::DE::DEC::outLLR1}
            \delta_i = V_{0} - V_{1}.
        \end{equation}
    \end{lemma}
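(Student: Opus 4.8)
The plan is to collapse the soft-output formula (\ref{equ::LowSISO::extLLR}) into a difference of two minimum weighted Hamming distances (WHDs), and then to read those two minima off as $V_0$ and $V_1$ using the systematic/coset structure that the Dual-OSD preprocessing builds into $\bar{\mathbf G}$ and $\mathbf z$. The starting observation is that, for the i.i.d.\ LLRs of Assumption~\ref{assum::Indep_Decoder_Input}, the correlation metric $\sum_j \ell_j c_j$ that orders the re-encoded estimates is an affine image of the WHD in (\ref{equ::WHDdefine}): writing the intrinsic LLR through its hard decision as $\ell_j=(1-2y_j)\alpha_j$ and collecting the reliabilities of the positions with $y_j=1$ into a candidate-independent constant, one obtains $\sum_j \ell_j c_j = \mathrm{const}-d(\mathbf c,\mathbf y)$. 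Hence the estimate with $i$-th bit $0$ (resp.\ $1$) and lowest WHD, i.e.\ $\mathbf c(i\!:\!0)$ (resp.\ $\mathbf c(i\!:\!1)$) in (\ref{equ::LowSISO::extLLR}), is exactly the correlation-maximizing estimate in its class, so substituting both into (\ref{equ::LowSISO::extLLR}) turns the sum into $W_0-W_1$ (up to the fixed sign of the affine map), where $W_b$ denotes the minimum \emph{full} WHD among the Dual-OSD estimates whose $i$-th bit equals $b$.

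The second step makes the two classes explicit. Because the preprocessing promotes the $i$-th column to the first (systematic) coordinate and then punctures it, every estimate with $i$-th bit $0$ is $[\,0,\bar{\mathbf c}\,]$ with $\bar{\mathbf c}\in\bar{\mathcal C}_i^{(0)}$, while every estimate with $i$-th bit $1$ is $[\,1,\mathbf z\oplus\bar{\mathbf c}\,]$, the shift $\mathbf z$ being precisely the first row of $\mathbf G^{*}$ removed to form $\bar{\mathbf G}$. Splitting the full WHD of such an estimate as the fixed position-$i$ penalty $\alpha_i[\,b\neq y_i]$ plus the WHD over the punctured coordinates, and using the translation identity $d(\mathbf z\oplus\bar{\mathbf c},\bar{\mathbf y})=d(\bar{\mathbf c},\bar{\mathbf y}\oplus\mathbf z)$, the minimum punctured WHD in the $i$-th-bit-$0$ class is $V_0=\min_{\bar{\mathbf c}\in\bar{\mathcal C}_i^{(0)}}d(\bar{\mathbf c},\bar{\mathbf y})$ and in the $i$-th-bit-$1$ class is $V_1=\min_{\bar{\mathbf c}\in\bar{\mathcal C}_i^{(0)}}d(\bar{\mathbf c},\bar{\mathbf y}\oplus\mathbf z)$, matching the definitions preceding the Lemma. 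Consequently $W_0=\alpha_i[\,0\neq y_i]+V_0$ and $W_1=\alpha_i[\,1\neq y_i]+V_1$.

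Finally I would combine the two steps and track the position-$i$ penalty. Since $\alpha_i\big([\,0\neq y_i]-[\,1\neq y_i]\big)=(2y_i-1)\alpha_i=-\ell_i$, we have $W_0-W_1=-\ell_i+(V_0-V_1)$; the intrinsic term in (\ref{equ::LowSISO::extLLR}), together with the $j=i$ contribution of the sum (whose value is $\pm\ell_i$), cancels this position-$i$ penalty and leaves $\delta_i=V_0-V_1$. I expect the main obstacle to be the sign and index bookkeeping in this last cancellation: one must keep the LLR sign convention, the hard-decision rule defining $y_i$, and the correlation-to-WHD affine map mutually consistent so that the position-$i$ penalty and the intrinsic term annihilate rather than reinforce, and one must verify that the two permutations together with the puncturing in the construction of $\bar{\mathbf G}$ genuinely realize the claimed subcode/coset partition of the re-encoded estimates.
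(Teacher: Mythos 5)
Your proposal takes essentially the same route as the paper's own proof: both collapse the max-log extrinsic LLR into a difference of minimum punctured WHDs (you via the correlation-to-WHD affine map, the paper via products of bit likelihoods $\exp(|\ell_j|)$ with the intrinsic term cancelled against the $j=i$ contribution), and both then identify the two minima with $V_0$ and $V_1$ through the same structural facts---codewords with $i$-th bit $0$ are $[0 \ \ \bar{\mathbf{c}}]$ with $\bar{\mathbf{c}}$ in the punctured code, codewords with $i$-th bit $1$ are $[1 \ \ \mathbf{z}\oplus\bar{\mathbf{c}}]$, and the translation identity $d(\mathbf{z}\oplus\bar{\mathbf{c}},\bar{\mathbf{y}})=d(\bar{\mathbf{c}},\bar{\mathbf{y}}\oplus\mathbf{z})$. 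The sign hazard you flag is real but is not a gap on your side: the paper's own conventions are internally inconsistent (its hard-decision rule conflicts with its LLR definition, and its proof actually concludes $\delta_i=V_1-V_0$ while the lemma states $\delta_i=V_0-V_1$), so your bookkeeping is no worse than the source's.
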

	\begin{IEEEproof}
	  We can rewrite the output extrinsic LLR $\delta_i$ of the $i$-th codeword bit, as
        \begin{equation}\small  \label{equ::APP::outLLR2}
        \begin{split}
            \delta_i & = \log\left(\frac{ \mathrm{Pr}(\bm{\ell}|\mathbf{c}(i:0) = \mathbf{c})}{ \mathrm{Pr}(\bm{\ell} | \mathbf{c}(i:1) = \mathbf{c})} \right)-\ell_i = \log\left(\frac{\prod_{c_j(i:0) = y_j}\exp(|\ell_j|)}{\prod_{c_j(i:1) = y_j}\exp(|\ell_j|)}\right)-\ell_i     \\
            &= \sum_{c_j(i:1) \neq y_j}|\ell_j|-\sum_{c_j(i:0) \neq y_j}|\ell_j| - \ell_i = \sum_{\substack{j\neq i \\ c_j(i:1) \neq y_j}}|\ell_j|-\sum_{\substack{j\neq i \\ c_j(i:0) \neq y_j}}|\ell_j|.
        \end{split}
        \end{equation} 
        
        Let $\mathcal{C}_i^{(0)}$ and $\mathcal{C}_i^{(1)}$ denote the sets of codewords $\mathcal{C}(n,k)$ whose $i$-th bit is 0 and 1, respectively. Matrix $\mathbf{G}' $ represents a codebook permuted from $\mathcal{C}(n,k)$ by $\pi'$. Thus, we can conclude that the $2^{k-1}$ codewords in $\bar{\mathcal{C}}_i^{(0)}$ are obtained by puncturing the $i$-th bit of the $2^{k-1}$ codewords in $\mathcal{C}_i^{(0)}$ and performing the permutation $\pi'$. In this way, $V_{0}$ is exactly a random variable representing $\sum_{\substack{j\neq i \\ c_j(i:0) \neq y_j}}|\ell_j|$. 
        
        Similarly, we can find that $2^{k-1}$ codewords in $\bar{\mathcal{C}}_i^{(1)}$ are are obtained by puncturing the $i$-th bit of the $2^{k-1}$ codewords in $\mathcal{C}_i^{(1)}$, performing the permutation $\pi'$, and XORing the vector $\mathbf{z}$. Specifically, let $\bar{\mathbf{c}}(i:1)$ denote the codeword from $\bar{\mathcal{C}}_i^{(1)}$ which have the minimum WHD to $\mathbf{y}'\oplus \mathbf{z}$. Then, $[1 \ \  \mathbf{z}\oplus \bar{\mathbf{c}}(i:1)]$ is exactly the codeword $\mathbf{c}(i:1)$ permuted by $\pi'$, i.e., $[1 \ \  \mathbf{z}\oplus \bar{\mathbf{c}}(i:1)] = \pi'(\mathbf{c}(i:1))$. Consequently, $V_{1}$ is exactly a random variable representing $\sum_{\substack{j\neq i \\ c_j(i:1) \neq y_j}}|\ell_j|$. Therefore, (\ref{equ::DE::DEC::outLLR1}) is equivalent to (\ref{equ::APP::outLLR2}).
	\end{IEEEproof}
	
	According to Lemma \ref{Lem::Dual-OSD}, when $N_{0} = N_{1} = 2^{k-1}$, Dual-OSD can output the ML extrinsic LLR obtained by exhausting the whole codebook of $\mathcal{C}(n,k)$. This is because $[0 \ \ \bar{\mathcal{C}}_i^{(0)}]$ and $[1 \ \ \mathbf{z}\oplus\bar{\mathcal{C}}_i^{(0)}]$ cover all the $2^k$ codewords in $\mathcal{C}(n,k)$. Furthermore, we have the following Proposition.
	\begin{proposition}  \label{Pro::Dual-SISO::identicaloutput}
	     For an arbitrary $i$ ($1\leq i \leq n$), the output of Dual-OSD, $\delta_i$, follows the same distribution.
	\end{proposition}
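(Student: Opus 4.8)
The plan is to prove the claim by a coordinate-relabeling (symmetry) argument: the distribution of $\delta_i$ inherits a permutation symmetry from the two sources of randomness in the model, namely the exchangeability of the input LLRs and the column-symmetry of the random code. Concretely, I would write $\delta_i = F_i(\bm{\ell},\mathbf{G})$, where $F_i$ denotes the deterministic map implemented by Dual-OSD when coordinate $i$ is selected as the special (punctured) column. The goal is then to show $F_i(\bm{\ell},\mathbf{G}) \stackrel{d}{=} F_{i'}(\bm{\ell},\mathbf{G})$ for every pair $i,i'$.

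First I would establish an equivariance property of the map itself. Fix two indices $i,i'$ and let $\tau=(i\;i')$ be the transposition swapping them. Let $\bm{\ell}^{\tau}$ and $\mathbf{G}^{\tau}$ denote the LLR vector and generator matrix with coordinates (resp.\ columns) relabelled by $\tau$. Running Dual-OSD for index $i$ on $(\bm{\ell}^{\tau},\mathbf{G}^{\tau})$ selects column $\mathbf{g}_{i'}$ as the special column, while the remaining coordinates carry exactly the multiset of (column, LLR) pairs $\{(\mathbf{g}_m,\ell_m): m\neq i'\}$, which is precisely the multiset processed by Dual-OSD for index $i'$ on the original $(\bm{\ell},\mathbf{G})$. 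Because every subsequent step of Dual-OSD (sorting the non-special coordinates by reliability, the Gaussian elimination producing $\bar{\mathbf{G}}$, the weight-ordered TEP sets $\bar{\mathcal{E}}^{(0)},\bar{\mathcal{E}}^{(1)}$, and the two minimum-WHD computations yielding $V_0,V_1$ in Lemma \ref{Lem::Dual-OSD}) depends only on the special column together with this sorted multiset, the two runs produce identical output. This gives the key identity $F_i(\bm{\ell}^{\tau},\mathbf{G}^{\tau}) = F_{i'}(\bm{\ell},\mathbf{G})$.

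Next I would invoke the distributional symmetries. By Assumption \ref{assum::Indep_Decoder_Input} the entries of $\bm{\ell}$ are i.i.d.\ and hence exchangeable, so $\bm{\ell}^{\tau}\stackrel{d}{=}\bm{\ell}$; and because the random-code assumption makes the law of $\mathbf{G}$ invariant under column permutations (consistent with the coordinate-symmetric binomial weight spectrum), $\mathbf{G}^{\tau}\stackrel{d}{=}\mathbf{G}$, with $\bm{\ell}$ and $\mathbf{G}$ independent. Therefore $(\bm{\ell}^{\tau},\mathbf{G}^{\tau})\stackrel{d}{=}(\bm{\ell},\mathbf{G})$, and applying the deterministic map $F_i$ to both sides together with the identity above yields $\delta_{i'} = F_{i'}(\bm{\ell},\mathbf{G}) = F_i(\bm{\ell}^{\tau},\mathbf{G}^{\tau}) \stackrel{d}{=} F_i(\bm{\ell},\mathbf{G}) = \delta_i$. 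Since $i,i'$ are arbitrary, all $\delta_i$ share a common distribution.

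The main obstacle I anticipate is the rigorous verification of the equivariance identity $F_i(\bm{\ell}^{\tau},\mathbf{G}^{\tau}) = F_{i'}(\bm{\ell},\mathbf{G})$: one must check that each stage of Dual-OSD truly factors through the special column plus the reliability-sorted multiset of the remaining coordinates, so that relabelling the non-special coordinates has no effect on the output. The reliability-based sorting is what absorbs the relabelling of the ordinary coordinates, while the puncturing construction of $\bar{\mathbf{G}}$ (hence of $\mathbf{z}$) and the weight-based, index-independent definition of the TEP sets ensure the reprocessing stage is likewise unaffected; care is mainly needed to confirm that ties in reliability and the Gaussian-elimination pivot choices can be resolved by a fixed, relabelling-consistent convention, so that $F_i$ is genuinely well defined and equivariant.
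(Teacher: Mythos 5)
Your proof is correct and takes essentially the same route as the paper's: the paper's one-sentence argument is exactly the column-swap symmetry you formalize, asserting that exchanging $\mathbf{g}_1$ with an arbitrary column of $\mathbf{G}$ leaves the distributions of $V_0$ and $V_1$ (hence of $\delta_i = V_0 - V_1$) unchanged. Your version is simply more rigorous, grounding the invariance in the exchangeability of the i.i.d.\ LLRs and the column-permutation invariance of the random-code law together with an explicit equivariance of the Dual-OSD map, whereas the paper compresses all of this into an appeal to the code being linear.
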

	\begin{proof}
        Since $\mathbf{G}$ is a linear block code, column $\mathbf{g}_1$ can be swapped with an arbitrary column of $\mathbf{G}$, corresponding to an arbitrary entry of the input LLR $[\ell]_1^n$, which does not change the distributions of $V_{0}$ and $V_{1}$.
	\end{proof}	
	From Proposition \ref{Pro::Dual-SISO::identicaloutput}, we can see that although Dual-OSD only outputs the $i$-th  extrinsic LLR for a given $i$, it is sufficient for us to characterize the overall density-transform feature of the decoder.
	
	In practice, SOSD employs a predetermined decoding order to limit the number of TEPs (generated codeword estimates). In other words, SOSD with a decoding order less than $k$ searches the decoding result within a reduced codebook of $\mathcal{C}(n,k)$. Accordingly, we can bound $N_{0}$ and $N_{1}$ of Dual-OSD to analyze the density-transform feature of SOSD, which is summarized in the following Propositions.
	\begin{proposition} \label{Pro::Dual-SISO::MRB}
	     Consider the $i$-th input LLR $\ell_i\in[\ell]_1^n$ to an order-$m$ SOSD ($m\geq1$). If $\ell_i$ is the LLR of an MRB position, its corresponding extrinsic LLR,  $\delta_i$, is equivalently obtained by a Dual-OSD with $N_{0} = \sum_{j=0}^{m}\binom{k-1}{j} $ and $N_{1} = \sum_{j=1}^{m}\binom{k-1}{j-1}$, i.e., with an order-$m$ phase-0 reprocessing and an order-$(m-1)$ phase-1 reprocessing.
	\end{proposition}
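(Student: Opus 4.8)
The plan is to reduce the claim to a codeword-list (equivalently, test-error-pattern) correspondence and then invoke Lemma~\ref{Lem::Dual-OSD}. Recall that an order-$m$ SOSD draws its codeword estimates from the TEPs of Hamming weight at most $m$ over the length-$k$ MRB, and that $\mathbf{c}(i\!:\!0)$ and $\mathbf{c}(i\!:\!1)$ are the minimum-WHD members of this reduced list with $i$-th bit $0$ and $1$. Running the same algebra as in the proof of Lemma~\ref{Lem::Dual-OSD} on the reduced list rewrites the SOSD extrinsic LLR in (\ref{equ::LowSISO::extLLR}) as a difference of two minimum WHDs, one over the sublist with $c_i=0$ and one over the sublist with $c_i=1$. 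Hence it suffices to show that, after puncturing position $i$, these two sublists coincide with the Dual-OSD phase-$0$ and phase-$1$ lists $\bar{\mathcal{C}}_i^{(0)}$ and $\bar{\mathcal{C}}_i^{(1)}$ generated by $N_0=\sum_{j=0}^{m}\binom{k-1}{j}$ and $N_1=\sum_{j=1}^{m}\binom{k-1}{j-1}$ TEPs; the two minima then become $V_0$ and $V_1$, and $\delta_i=V_0-V_1$ follows.

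First I would exploit the systematic structure at the basis position. Because $\ell_i$ indexes an MRB position, its codeword bit is a systematic coordinate, so $c_i=y_i\oplus e$, where $e$ is the TEP bit at that basis position. Fixing $c_i$ therefore fixes $e$, and the codeword is parametrised by the residual TEP $\mathbf{e}'$ over the remaining $k-1$ basis positions. I would then argue that these $k-1$ positions are exactly the MRB of the punctured code $\bar{\mathbf{G}}$: since $i$ lies among the top-$k$ reliabilities, deleting it and keeping the $k-1$ most reliable of the remaining coordinates returns precisely the other original MRB positions, which relies on the standard $\pi_2$-negligible assumption already adopted for OSD. Consequently, re-encoding $\mathbf{e}'$ over $\bar{\mathbf{G}}$ reproduces exactly the punctured SOSD estimates, and the puncturing bijection of Lemma~\ref{Lem::Dual-OSD} (from $\mathcal{C}_i^{(0)}$ onto the code of $\bar{\mathbf{G}}$, and from $\mathcal{C}_i^{(1)}$ onto its $\mathbf{z}$-coset), together with the WHD identities established there, carries the two minima over verbatim.

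The counting then falls out of the weight bookkeeping. When $c_i$ matches the hard decision $y_i$, no flip is spent at position $i$ ($e=0$), leaving the full budget $m$ for $\mathbf{e}'$ and yielding $\sum_{j=0}^{m}\binom{k-1}{j}=N_0$ residual TEPs; when $c_i$ takes the opposite value, one unit of weight is spent at $i$ ($e=1$), leaving budget $m-1$ and yielding $\sum_{j=0}^{m-1}\binom{k-1}{j}=\sum_{j=1}^{m}\binom{k-1}{j-1}=N_1$ residual TEPs. Identifying the larger order-$m$ list with phase-$0$ and the smaller order-$(m-1)$ list with phase-$1$ then reproduces the stated $N_0$ and $N_1$, and Lemma~\ref{Lem::Dual-OSD} completes the argument.

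The main obstacle is exactly this last identification: the split of the weight budget between $c_i=0$ and $c_i=1$ is not intrinsic but hinges on the hard decision $y_i$, so phase-$0$ inherits order $m$ only when $y_i=0$, whereas for $y_i=1$ the two orders swap. I would close this gap by reducing to the case $y_i=0$ without loss of generality, appealing to the sign/complement symmetry of the random-code ensemble with binomial weight spectrum together with the sign-symmetric, i.i.d.\ inputs of Assumption~\ref{assum::Indep_Decoder_Input}: toggling the sign of $\ell_i$ flips $y_i$ and simultaneously interchanges the roles of $V_0$ and $V_1$, so that the order-$m$/order-$(m-1)$ assignment of $(N_0,N_1)$ is well defined at the level of the induced law of $\delta_i$. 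Checking that this symmetry holds exactly for the ensemble at hand is the step I expect to demand the most care.
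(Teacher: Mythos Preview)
Your argument is essentially the paper's: the paper also counts the order-$m$ SOSD list by splitting TEPs according to whether the basis bit at position $i$ is flipped, obtains $N_0=\sum_{j=0}^{m}\binom{k-1}{j}$ for ``$i$-th bit same as $\widetilde{y}_i$'' and $N_1=\sum_{j=1}^{m}\binom{k-1}{j-1}$ for ``$i$-th bit opposite to $\widetilde{y}_i$'' (checking $N_0+N_1=\sum_j\binom{k}{j}$ via Pascal's identity), then observes that because $\ell_i$ is an MRB LLR the punctured matrix $\bar{\mathbf{G}}$ and the SOSD matrix $\widetilde{\mathbf{G}}$ share the same remaining $k-1$ basis positions, so the two decoders produce the same $\mathbf{c}(i\!:\!0)$ and $\mathbf{c}(i\!:\!1)$.

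The one place you go beyond the paper is your final paragraph, and you are right to flag it. The paper's counting ties $N_0$ to ``bit equals $\widetilde{y}_i$'' and $N_1$ to ``bit opposite to $\widetilde{y}_i$'', whereas Lemma~\ref{Lem::Dual-OSD} ties $V_0$ to $c_i=0$ and $V_1$ to $c_i=1$; these agree only when $\widetilde{y}_i=0$. The paper's proof simply does not comment on this, implicitly leaning on the all-zero-codeword convention adopted in Section~\ref{Sec::DualOSD::results} (and the symmetric-channel argument there). Your proposed resolution via the sign/complement symmetry under Assumption~\ref{assum::Indep_Decoder_Input} is exactly the right patch, and it is not delicate: flipping the sign of $\ell_i$ swaps the labels $0\leftrightarrow 1$ on $V_0,V_1$ while leaving the pair of minima and hence the distribution of $\delta_i$ unchanged, so fixing $\widetilde{y}_i=0$ loses no generality. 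In short, your proof is the paper's proof done with a bit more care on a point the paper elides.
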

	\begin{proof}
        Assume that the hard-decision bit $y_i$ corresponding to $\ell_i$ is an MRB bit. Then, an order-$m$ SOSD generates $\sum_{j=0}^{m}\binom{k}{j}$ codeword estimates. Among them, there are $N_{1} = \sum_{j=1}^{m}\binom{k-1}{j-1}$ estimates in total whose $i$-th bit is opposite to $\widetilde{y}_i$, while there are $N_{0} = \sum_{j=0}^{m}\binom{k-1}{j}$ estimates in total whose $i$-th bit is the same as $y_i$. It is easy to obtain $\sum_{j=0}^{m}\binom{k}{j} = N_{0} + N_{1}$ with the recursive relationship $\binom{k}{j} = \binom{k-1}{j-1} + \binom{k-1}{j}$. Furthermore, it can be seen that $\bar{\mathbf{G}}$ and $\widetilde{\mathbf{G}}$ share the same MRB because $\ell_i$ is an MRB LLR. Therefore, the SOSD and Dual-OSD finds the same $\mathbf{c}(i:0)$ and $\mathbf{c}(i:1)$ in (\ref{equ::LowSISO::extLLR}).
	\end{proof}	
	\begin{proposition} \label{Pro::Dual-SISO::LRB}
	     If $\ell_i$ is not the LLR of an MRB position, its corresponding extrinsic LLR,  $\delta_i$, approximately follows the distribution of the output of a Dual-OSD with $N_{0} = N_{1} = \lceil\frac{1}{2}\sum_{j=0}^{m}\binom{k}{j} \rceil$.
	\end{proposition}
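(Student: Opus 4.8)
The plan is to mirror the exact counting argument used for Proposition~\ref{Pro::Dual-SISO::MRB}, but to replace the exact combinatorial split of the codeword estimates---available there only because the flipped bit sat in an MRB position directly controlled by the TEP---with an \emph{approximate} split furnished by the random-code assumption. Concretely, I would first show that an order-$m$ SOSD still generates $N_{\mathrm{total}} = \sum_{j=0}^{m}\binom{k}{j}$ codeword estimates, and then argue that for a non-MRB (parity) position $i$ these estimates are, to a good approximation, evenly divided between those whose $i$-th bit is $0$ and those whose $i$-th bit is $1$. Matching these two half-populations to the phase-0 and phase-1 reprocessings of Dual-OSD then forces $N_{0} = N_{1} = \lceil\tfrac12\sum_{j=0}^{m}\binom{k}{j}\rceil$, after which Lemma~\ref{Lem::Dual-OSD} delivers $\delta_i = V_0 - V_1$ with the claimed law.

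First I would count: as in Proposition~\ref{Pro::Dual-SISO::MRB} the number of re-encoded estimates is $N_{\mathrm{total}} = \sum_{j=0}^{m}\binom{k}{j}$, independent of which position we inspect. Next, since $i$ is a parity position of the systematic matrix $\widetilde{\mathbf G} = [\mathbf I_k\ \widetilde{\mathbf P}]$, the $i$-th bit of each re-encoded estimate $\widetilde{\mathbf c}_{\mathbf e} = (\widetilde{\mathbf y}_{\mathrm B}\oplus\mathbf e)\widetilde{\mathbf G}$ equals the parity $\bigoplus_{\ell}(\widetilde{\mathbf y}_{\mathrm B}\oplus\mathbf e)_\ell\,\widetilde p_{\ell}$ taken over the corresponding column $\widetilde{\mathbf p}$ of $\widetilde{\mathbf P}$. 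Under the random-code assumption this column has i.i.d.\ $\mathcal B(1,\tfrac12)$ entries, so the $i$-th bit of each estimate is an (approximately) fair Bernoulli variable; hence the estimate set splits into an $i$-th-bit-$0$ subset $\mathcal C_i^{(0)}$ and an $i$-th-bit-$1$ subset $\mathcal C_i^{(1)}$ of sizes $\approx\tfrac12 N_{\mathrm{total}}$ each. I would then invoke Lemma~\ref{Lem::Dual-OSD}: the SOSD minimisers $\mathbf c(i\!:\!0)$ and $\mathbf c(i\!:\!1)$ are the minimum-WHD members of $\mathcal C_i^{(0)}$ and $\mathcal C_i^{(1)}$, which after puncturing and permuting are exactly the codewords defining $V_0$ (to $\bar{\mathbf y}$) and $V_1$ (to $\bar{\mathbf y}\oplus\mathbf z$). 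Choosing $N_0 = N_1 = \lceil\tfrac12 N_{\mathrm{total}}\rceil$ makes the phase-0 and phase-1 reprocessings generate the same number of codewords as these two half-populations, so $V_0$, $V_1$, and therefore $\delta_i = V_0 - V_1$, reproduce the SOSD output.

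The main obstacle is justifying that matching only the \emph{count} of codewords, rather than their identities, reproduces the correct distribution. Here I would lean on Assumption~\ref{assum::Indep_Decoder_Input} together with the random-code assumption: because the LLRs are i.i.d.\ and the codewords of a random code have binomially distributed Hamming patterns relative to any fixed target, the WHDs $d(\bar{\mathbf c},\bar{\mathbf y})$ behave like i.i.d.\ draws, so the law of their minimum is governed essentially by the cardinality alone. Two finer points must be absorbed into the word ``approximately'': (i) for a finite code the parity column is not perfectly balanced, so $|\mathcal C_i^{(0)}|$ and $|\mathcal C_i^{(1)}|$ are only close to $\tfrac12 N_{\mathrm{total}}$, which the ceiling in $\lceil\tfrac12\sum_{j=0}^{m}\binom{k}{j}\rceil$ compensates for; and (ii) whereas SOSD evaluates $V_0$ and $V_1$ over \emph{disjoint} codeword subsets, Dual-OSD with $N_0=N_1$ reuses the single set $\bar{\mathcal C}_i^{(0)}=\bar{\mathcal C}_i^{(1)}$ against the two targets $\bar{\mathbf y}$ and $\bar{\mathbf y}\oplus\mathbf z$. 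I expect the latter to be the genuinely delicate step, and I would argue it is benign because $\mathbf z$ is, under the random-code model, a high-weight vector, so the WHDs of a common codeword set to the two well-separated targets are nearly independent---matching the near-independence of the disjoint minimisers used by SOSD and leaving the distribution of the difference $V_0 - V_1$ essentially unchanged.
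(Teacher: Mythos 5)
Your proposal is correct and follows essentially the same route as the paper's own proof: the paper likewise argues that a non-MRB bit is never flipped by TEPs, so under the random-code assumption the $\sum_{j=0}^{m}\binom{k}{j}$ estimates split on average evenly between those agreeing and disagreeing with $\widetilde{y}_i$, which yields $N_0 = N_1 = \lceil\frac{1}{2}\sum_{j=0}^{m}\binom{k}{j}\rceil$ with the rounding supplying the approximation. Your additional details --- the explicit parity-column Bernoulli mechanism and the caveat about reusing one codeword set against the two targets $\bar{\mathbf{y}}$ and $\bar{\mathbf{y}}\oplus\mathbf{z}$ --- are sound elaborations of points the paper handles tersely here and defers to its later density analysis (Assumption \ref{Assum::dualOSD::independence} and the discussion around Fig. \ref{Fig::nc}).
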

	\begin{proof}
        For arbitrary hard-decision bit $\widetilde{y}_i$, $k < i \leq n$ outside MRB, it will not be flipped by TEPs in the reprocessing of SOSD. Under the random code assumption, among the $\sum_{j=0}^{m}\binom{k}{j}$ estimates generated by the order-$m$ SOSD, on average $\frac{1}{2}\sum_{j=0}^{m}\binom{k}{j}$ of them have the $i$-th bit same as $\widetilde{y}_i$, or opposite to $\widetilde{y}_i$. Therefore, the extrinsic LLR $\delta_i$ can be approximated by a Dual-OSD with $N_{0} = N_{1} = \lceil\frac{1}{2}\sum_{j=0}^{m}\binom{k}{j} \rceil$, where rounding $\frac{1}{2}\sum_{j=0}^{m}\binom{k}{j}$ introduces the approximation.
	\end{proof}	
	
	\begin{remark}
	    In Proposition \ref{Pro::Dual-SISO::LRB}, $N_{0} = N_{1}$ holds only when $\mathcal{C}(n,k)$ has a binomial weight spectrum (under the random code assumption). When $\mathcal{C}(n,k)$ is an arbitrary code with a given generator matrix $\mathbf{G}$, instead we should have $N_{0} = p\sum_{j=0}^{m}\binom{k}{j}$ and $N_{1} = (1 - p) \sum_{j=0}^{m}\binom{k}{j}$. Here $p$ is the probability that an arbitrary parity bit of a codeword is zero, where the codeword is encoded by an information block with a given Hamming weight related to the TEP. In general, $p$ is determined by the structure of $\mathcal{C}(n,k)$, which was characterized in \cite[Lemma 4, Lemma 5]{yue2021revisit}. 
	\end{remark}
	
		Lemma \ref{Lem::Dual-OSD} and Proposition \ref{Pro::Dual-SISO::MRB} and \ref{Pro::Dual-SISO::LRB} indicate that we can analyze Dual-OSD to characterize the density transform of SOSD or any other variants of SOSD, by properly bounding the value of $N_{0}$ and $N_{1}$.
	
	\vspace{-0.5em}	
	\subsection{Density-Transform Feature}
	\vspace{-0.5em}
	
		In this subsection, we analyze the density-transform feature of Dual-OSD. For simplicity, we consider an order-$(m_0,m_1)$ Dual-OSD in this section, which has an order-$m_0$ phase-0 reprocessing and an order-$m_1$ phase-1 reprocessing, i.e., $N_0 = \sum_{i=0}^{m_0}\binom{k}{i}$ and $N_1 = \sum_{i=0}^{m_1}\binom{k}{i}$. In Section \ref{Sec::DualOSD::results}, we will show via simulations and numerical examples that the order-$(m_0,m_1)$ Dual-OSD and the order-$m$ SOSD have very similar output density, when $m_0=m_1=m$. Nevertheless, a more accurate density-transform feature of SOSD can be obtained by selecting $N_0$ and $N_1$ exactly according to Proposition \ref{Pro::Dual-SISO::MRB} and \ref{Pro::Dual-SISO::LRB} (see Remark \ref{rem::GeneralSOSD}).
		
	    To find the $\mathrm{pdf}$ of the extrinsic LLR $\delta_i$, $\mathrm{pdf}$s of $V_{0}$ and $V_{1}$ are required. According to Lemma \ref{Lem::Dual-OSD}, $V_{0}$ represents the minimum WHD from $\bar{\mathbf{y}}$ to codewords in $\bar{\mathcal{C}}_i^{(0)}$, which is equivalent to the minimum WHD within an order-$m_0$ original OSD decoding with the code $\mathcal{C}(n-1,k-1)$. Therefore, $V_{0}$ can be characterized by utilizing the results of the distributions of the minimum WHD introduced in our previous work \cite{yue2021revisit}. Let $\mathcal{V}_{0}(\ell)$ denote the $\mathrm{pdf}$ of $V_{0}$ resulted from an order-$m_0$ phase-$0$ reprocessing. Then, according to \cite[Theorem 4]{yue2021revisit}, we can approximately derive $\mathcal{V}_{0}(\ell)$ as
    	\begin{equation}\small \label{equ::dualOSD::pdfV0}
    	    \begin{split}
    	        \mathcal{V}_{0}(\ell) & \approx  \!\sum_{j=0}^{m_0}p_{E_1^{k-1}}(j)\! \cdot \left(f_{e}(\ell)\!\!\int_{\ell}^{\infty}\!\!\!f_{m}\left(u, N_0-1 \right) du + f_{m}\left(\ell, N_0-1\right) \int_{\ell}^{\infty}\!\!f_{e}(u)du \! \right)  +  \left(1 - \sum_{j=0}^{m_0}p_{E_1^{k-1}}(j)\right) f_{m}\left(\ell, N_0 \right) ,
    	    \end{split}
    	\end{equation}
    	where
        \begin{equation}\small \label{equ::dualOSD::pdfV0::com1}
    		\begin{split}
                f_{m}\left(\ell, b\right) &= \int_{-\infty}^{\infty} \left(\sqrt{1-\rho} \ \bar{\sigma}_2 \right)^{-1} \cdot f_{\phi} \left(\frac{(\ell - \bar{\mu}_2)/\bar{\sigma}_2   + \sqrt\rho z}{\sqrt{1- \rho}},b\right) \phi (z) \ dz, 	    
    		\end{split}
        \end{equation}
    	\begin{equation}\small
        	f_{\phi}(\ell,b) = b \ \phi (\ell)\left(1 - \int_{-\infty}^{\ell} \phi (u)du\right)^{b-1}.
    	\end{equation}	
	    Furthermore, $f_{e}(\ell)$ is the $\mathrm{pdf}$ of the Gaussian distribution $\mathcal{N}(\bar{\mu}_1,\bar{\sigma}_1^2)$, which represents the distribution of the WHD between the transmitted codeword and $\bar{\mathbf{y}}$, where $\bar{\mu}_1$ and $\bar{\sigma}_1^2$ are respectively given by \cite[Eq. (123)]{yue2021revisit} and \cite[Eq. (124)]{yue2021revisit}. $\phi(\ell)$ is the $\mathrm{pdf}$ of standard Gaussian distribution $\mathcal{N}(0,1)$, and $p_{E_1^{k-1}}(j)$ is given in \cite[Lemma 1]{yue2021revisit}, which represents the probability that some TEP with Hamming weight $j$ can eliminate the errors in MRB. $f_{m}\left(\ell, b\right)$ represents the $\mathrm{pdf}$ of $V_{0}$ when the decoder does not find the transmitted codeword, in which $\bar{\mu}_2$ and $\bar{\sigma}_2^2$ are respectively given by \cite[Eq. (126)]{yue2021revisit} and \cite[Eq. (127)]{yue2021revisit}. Finally, $\rho$ is a correlation coefficient given by \cite[Eq. (100)]{yue2021revisit}. We note that (\ref{equ::dualOSD::pdfV0}) is an approximation rather than an explicit expression, because $f_{e}(\ell)$ and $f_{m}\left(\ell, b\right)$ regard the WHDs from codewords to $\bar{\mathbf{y}}$ as Gaussian variables. According to the definition (\ref{equ::WHDdefine}) of WHD and Central Limit Theorem, WHDs tend to be Gaussian variables, when the code length $n$ is not small. We further note that $p_{E_1^{k-1}}(j)$, $f_{e}(\ell)$, and $f_{m}\left(\ell, b\right)$ are fully determined by the $\mathrm{pdf}$ of the input LLR $\bm{\ell}$. In other words, (\ref{equ::dualOSD::pdfV0}) is a function of the $\mathrm{pdf}$ of the input LLRs.

	    \begin{remark} \label{rem::GeneralSOSD}
	        More generally, when $\sum_{i=0}^{m_0-1}\binom{k}{i}<N_0 <\sum_{i=0}^{m_0}\binom{k}{i}$, i.e., only a part of TEPs with Hamming weight $m_0$ are re-encoded, $ \mathcal{V}_{0}(\ell)$ can be obtained by replacing $\sum_{j=0}^{m_0}p_{E_1^{k-1}}(j)$ in (\ref{equ::dualOSD::pdfV0}) by $\mathrm{Pe}$. Here, $\mathrm{Pe}$ is the probability that the errors over MRB is eliminated by one of the re-encoded TEPs $\{\mathbf{e}_1,\ldots, \mathbf{e}_{N_0}\}$, which is derived as $\mathrm{Pe} = \sum_{j=1}^{N_1}\mathrm{Pe}(\mathbf{e}_j)$, where $\mathrm{Pe}(\mathbf{e}_j)$ was given by \cite[Eq. (131)]{yue2021revisit}.
	    \end{remark}

	    The random variable $V_{1}$ should be determined differently from $V_{0}$. Random variable $V_{1}$ can be regarded as the minimum WHD from estimates to the vector $\bar{\mathbf{y}}\oplus \mathbf{z}$ in an order-$m_1$ OSD decoding of $\mathcal{C}(n-1,k-1)$. Let $\bar{\mathbf{d}}_{\mathbf{e}} = \bar{\mathbf{c}}_{\mathbf{e}}\oplus \bar{\mathbf{y}}_{\mathrm{B}}\oplus\mathbf{z}$ denote the difference pattern between the codeword $\bar{\mathbf{c}}_{\mathbf{e}}$ and $\bar{\mathbf{y}}_{\mathrm{B}}\oplus\mathbf{z}$. Then, when $\mathbf{e}$ can eliminate the hard-decision errors on $\bar{\mathbf{y}}_{\mathrm{B}}$, we have
        \begin{equation}\small \label{equ::dualOSD::eB=e}
            \bar{\mathbf{d}}_{\mathbf{e}} =  \left((\bar{\mathbf{y}}_{\mathrm{B}}\oplus \mathbf{e})\bar{\mathbf{G}}\right) \oplus (\bar{\mathbf{y}}\oplus\mathbf{z}) = [\mathbf{e} \ \ \bar{\mathbf{e}}_{\mathrm{P}}\oplus \mathbf{z}_{\mathrm{P}}],
        \end{equation}
        where $\bar{\mathbf{e}}_{\mathrm{P}}$ is the hard-decision error over $[\bar{y}]_{k}^{n-1}$ and $\mathbf{z}_{\mathrm{P}} = [z]_{k}^{n-1}$. On the other hand, let us check  $\bar{\mathbf{d}}_{\mathbf{e}}$ when $\mathbf{e}$ cannot eliminate the hard-decision errors over $\bar{\mathbf{y}}_{\mathrm{B}}$, and we have
        \begin{equation}\small \label{equ::dualOSD::e!=eB}
        \begin{split}
            \bar{\mathbf{d}}_{\mathbf{e}} &= \left((\bar{\mathbf{y}}_{\mathrm{B}}\oplus \mathbf{e})\bar{\mathbf{G}}\right) \oplus \bar{\mathbf{y}} = \left((\bar{\mathbf{e}}_{\mathrm{B}} \oplus \mathbf{e})\bar{\mathbf{G}}\right) \oplus \bar{\mathbf{e}} \oplus \mathbf{z}  = [\mathbf{e} \ \ \bar{\mathbf{e}}_{\mathrm{P}}\oplus \mathbf{z}_{\mathrm{P}} \oplus \bar{\mathbf{c}}_{\mathrm{P}}' ],
        \end{split}
        \end{equation}
        where $\bar{\mathbf{c}}_{\mathrm{P}}'$ is the parity part of a codeword $\bar{\mathbf{c}}' = (\bar{\mathbf{e}}_{\mathrm{B}} \oplus \mathbf{e})\bar{\mathbf{G}}$ from $\mathcal{C}(n-1,k-1)$.  Moreover, $\bar{\mathbf{c}}_{\mathrm{P}}'$ is also the parity part of a codeword $[0 \ \ (\bar{\mathbf{e}}_{\mathrm{B}} \oplus \mathbf{e})\bar{\mathbf{G}}]$ from $\mathcal{C}(n,k)$. Furthermore, it can be seen from Fig. \ref{Fig::Gtrans} that $[1 \ \ \mathbf{z}]$ is a codeword from $\mathcal{C}(n,k)$, and thus $\mathbf{z}_{\mathrm{P}} \oplus \bar{\mathbf{c}}_{\mathrm{P}}'$ is the parity part of a codeword $[1 \ \  (\bar{\mathbf{e}}_{\mathrm{B}} \oplus \mathbf{e})\bar{\mathbf{G}}\oplus \mathbf{z}]$ from $\mathcal{C}(n,k)$. Comparing (\ref{equ::dualOSD::eB=e}) and (\ref{equ::dualOSD::e!=eB}), the difference pattern $\bar{\mathbf{d}}_{\mathbf{e}}$ follows the same distribution (in terms of the distribution of nonzero elements) regardless of whether the TEP $\mathbf{e}$ can eliminate errors over MRB. This indicates that $V_{1}$ is given by the minimum variable among $N_1$ identical but dependent random variables representing WHDs, which can be characterized by the dependent ordered statistics \cite{tong2012multivariate}. Therefore, by leveraging and modifying \cite[Theorem 4]{yue2021revisit}, we can obtain the $\mathrm{pdf}$ of $V_{1}$ as 
    	\begin{equation}\small \label{equ::dualOSD::pdfV1}
    	        \mathcal{V}_{1}(\ell) \approx  f_{m}\left(\ell, N_1\right) ,
    	\end{equation}
        where $ f_{m}\left(\ell, b\right)$ is given by (\ref{equ::dualOSD::pdfV0::com1}). We omit the detailed derivation of (\ref{equ::dualOSD::pdfV0}) and (\ref{equ::dualOSD::pdfV1}) due to space limits and refer interested readers to \cite{yue2021revisit}.
        
     \begin{figure} 
     \centering
    \begin{tikzpicture}
    
    \begin{axis}[%
    width=2.5in,
    height=1.3in,
    at={(0.822in,0.529in)},
    scale only axis,
    xmin=-5,
    xmax=5,
    xlabel style={at={(0.5,1ex)},font=\color{white!15!black},font = \footnotesize},
    xlabel={SNR (dB)},
    ymin=0,
    ymax=10,
    ylabel style={at={(2.5ex,0.5)},font=\color{white!15!black},font = \footnotesize},
    ylabel={$n_c$},
    axis background/.style={fill=white},
    tick label style={font=\footnotesize},
    xmajorgrids,
    ymajorgrids,
    legend style={at={(1,1)}, anchor=north east, legend cell align=left, align=left, draw=white!15!black,font = \tiny,row sep=-2.5pt}
    ]
    \addplot [color=black, mark=square, mark options={solid, black}]
      table[row sep=crcr]{%
    -5	6.796\\
    -4	6.79\\
    -3	6.727\\
    -2	6.705\\
    -1	6.476\\
    0	6.081\\
    1	5.416\\
    2	4.507\\
    3	3.48\\
    4	2.502\\
    5	1.708\\
    };
    \addlegendentry{Order-(4,4) Dual-OSD, (128,64,22) eBCH}

    \addplot [color=red, mark=square, mark options={solid, red}]
      table[row sep=crcr]{%
    -5	3.6048\\
    -4	3.6166\\
    -3	3.5238\\
    -2	3.4324\\
    -1	3.2848\\
    0	2.9654\\
    1	2.5544\\
    2	2.0212\\
    3	1.5306\\
    4	1.0422\\
    5	0.636\\
    };
    \addlegendentry{Order-(3,3) Dual-OSD, (64,30,14) eBCH}
    
     \addplot [color=blue, mark=square, mark options={solid, blue}]
      table[row sep=crcr]{%
    -5	1.8532\\
    -4	1.8448\\
    -3	1.8242\\
    -2	1.7782\\
    -1	1.6362\\
    0	1.4704\\
    1	1.26\\
    2	0.9994\\
    3	0.7192\\
    4	0.4632\\
    5	0.2692\\
    };
    \addlegendentry{Order-(2,2) Dual-OSD, (32,16,8) eBCH}

    \end{axis}
    \end{tikzpicture}%
	\vspace{-0.5em}
    \caption{The value of $n_c = |(\bar{\mathbf{c}}(i:0)\oplus\bar{\mathbf{y}})\land (\bar{\mathbf{c}}(i:1)\oplus\bar{\mathbf{y}}\oplus \mathbf{z})|$ in decoding various codes }
	\vspace{-0.5em}
	\label{Fig::nc}
        
	\end{figure}

         Next, according to Lemma \ref{Lem::Dual-OSD}, the $\mathrm{pdf}$ of $\delta_i$ can be characterized by $\mathrm{pdf}$s of $V_{0}$ and $V_{1}$. We shall, however, exercise caution with the correlation between $V_{0}$ and $V_{1}$. Precisely, we assume that the codeword $\bar{\mathbf{c}}(i:0)$ has the minimum WHD $V_0$ to $\bar{\mathbf{y}}$, and the codeword $\bar{\mathbf{c}}(i:1)$ has the minimum WHD $V_1$ to $\bar{\mathbf{y}}\oplus\mathbf{z}$, respectively, found by the Dual-OSD. Then, $\bar{\mathbf{c}}(i:0)\oplus\bar{\mathbf{y}}$ and $\bar{\mathbf{c}}(i:1)\oplus\bar{\mathbf{y}}\oplus \mathbf{z}$ may share the common non-zero bits, which introduce the correlation between $V_{0}$ and $V_{1}$. Nevertheless, from the simulation, we can find that the impact of this correlation is marginal. Let $n_c$ denote the number of common nonzero-bits of $\bar{\mathbf{c}}(i:0)\oplus\bar{\mathbf{y}}$ and $\bar{\mathbf{c}}(i:1)\oplus\bar{\mathbf{y}}\oplus \mathbf{z}$, i.e., $n_c = |(\bar{\mathbf{c}}(i:0)\oplus\bar{\mathbf{y}})\land (\bar{\mathbf{c}}(i:1)\oplus\bar{\mathbf{y}}\oplus \mathbf{z})|$, where $\land$ is the bit-wise AND operator. The average values of $n_c$ in decoding various eBCH codes are depicted in Fig. \ref{Fig::nc}. It shows that values of $n_c$ are nearly negligible compared to code lengths at moderate-to-high SNRs, indicating that $V_{0}$ and $V_{1}$ only have a very small correlation. Therefore, we have the following assumption for the simplicity of analysis.
             
        \begin{assumption} \label{Assum::dualOSD::independence}
            $V_{0}$ and $V_{1}$ are independent random variables.
        \end{assumption}
        
        Under Assumption \ref{Assum::dualOSD::independence}, the $\mathrm{pdf}$ of $\delta_i$ is summarized in the following Theorem.
        \begin{theorem} \label{the::Extrinsicdensity}
            Let $\mathcal{D}_i (\ell)$ denote the $\mathrm{pdf}$ of the extrinsic LLR $\delta_i$ output by an order-$(m_0,m_1)$ Dual-OSD. Then $\mathcal{D}_i (\delta)$ is given by
            \begin{equation}\small\label{equ::dualOSD::Extrinsicdensity}
                \mathcal{D}_i(\ell) = \left(\mathcal{V}_0\otimes \mathcal{V}_1^{(-)} \right)(\ell),
            \end{equation}
        	where $f\otimes g$ is the convolution of $f(\ell)$ and $g(\ell)$, i.e., $(f\otimes g)(\ell) = \int_{-\infty}^{\infty}f(\tau)g(\ell-\tau)d\tau$, and $\mathcal{V}_1^{(-)}$ is the symmetric function of $\mathcal{V}_1$, i.e., $\mathcal{V}_1^{(-)}(\ell) = \mathcal{V}_1(-\ell)$.  $\mathcal{V}_0(\ell)$ and $\mathcal{V}_1(\ell)$ are respectively given by (\ref{equ::dualOSD::pdfV0}) and (\ref{equ::dualOSD::pdfV1}), which are determined by the $\mathrm{pdf}$, $\mathcal{L}_j(\ell)$, of input LLRs $\ell_j$, $1\leq j\leq n$ and $j\neq i$. 
        \end{theorem}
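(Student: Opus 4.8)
The plan is to reduce the statement to the elementary probabilistic fact that the density of a difference of two independent random variables equals the cross-correlation of their marginal densities. First I would invoke Lemma~\ref{Lem::Dual-OSD}, which already establishes the decomposition $\delta_i = V_0 - V_1$; all that then remains is to propagate the two known marginals $\mathcal{V}_0$ and $\mathcal{V}_1$ through this linear relation to obtain $\mathcal{D}_i$.

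Next I would introduce the auxiliary variable $W \triangleq -V_1$ and rewrite $\delta_i = V_0 + W$. A one-line change of variables on the cumulative distribution function, $F_W(\ell) = \mathrm{Pr}(-V_1 \le \ell) = 1 - F_{V_1}(-\ell)$, yields the density of $W$ as the reflection $f_W(\ell) = \mathcal{V}_1(-\ell) = \mathcal{V}_1^{(-)}(\ell)$, which is exactly the quantity appearing in the statement. Then I would appeal to Assumption~\ref{Assum::dualOSD::independence}: since $V_0$ and $V_1$ are independent, so are $V_0$ and $W$, and the density of the sum of two independent random variables is the convolution of their densities. Hence
\begin{equation}\small
\mathcal{D}_i(\ell) = (\mathcal{V}_0 \otimes f_W)(\ell) = (\mathcal{V}_0 \otimes \mathcal{V}_1^{(-)})(\ell),
\end{equation}
which is precisely (\ref{equ::dualOSD::Extrinsicdensity}). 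Substituting the explicit expressions (\ref{equ::dualOSD::pdfV0}) and (\ref{equ::dualOSD::pdfV1}) for $\mathcal{V}_0$ and $\mathcal{V}_1$ then gives a density fully determined by the input-LLR density $\mathcal{L}_j(\ell)$, completing the claim.

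I do not expect a genuine obstacle in this final step, because the analytical heavy lifting has already been carried out upstream: Lemma~\ref{Lem::Dual-OSD} supplies the decomposition $\delta_i = V_0 - V_1$, Propositions~\ref{Pro::Dual-SISO::MRB} and~\ref{Pro::Dual-SISO::LRB} tie Dual-OSD back to SOSD, and the derivations of $\mathcal{V}_0$ and $\mathcal{V}_1$ via the ordered-statistics machinery of \cite{yue2021revisit} provide the marginals. The only place demanding care is the independence hypothesis. Without Assumption~\ref{Assum::dualOSD::independence} the convolution would have to be replaced by an integral against the joint density of $(V_0,V_1)$, with the residual correlation quantified through $n_c$ in Fig.~\ref{Fig::nc}. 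Thus the main subtlety is \emph{justificatory}, resting on the empirical smallness of $n_c$ at moderate-to-high SNR, rather than computational.
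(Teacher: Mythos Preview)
Your proposal is correct and mirrors the paper's own argument exactly: the paper states that Theorem~\ref{the::Extrinsicdensity} ``is directly proved from Lemma~\ref{Lem::Dual-OSD} under Assumption~\ref{Assum::dualOSD::independence},'' and your write-up simply unpacks that one-line proof by spelling out the standard convolution-of-independent-summands step. If anything, you have supplied more detail than the paper does.
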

        \begin{IEEEproof}
            Theorem \ref{the::Extrinsicdensity} is directly proved from Lemma \ref{Lem::Dual-OSD} under Assumption \ref{Assum::dualOSD::independence}.
        \end{IEEEproof}
        
        We refer $\mathcal{D}_i(\ell)$ and  $\mathcal{L}_i(\ell)$ to as the extrinsic density and priori density of the $i$-th position, respectively. Then, the $\mathrm{pdf}$ of the posterior LLR, i.e., the posterior density, is summarized in the following Corollary.
        
        \begin{corollary}
            Let $\mathcal{P}_i (\ell)$ denote the $\mathrm{pdf}$ of the $i$-th posterior LLR, i.e., $\delta_i + \ell_i$, output by an order-$(m_0,m_1)$ Dual-OSD.  Then $\mathcal{P}_i (\ell)$ is given by
            \begin{equation}\small \label{equ::dualOSD::Postdensity}
                \mathcal{P}_i(\ell) = \left(\mathcal{L}_i \otimes \mathcal{V}_0\otimes \mathcal{V}_1^{(-)} \right)(\ell).
            \end{equation}
        \end{corollary}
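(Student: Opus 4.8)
The plan is to write the posterior LLR as the sum $\delta_i + \ell_i$ of the extrinsic and priori LLRs, and then to recover its density as a convolution once the two summands are shown to be independent. First I would recall from Lemma~\ref{Lem::Dual-OSD} that the extrinsic LLR equals $\delta_i = V_0 - V_1$, where $V_0$ and $V_1$ are the two minimum WHDs evaluated by the Dual-OSD. The key structural observation is that, by construction of Dual-OSD, both quantities are computed over the punctured matrix $\bar{\mathbf{G}}$ using the reduced LLR sequence $\bar{\bm{\ell}}$, which is the reordering of $\bm{\ell}' = [\ell_1,\ldots,\ell_{i-1},\ell_{i+1},\ldots,\ell_n]$ and therefore excludes $\ell_i$ entirely; moreover, the reliability ordering $\bm{\alpha}'$ that produces $\bar{\bm{\ell}}$, and the vector $\mathbf{z}$ extracted during the preprocessing, likewise depend only on $\{\ell_j : j\neq i\}$. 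Hence $(V_0, V_1)$, and thus $\delta_i$, is a deterministic function of $\{\ell_j : j\neq i\}$ alone.

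Given this, I would invoke Assumption~\ref{assum::Indep_Decoder_Input}: since the block inputs $\bm{\ell}=[\ell]_1^n$ are i.i.d., $\ell_i$ is independent of the collection $\{\ell_j : j\neq i\}$, and therefore independent of $\delta_i = V_0 - V_1$. Consequently, the posterior LLR $\delta_i + \ell_i$ is a sum of two independent random variables, and its density is the convolution of the priori density $\mathcal{L}_i$ with the extrinsic density $\mathcal{D}_i$, i.e., $\mathcal{P}_i = \mathcal{L}_i \otimes \mathcal{D}_i$.

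Finally I would substitute $\mathcal{D}_i = \mathcal{V}_0 \otimes \mathcal{V}_1^{(-)}$ from Theorem~\ref{the::Extrinsicdensity} and invoke the associativity and commutativity of convolution to obtain $\mathcal{P}_i(\ell) = (\mathcal{L}_i \otimes \mathcal{V}_0 \otimes \mathcal{V}_1^{(-)})(\ell)$, as claimed. The only genuinely delicate step is establishing the independence of $\ell_i$ from $\delta_i$; here it is precisely the puncturing of column $i$ in the Dual-OSD preprocessing that renders $V_0$ and $V_1$ manifestly free of $\ell_i$, so that the independence is structural and requires no approximation---in contrast to the raw definition (\ref{equ::LowSISO::extLLR}), where the codeword estimates $\mathbf{c}(i\!:\!0)$ and $\mathbf{c}(i\!:\!1)$ are selected by a WHD minimization over the full length-$n$ sequence and could in principle carry a residual dependence on $\ell_i$. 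I expect the remainder to be entirely routine, as it reduces to the standard fact that the density of a sum of independent random variables is the convolution of their densities.
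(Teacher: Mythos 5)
Your proposal is correct and follows essentially the same route as the paper: the paper's one-line proof likewise observes (from its rewriting of $\delta_i$ as sums of $|\ell_j|$ over $j\neq i$) that $\ell_i$ is independent of $V_0$ and $V_1$, and then reads off the posterior density as the convolution $\mathcal{L}_i \otimes \mathcal{V}_0 \otimes \mathcal{V}_1^{(-)}$. Your elaboration that the Dual-OSD puncturing of position $i$ makes this independence structural (rather than an approximation, as it would be for raw SOSD) is a correct and slightly more explicit justification of the same key step.
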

        \begin{IEEEproof}
            Eq. (\ref{equ::dualOSD::Postdensity}) is directly obtained by observing from (\ref{equ::APP::outLLR2}) that $\ell_i$ is independent with $V_0$ and $V_1$.
        \end{IEEEproof}
        To reduce the correlation between the input and the output of the decoder, the extrinsic LLR rather than the posterior LLR, is usually used in iterative decoders \cite{yue2021noma,divsalar2001iterative}. Since the input LLRs $[\ell]_1^n$ and the output LLRs $[\delta]_1^n$ are respectively identically distributed, we omit the subscripts $i$ and $j$ of $\mathcal{D}_i$ and $\mathcal{L}_j$, and simply represent
        (\ref{equ::dualOSD::Extrinsicdensity}) as
        \begin{equation}\small \label{equ::dualOSD::transform}
            \mathcal{D}(\ell) = \Delta(\mathcal{L}(\ell)),
        \end{equation}
        where $\Delta(\cdot)$ is referred to as the density-transform feature of SOSD.
        
     \vspace{-0.5em}   
	\subsection{Numerical Examples} \label{Sec::DualOSD::results}
	\vspace{-0.5em}
	
	    In this subsection, we demonstrate the extrinsic densities with various priori densities given. We assume the all-zero transmission, i.e., the transmitted codeword $\mathbf{c}$ is an all-zero codeword and it is transmitted with all positive BPSK symbols. This assumption will not lost generality, when each codeword in the codebook is equally likely to be transmitted over a symmetric channel. Under the all-zero transmission, we denote the $\mathrm{pdf}$ of $\ell_i$ as $\dot{\mathcal{L}}_i(\ell) = \mathcal{L}_i(\ell|c_i=0)$, which is referred to as the single-side priori density. Accordingly, we denote the single-side density of $\delta_i$ as $\dot{\mathcal{D}}_i(\ell) = \mathcal{D}_i(\ell|c_i=0)$. By omitting the subscript, we directly have the following relationships.
        \begin{equation}\small
            \mathcal{L}(\ell) = \frac{1}{2}\dot{\mathcal{L}}(\ell) + \frac{1}{2}\dot{\mathcal{L}}(-\ell), \ \ \text{and} \ \ 
            \mathcal{D}(\ell) = \frac{1}{2}\dot{\mathcal{D}}(\ell) + \frac{1}{2}\dot{\mathcal{D}}(-\ell).
        \end{equation}
        
        \begin{example} \label{Exam::single64outDensity}
    	    We consider the single-user all-zero transmission over an AWGN channel, where the SNR is defined as $\mathrm{SNR} = \frac{1}{\sigma^2}$. Thus, the input LLRs follow the distribution $\mathcal{N}(\frac{2}{\sigma^2}.\frac{4}{\sigma^2} )$. The $(64,30,14)$ eBCH code is decoded by an order-3 SOSD. We depict and compare $\dot{\mathcal{D}}(\ell)$ and $\dot{\mathcal{L}}(\ell)$ at various SNRs in Fig. \ref{Fig::singleDensity::AWGN}, where the numerical results for $\dot{\mathcal{D}}(\ell)$ is obtained from (\ref{equ::dualOSD::Extrinsicdensity}) with $m_1=m_0=3$.
    	    
    	    As shown, although the simulation results of $\dot{\mathcal{D}}(\ell)$ are obtained by performing the order-3 SOSD, it is still well approximated by (\ref{equ::dualOSD::Extrinsicdensity}) with $m_1=m_0=3$. Nevertheless, as discussed in Remark \ref{rem::GeneralSOSD}, more accurate results can be obtained by selecting $N_1$ and $N_0$ more carefully. We further note that there is a gap between  (\ref{equ::dualOSD::Extrinsicdensity}) and the simulation results at SNR = 0 dB because of the relatively high correlation between $V_0$ and $V_1$ at low SNRs.
    	\end{example}
    	
    	\begin{example}
	        We consider the input LLR with the density of $\dot{\mathcal{L}} \sim 0.5\mathcal{N}(1,0.631) + 0.5\mathcal{N}(3,1.262)$, where the $(64,30,14)$ eBCH code is decoded by an order-3 SOSD. We depict and compare $\dot{\mathcal{D}}(\ell)$, $\dot{\mathcal{L}}(\ell)$, and  $\dot{\mathcal{P}}(\ell) = \mathcal{P}(\ell|c_i=0)$ in Fig. \ref{Fig::singleDensity::Irregular}. The density of the input LLR analogizes the input signal suffering from the interference from another user in the multi-user transmission. As demonstrated, while the input LLR does not follow a Gaussian distribution, the extrinsic and posterior LLRs approximately do.
	        
    	\end{example}

       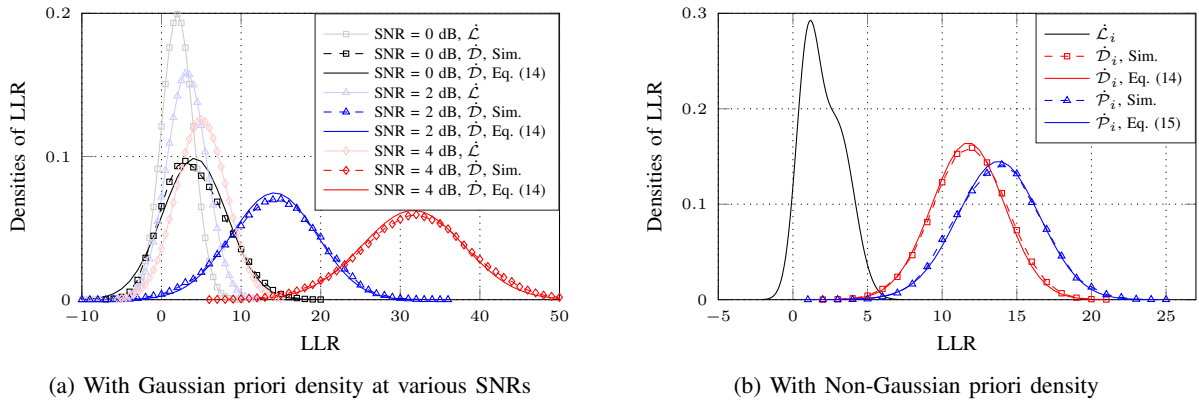
\begin{figure} 
             \centering
             \hspace{-0.81em}
             \begin{subfigure}[b]{0.45\columnwidth}
                 \centering
                 
                \begin{tikzpicture}
                
                \begin{axis}[%
                width=2.5in,
                height=1.5in,
                at={(0.785in,0.587in)},
                scale only axis,
                xmin=-10,
                xmax=50,
                xlabel style={at={(0.5,1ex)},font=\color{white!15!black},font=\scriptsize},
                xlabel={LLR},
                ymin= 0,
                ymax=0.2,
                yminorticks=true,
                ylabel style={at={(2ex,0.5)},font=\color{white!15!black},font=\scriptsize},
                ylabel={Densities of LLR},
                ytick={0,0.1,0.2},
                axis background/.style={fill=white},
                tick label style={font=\tiny},
                xmajorgrids,
                ymajorgrids,
                yminorgrids,
                minor grid style={dotted},
                major grid style={dotted,black},
                legend style={at={(1,1)}, anchor=north east, legend cell align=left, align=left, draw=white!15!black,font = \tiny,row sep=-4pt}
                ]

                \addplot [color=white!80!black,mark=square, mark size=1pt]
                  table[row sep=crcr]{%
                -10	3.03794142491164e-09\\
                -9.5	1.31962160178529e-08\\
                -9	5.38488002127164e-08\\
                -8.5	2.064235494315e-07\\
                -8	7.43359757367149e-07\\
                -7.5	2.51475364429622e-06\\
                -7	7.99187055345274e-06\\
                -6.5	2.38593182706025e-05\\
                -6	6.69151128824427e-05\\
                -5.5	0.000176297841183723\\
                -5	0.00043634134752288\\
                -4.5	0.00101452402864988\\
                -4	0.002215924205969\\
                -3.5	0.00454678125079553\\
                -3	0.00876415024678427\\
                -2.5	0.0158698259178337\\
                -2	0.026995483256594\\
                -1.5	0.0431386594132558\\
                -1	0.0647587978329459\\
                -0.5	0.091324542694511\\
                0	0.120985362259572\\
                0.5	0.150568716077402\\
                1	0.17603266338215\\
                1.5	0.193334058401425\\
                2	0.199471140200716\\
                2.5	0.193334058401425\\
                3	0.17603266338215\\
                3.5	0.150568716077402\\
                4	0.120985362259572\\
                4.5	0.091324542694511\\
                5	0.0647587978329459\\
                5.5	0.0431386594132558\\
                6	0.026995483256594\\
                6.5	0.0158698259178337\\
                7	0.00876415024678427\\
                7.5	0.00454678125079553\\
                8	0.002215924205969\\
                8.5	0.00101452402864988\\
                9	0.00043634134752288\\
                9.5	0.000176297841183723\\
                10	6.69151128824427e-05\\
                };
                \addlegendentry{SNR = 0 dB, $\dot{\mathcal{L}}$}
                
                \addplot [color=black, dashed, mark=square, mark size=1pt, mark options={solid, black}]
                  table[row sep=crcr]{%
                -7	0.00032\\
                -6	0.00072\\
                -5	0.00194\\
                -4	0.00554\\
                -3	0.01378\\
                -2	0.0292\\
                -1	0.0446\\
                -0	0.06528\\
                1	0.08234\\
                2	0.09332\\
                3	0.09684\\
                4	0.09248\\
                5	0.08658\\
                6	0.07874\\
                7	0.06842\\
                8	0.0568\\
                9	0.04596\\
                10	0.03532\\
                11	0.0265\\
                12	0.01714\\
                13	0.01166\\
                14	0.0067\\
                15	0.00392\\
                16	0.00176\\
                17	0.0011\\
                18	0.00052\\
                19	0.00022\\
                20	0.0001\\
                };
                \addlegendentry{SNR = 0 dB, $\dot{\mathcal{D}}$, Sim.}

                \addplot [color=black]
                  table[row sep=crcr]{%
                -10	0.000213085964218014\\
                -9	0.000491663374976333\\
                -8	0.00106709062689309\\
                -7	0.00217851726498687\\
                -6	0.00418359228047661\\
                -5	0.00755731523256871\\
                -4	0.0128415073910344\\
                -3	0.0205255545370846\\
                -2	0.0308605937345095\\
                -1	0.0436459720071746\\
                0	0.0580650200709502\\
                1	0.0726633655822189\\
                2	0.0855356092275014\\
                3	0.0947128467136635\\
                4	0.0986509419086999\\
                5	0.0966548566031889\\
                6	0.0890789670518548\\
                7	0.0772240249995257\\
                8	0.0629717959040717\\
                9	0.0482982541257679\\
                10	0.0348376647085801\\
                11	0.0236247830764073\\
                12	0.0150530349859755\\
                13	0.00900189809493556\\
                14	0.00504302089046673\\
                15	0.00263917918063901\\
                16	0.00128521066478274\\
                17	0.000579516666459982\\
                18	0.000240577608238608\\
                19	9.13820089585714e-05\\
                20	3.15629073355864e-05\\
                };
                \addlegendentry{SNR = 0 dB, $\dot{\mathcal{D}}$, Eq. (\ref{equ::dualOSD::Extrinsicdensity})}
                
                \addplot [color=white!80!blue, mark = triangle, mark size=1.5pt]
                  table[row sep=crcr]{%
                -6	0.000208807273152742\\
                -5.5	0.0004219607414014\\
                -5	0.000819732475116321\\
                -4.5	0.00153089676018451\\
                -4	0.00274848477997216\\
                -3.5	0.00474366965953632\\
                -3	0.00787062478252964\\
                -2.5	0.0125538706243785\\
                -2	0.0192495134953859\\
                -1.5	0.0283749781725662\\
                -1	0.0402091584423915\\
                -0.5	0.0547757206898793\\
                0	0.0717339732024011\\
                0.5	0.0903098958193332\\
                1	0.109299824731584\\
                1.5	0.127167829547071\\
                2	0.142235735094529\\
                2.5	0.152937453064944\\
                3	0.158085724678055\\
                3.5	0.157088764718281\\
                4	0.150062184380733\\
                4.5	0.137806934669073\\
                5	0.121659086250988\\
                5.5	0.103250389153152\\
                6	0.0842388662805185\\
                6.5	0.0660704095103798\\
                7	0.0498167226947328\\
                7.5	0.0361091260049255\\
                8	0.0251612648756602\\
                8.5	0.0168547234734884\\
                9	0.0108538661838602\\
                9.5	0.00671925252715553\\
                10	0.00399881334631419\\
                10.5	0.00228778375429941\\
                11	0.00125826605254171\\
                11.5	0.000665278757964547\\
                12	0.000338149308147588\\
                };
                \addlegendentry{SNR = 2 dB, $\dot{\mathcal{L}}$}
                
                \addplot [color=blue, dashed, mark=triangle, mark size=1.5pt, mark options={solid, blue}]
                  table[row sep=crcr]{%
                -10	2e-05\\
                -9	2e-05\\
                -8	4e-05\\
                -7	2e-05\\
                -6	0.00016\\
                -5	0.00018\\
                -4	0.00044\\
                -3	0.00092\\
                -2	0.00172\\
                -1	0.0028\\
                -0	0.00382\\
                1	0.00584\\
                2	0.0084\\
                3	0.01196\\
                4	0.01438\\
                5	0.01908\\
                6	0.02498\\
                7	0.03152\\
                8	0.03822\\
                9	0.04496\\
                10	0.05158\\
                11	0.05914\\
                12	0.0641\\
                13	0.06822\\
                14	0.06976\\
                15	0.07022\\
                16	0.06658\\
                17	0.06384\\
                18	0.05604\\
                19	0.04982\\
                20	0.04376\\
                21	0.03346\\
                22	0.02842\\
                23	0.02138\\
                24	0.01604\\
                25	0.01026\\
                26	0.0068\\
                27	0.0042\\
                28	0.00304\\
                29	0.00176\\
                30	0.00096\\
                31	0.00062\\
                32	0.00026\\
                33	0.0001\\
                34	0.0001\\
                35	2e-05\\
                36	4e-05\\
                };
                \addlegendentry{SNR = 2 dB, $\dot{\mathcal{D}}$, Sim.}
                
                \addplot [color=blue]
                  table[row sep=crcr]{%
                -5	0.000145912024530828\\
                -4	0.000274518808656095\\
                -3	0.00049934100105571\\
                -2	0.000878146142422783\\
                -1	0.00149307257480807\\
                0	0.00245436785748973\\
                1	0.00390070400191434\\
                2	0.00599364409539085\\
                3	0.00890396568139933\\
                4	0.0127885256404988\\
                5	0.0177583142940512\\
                6	0.0238411275641003\\
                7	0.030945298481676\\
                8	0.0388332256068441\\
                9	0.0471139423576843\\
                10	0.055261886870246\\
                11	0.0626642546677041\\
                12	0.0686927014473399\\
                13	0.0727884361422005\\
                14	0.0745450100466152\\
                15	0.073772091216007\\
                16	0.0705268380640128\\
                17	0.0651063424041502\\
                18	0.0580030204944612\\
                19	0.04983240847326\\
                20	0.0412476760781959\\
                21	0.0328564849747069\\
                22	0.0251538389046043\\
                23	0.0184802213143173\\
                24	0.013008712782122\\
                25	0.00875902356437019\\
                26	0.00563153387161731\\
                27	0.00345149566228235\\
                28	0.00201316178346312\\
                29	0.00111572192914737\\
                30	0.000586675333273375\\
                31	0.000292288619272541\\
                32	0.000137802599077543\\
                };
                \addlegendentry{SNR = 2 dB, $\dot{\mathcal{D}}$, Eq. (\ref{equ::dualOSD::Extrinsicdensity})}
                
                \addplot [color=white!80!red, mark=diamond, mark size=1.5pt]
                  table[row sep=crcr]{%
                -5.5	0.00050857966566436\\
                -5	0.000847999208031525\\
                -4.5	0.00137919577203176\\
                -4	0.00218801554622668\\
                -3.5	0.00338585928835842\\
                -3	0.00511071217807677\\
                -2.5	0.00752467863844851\\
                -2	0.0108065858109245\\
                -1.5	0.0151385078923372\\
                -1	0.0206857681441119\\
                -0.5	0.0275711099008942\\
                0	0.0358451879558681\\
                0.5	0.0454570753504152\\
                1	0.0562297476538564\\
                1.5	0.0678460853623142\\
                2	0.0798504687948292\\
                2.5	0.0916693520691334\\
                3	0.102651398156546\\
                3.5	0.112124260602526\\
                4	0.119461601067736\\
                4.5	0.124151246900108\\
                5	0.125854241380442\\
                5.5	0.124445341101703\\
                6	0.120028241793744\\
                6.5	0.112922962268421\\
                7	0.103627517886828\\
                7.5	0.0927602574578069\\
                8	0.0809921281353763\\
                8.5	0.0689791266700254\\
                9	0.0573042177382855\\
                9.5	0.046435433139774\\
                10	0.0367034111990904\\
                10.5	0.0282981062846515\\
                11	0.0212815047996436\\
                11.5	0.015611380502107\\
                12	0.0111705434093656\\
                12.5	0.00779652916635661\\
                13	0.0053078950961203\\
                13.5	0.0035248234563042\\
                14	0.00228321297903068\\
                14.5	0.00144261184908523\\
                15	0.000889091751770299\\
                };
                \addlegendentry{SNR = 4 dB$, \dot{\mathcal{L}}$}
                
                \addplot [color=red, dashed, mark=diamond, mark size=1.5pt, mark options={solid, red}]
                  table[row sep=crcr]{%
                6	5.9375e-05\\
                7	0.000196875\\
                8	0.00019375\\
                9	0.000359375\\
                10	0.000596875\\
                11	0.000896875\\
                12	0.00129375\\
                13	0.001621875\\
                14	0.002509375\\
                15	0.003375\\
                16	0.004646875\\
                17	0.006140625\\
                18	0.008246875\\
                19	0.010503125\\
                20	0.013371875\\
                21	0.016175\\
                22	0.020071875\\
                23	0.024496875\\
                24	0.029028125\\
                25	0.0343\\
                26	0.0390125\\
                27	0.04461875\\
                28	0.04953125\\
                29	0.052809375\\
                30	0.0561375\\
                31	0.058440625\\
                32	0.0593\\
                33	0.057878125\\
                34	0.05568125\\
                35	0.053225\\
                36	0.049459375\\
                37	0.04413125\\
                38	0.03819375\\
                39	0.0330625\\
                40	0.028190625\\
                41	0.023153125\\
                42	0.019490625\\
                43	0.015584375\\
                44	0.012703125\\
                45	0.009075\\
                46	0.00670625\\
                47	0.004771875\\
                48	0.003184375\\
                49	0.002321875\\
                50	0.001625\\
                51	0.00109375\\
                52	0.00081875\\
                53	0.000559375\\
                54	0.000325\\
                55	0.0002875\\
                56	0.00019375\\
                57	0.000109375\\
                58	0.000115625\\
                };
                \addlegendentry{SNR = 4 dB, $\dot{\mathcal{D}}$, Sim.}
                
                \addplot [color=red]
                  table[row sep=crcr]{%
                12	0.000651405600074829\\
                13	0.00102060449668292\\
                14	0.00156258978039919\\
                15	0.00233780918990035\\
                16	0.00341777550446947\\
                17	0.00488248252848268\\
                18	0.006815395756931\\
                19	0.00929572179319935\\
                20	0.01238804700404\\
                21	0.0161299695500267\\
                22	0.0205189635341182\\
                23	0.0255002955757965\\
                24	0.0309582149929477\\
                25	0.0367127071500184\\
                26	0.042523720266853\\
                27	0.0481039132949153\\
                28	0.0531397014443367\\
                29	0.0573188898431893\\
                30	0.060361770766641\\
                31	0.0620515374586388\\
                32	0.0622595162959578\\
                33	0.0609611950233447\\
                34	0.0582403055152995\\
                35	0.0542800909826562\\
                36	0.049342983514327\\
                37	0.0437418044336423\\
                38	0.0378068833656641\\
                39	0.0318539185183252\\
                40	0.0261569194574469\\
                41	0.0209293425051908\\
                42	0.0163148639787404\\
                43	0.0123875218385818\\
                44	0.00915954260140578\\
                45	0.00659429317855997\\
                46	0.00462153637122539\\
                47	0.00315245447524436\\
                48	0.00209256343904107\\
                49	0.0013514544469574\\
                50	0.000849073902432379\\
                51	0.000518850265405309\\
                52	0.000308336640486054\\
                53	0.000178168851795568\\
                };
                \addlegendentry{SNR = 4 dB, $\dot{\mathcal{D}}$, Eq. (\ref{equ::dualOSD::Extrinsicdensity})}
    
                \end{axis}
                \end{tikzpicture}%

                \vspace{-0.11em}
                \caption{With Gaussian priori density at various SNRs}     
                \vspace{-0.11em}
                \label{Fig::singleDensity::AWGN}

             \end{subfigure}
             \hspace{-0.3em}
             \begin{subfigure}[b]{0.45\columnwidth}
                \centering
                \begin{tikzpicture}
                
                \begin{axis}[%
                width=2.5in,
                height=1.5in,
                at={(0.642in,0.505in)},
                scale only axis,
                xmin=-5,
                xmax=27,
                xlabel style={at={(0.5,1ex)},font=\color{white!15!black},font=\scriptsize},
                xlabel={LLR},
                ymin=0,
                ymax=0.3,
                ylabel style={at={(2ex,0.5)},font=\color{white!15!black},font=\scriptsize},
                ylabel={Densities of LLR},
                axis background/.style={fill=white},
                tick label style={font=\tiny},
                xmajorgrids,
                ymajorgrids,
                minor grid style={dotted},
                major grid style={dotted,black},
                legend style={at={(1,1)}, anchor=north east, legend cell align=left, align=left, draw=white!15!black,font = \tiny,row sep=-3pt}
                ]
                
                \addplot [color=black]
                      table[row sep=crcr]{%
                    -2.1	0.000128270077358207\\
                    -2	0.000207392227312536\\
                    -1.9	0.000330234363869276\\
                    -1.8	0.000517818438366915\\
                    -1.7	0.000799517806145691\\
                    -1.6	0.00121548597113379\\
                    -1.5	0.00181938085142143\\
                    -1.4	0.00268122911087124\\
                    -1.3	0.00389017507258859\\
                    -1.2	0.00555674667619786\\
                    -1.1	0.00781416008773617\\
                    -1	0.0108180943850229\\
                    -0.9	0.0147443228716693\\
                    -0.799999999999999	0.0197836151346892\\
                    -0.699999999999999	0.0261334490650069\\
                    -0.6	0.0339863119051328\\
                    -0.5	0.0435147267286358\\
                    -0.399999999999999	0.0548535980513514\\
                    -0.299999999999999	0.0680809861340316\\
                    -0.199999999999999	0.0831989290806886\\
                    -0.0999999999999996	0.100116352161786\\
                    0	0.118636344839984\\
                    0.0999999999999996	0.138450065892289\\
                    0.199999999999999	0.159139199718317\\
                    0.299999999999999	0.180188218487644\\
                    0.399999999999999	0.201006744831653\\
                    0.5	0.220961154256205\\
                    0.6	0.239413349549825\\
                    0.699999999999999	0.255763554743564\\
                    0.799999999999999	0.269493189106667\\
                    0.9	0.280203538346329\\
                    1	0.287646128428823\\
                    1.1	0.291741436400465\\
                    1.2	0.292583765894815\\
                    1.3	0.290431619735201\\
                    1.4	0.285684512375615\\
                    1.5	0.27884865632924\\
                    1.6	0.270495123849167\\
                    1.7	0.261214775920472\\
                    1.8	0.251574390128058\\
                    1.9	0.242078018251996\\
                    2	0.233136753782733\\
                    2.1	0.225048938026303\\
                    2.2	0.217991559951639\\
                    2.3	0.212022386663362\\
                    2.4	0.207091346341488\\
                    2.5	0.203058973194977\\
                    2.6	0.199719357178179\\
                    2.7	0.196825008382869\\
                    2.8	0.194111292427664\\
                    2.9	0.191318537142813\\
                    3	0.188210461221899\\
                    3.1	0.184588146991315\\
                    3.2	0.180299303627652\\
                    3.3	0.175242998224776\\
                    3.4	0.169370347567788\\
                    3.5	0.162681860946288\\
                    3.6	0.155222216184559\\
                    3.7	0.147073258214942\\
                    3.8	0.138345955918469\\
                    3.9	0.129171961332599\\
                    4	0.119695304598112\\
                    4.1	0.110064642230297\\
                    4.2	0.100426364733777\\
                    4.3	0.0909187675825876\\
                    4.4	0.0816673996924398\\
                    4.5	0.0727816265296445\\
                    4.6	0.0643523808686134\\
                    4.7	0.0564510225423893\\
                    4.8	0.0491291888842721\\
                    4.9	0.0424194894799158\\
                    5	0.036336881803595\\
                    5.1	0.030880557573144\\
                    5.2	0.0260361722188625\\
                    5.3	0.0217782604038236\\
                    5.4	0.0180726974579986\\
                    5.5	0.0148790881053297\\
                    5.6	0.0121529881095282\\
                    5.7	0.00984788964156747\\
                    5.8	0.00791692566556179\\
                    5.9	0.0063142711178736\\
                    6	0.00499623812779518\\
                    6.1	0.00392207837375761\\
                    6.2	0.00305451761689932\\
                    6.3	0.00236005554659274\\
                    6.4	0.00180906860469317\\
                    6.5	0.0013757549025056\\
                    6.6	0.0010379592913649\\
                    6.7	0.000776913721161442\\
                    6.8	0.000576923834194704\\
                    6.9	0.00042502785102352\\
                    7	0.000310648682324217\\
                    7.1	0.000225255218373439\\
                    7.2	0.000162044175090675\\
                    7.3	0.000115649885250311\\
                    };
                    \addlegendentry{$\dot{\mathcal{L}}_i$}
                    
                    \addplot [color=red, dashed, mark=square, mark size = 1pt, mark options={solid, red}]
                      table[row sep=crcr]{%
                    2	9.0625e-05\\
                    3	0.00043125\\
                    4	0.001396875\\
                    5	0.00419375\\
                    6	0.011175\\
                    7	0.0239625\\
                    8	0.0466875\\
                    9	0.08125\\
                    10	0.1230125\\
                    11	0.152771875\\
                    12	0.159175\\
                    13	0.14374375\\
                    14	0.1115\\
                    15	0.072746875\\
                    16	0.039821875\\
                    17	0.0184\\
                    18	0.00671875\\
                    19	0.0021\\
                    20	0.000578125\\
                    21	0.00016875\\
                    };
                    \addlegendentry{$\dot{\mathcal{D}}_i$, Sim.}
                    
                    \addplot [color=red]
                      table[row sep=crcr]{%
                    2.5	0.000127736173870936\\
                    3	0.000270357858175624\\
                    3.5	0.000549111333500976\\
                    4	0.00107020907483734\\
                    4.5	0.00200148550418931\\
                    5	0.00359168913822102\\
                    5.5	0.00618432439382686\\
                    6	0.0102168130386575\\
                    6.5	0.0161937791483941\\
                    7	0.0246245204657565\\
                    7.5	0.035921008490616\\
                    8	0.0502645507460171\\
                    8.5	0.0674645952219999\\
                    9	0.086847234198895\\
                    9.5	0.107217080628592\\
                    10	0.126928361791086\\
                    10.5	0.14407737797887\\
                    11	0.156793510705331\\
                    11.5	0.163570684043811\\
                    12	0.163559311541633\\
                    12.5	0.156740939739743\\
                    13	0.143936135164616\\
                    13.5	0.126642213871782\\
                    14	0.106745106839928\\
                    14.5	0.0861820229761919\\
                    15	0.0666380362330599\\
                    15.5	0.0493402754111912\\
                    16	0.034977737973766\\
                    16.5	0.0237372100377533\\
                    17	0.0154188817653575\\
                    17.5	0.00958516239917744\\
                    18	0.00570175474666021\\
                    18.5	0.00324504156407823\\
                    19	0.00176675453568135\\
                    19.5	0.000920066938510425\\
                    20	0.000458241648738767\\
                    };
                    \addlegendentry{$\dot{\mathcal{D}}_i$, Eq. (\ref{equ::dualOSD::Extrinsicdensity})}
                    
                    \addplot [color=blue, dashed, mark=triangle, mark size = 1.5pt , mark options={solid, blue}]
                      table[row sep=crcr]{%
                    1	0\\
                    2	0.0001\\
                    3	0.0001\\
                    4	0.0002\\
                    5	0.0008\\
                    6	0.0031\\
                    7	0.007\\
                    8	0.0164\\
                    9	0.0346\\
                    10	0.0632\\
                    11	0.0891\\
                    12	0.114\\
                    13	0.1324\\
                    14	0.1414\\
                    15	0.1319\\
                    16	0.1018\\
                    17	0.0737\\
                    18	0.0457\\
                    19	0.0227\\
                    20	0.0132\\
                    21	0.0055\\
                    22	0.0023\\
                    23	0.0005\\
                    24	0.0002\\
                    25	0.0001\\
                    };
                    \addlegendentry{$\dot{\mathcal{P}}_i$, Sim.}
                    
                    \addplot [color=blue]
                      table[row sep=crcr]{%
                    2.5	0.000135788295049398\\
                    3	0.000220506632962729\\
                    3.5	0.000356084816686815\\
                    4	0.000571429888764193\\
                    4.5	0.000910366097413766\\
                    5	0.00143794107515527\\
                    5.5	0.0022482107561018\\
                    6	0.00347288099931977\\
                    6.5	0.00528928067186222\\
                    7	0.00792491720281673\\
                    7.5	0.0116545545722717\\
                    8	0.0167848692554683\\
                    8.5	0.0236220539374818\\
                    9	0.0324200719707332\\
                    9.5	0.0433120602572068\\
                    10	0.0562341698386473\\
                    10.5	0.0708581938784497\\
                    11	0.0865538437312102\\
                    11.5	0.102400500065415\\
                    12	0.117259981660878\\
                    12.5	0.129907284206387\\
                    13	0.139199321537326\\
                    13.5	0.144248313262622\\
                    14	0.144561949508572\\
                    14.5	0.140119126788013\\
                    15	0.131365900397238\\
                    15.5	0.119135873947462\\
                    16	0.10451607482407\\
                    16.5	0.0886887213205834\\
                    17	0.072779782291649\\
                    17.5	0.0577386098272468\\
                    18	0.0442625810724034\\
                    18.5	0.0327699581986777\\
                    19	0.0234153452265956\\
                    19.5	0.0161362699378561\\
                    20	0.0107167895953567\\
                    20.5	0.00685436146268207\\
                    21	0.00421893240116367\\
                    21.5	0.00249736608580678\\
                    22	0.00142082327274366\\
                    22.5	0.000776490783763589\\
                    23	0.000407434709992611\\
                    23.5	0.000205172258961074\\
                    24	9.91188747551174e-05\\
                    24.5	4.59231901980872e-05\\
                    };
                    \addlegendentry{$\dot{\mathcal{P}}_i$, Eq. (\ref{equ::dualOSD::Postdensity})}
                            
                \end{axis}
                \end{tikzpicture}%
                \vspace{-0.11em}
                 \caption{With Non-Gaussian priori density}
                 \vspace{-0.11em}
                \label{Fig::singleDensity::Irregular}
             \end{subfigure}
             \vspace{-0.11em}
             \caption{The extrinsic density with various priori densities. Simulation results are shown by using dashed lines.}
             \vspace{-0.31em}
             \label{Fig::singleDensity}
        \end{figure}

\vspace{-0.5em}
	\section{Density Evolution Framework} \label{Sec::DE}
\vspace{-0.5em}
	Richardson et al. \cite{richardson2001capacity} first proposed the DE method as a tool for performing asymptotic analysis on LDPC codes. To be more precise, when codeword lengths approach infinity, the BG of LDPC codes tends to be cycle-free. After that, during the message-passing decoding, the messages propagated between check nodes and variable nodes are independent between nodes and iterations, and their densities can be theoretically determined. Using DE, the decoder and the code design of LDPC codes have been thoroughly studied \cite{richardson2001capacity,richardson2001design,wang2005density}. In this section, we develop the DE framework for the iterative OSD-based JD.
    
    \vspace{-0.5em}
   \subsection{Bipartite Graph Representation}
    \vspace{-0.5em}
    We first assume that for one specific user $u$, the outputs of PIC , i.e., $\bm{\ell}^{(u)}(t) = [\ell^{(u)}(t)]_1^n $, are i.i.d. variables in a single transmitted block, which is summarized in the following assumption.
    \begin{assumption} \label{Assum::PICbitsindependence}
        In a single transmitted block, elements of the output of PIC for user $u$, i.e., $[\ell^{(u)}(t)]_1^n$, are i.i.d. variables.
    \end{assumption}
    
    In the DS-off phase, Assumption \ref{Assum::PICbitsindependence} naturally holds because $\mathbf{w} = [w]_1^n$ are i.i.d. AWGN variables, $\mathbf{h}$ is fixed in one transmitted block, and (\ref{equ::Receiver::Primean}) and (\ref{equ::Receiver::PICLLR}) are identical for each bit position $i$. In the DS-on phase, Assumption \ref{Assum::PICbitsindependence} also holds because of Proposition \ref{Pro::Dual-SISO::identicaloutput}, i.e., the output of SOSD follows the identical distribution, which is fedback to PIC. Furthermore, the independence between elements $[\ell^{(u)}(t)]_1^n$ is guaranteed by deep interleavers between SOSD and PIC. Under Assumption \ref{Assum::PICbitsindependence}, once again, Assumption \ref{assum::Indep_Decoder_Input} is also validated to be true.
    
    We denote the $i$-th received superposed symbol in one received block as $r_i$, $1\leq 1\leq n$. Under Assumption \ref{Assum::PICbitsindependence}, we next omit the subscript $i$ and simply denote $r_i$ as 
    \begin{equation}\small\label{equ::DE::PIC::receive2}
        r= \mathbf{h} \mathbf{x} + w,
    \end{equation}
    where $\mathbf{x} = [x^{(1)}, x^{(2)}, \ldots, x^{(N_u)}]^{\top}$ is the symbols transmitted by all users in the time slot with respect to $r$, and $w$ is the AWGN variable. Accordingly, we omit $i$ in (\ref{equ::Receiver::PICLLR}) and rewrite it as
    \begin{equation}\small  \label{equ::DE::PICLLR}
        \ell^{(u)}(t) = 2h^{(u)} \frac{r - \sum_{j\neq u}h^{(j)}\mu^{(j)}(t)}{\sum_{j\neq u}(h^{(j)})^2(1-(\mu^{(j)}(t))^2) + \sigma^2},
    \end{equation}
    where
    \begin{equation}\small \label{equ::DE::meanX}
        \mu^{(j)}(t) = \tanh\left(\frac{1}{2}\epsilon^{(j)}(t-1)\right).
    \end{equation}
 
    Based on (\ref{equ::DE::PICLLR}) and (\ref{equ::DE::meanX}), we represent the process of DS-off phase by a BG depicted in Fig. \ref{Fig::BG-off}. Over the BG, the DS-off phase is summarized by the following two steps.
    \begin{itemize}
        \item \textit{Step 1}: (At iteration $t$) Cancellation node $\mathrm{CN}^{(u)}$ computes $\ell^{(u)}(t)$ by (\ref{equ::DE::PICLLR}). $\epsilon^{(u)}(t) \leftarrow \ell^{(u)}(t)$ is passed to its neighbour estimation node $\mathrm{EN}^{(j)}$ with $j=u$.
        \item \textit{Step 2}: Estimation node $\mathrm{EN}^{(j)}$ computes $\mu^{(j)}(t)$ by (\ref{equ::DE::meanX}), which is passed to its neighbour CNs, i.e., $\mathrm{CN}^{(u)}$ with $u\neq j$.
    \end{itemize}
    
    On the other hand, the BG of the DS-on phase is shown in Fig. \ref{Fig::BG-on}. The SOSD decoder of user $u$, represented by the decoding node $\mathrm{DN}^{(u)}$, further processes the outputs of $\mathrm{CN}^{(u)}$ before passing it to $\mathrm{EN}^{(u)}$. As shown in Proposition \ref{Pro::Dual-SISO::identicaloutput}, we regard that SOSD outputs $n$ variables $[\delta^{(u)}(t)]_1^n$ following the identical distribution. Thus, we omit the subscript of $[\delta^{(u)}(t)]_1^n$ and let $\delta^{(u)}(t)$ denote the output of $\mathrm{DN}^{(u)}$. Then, the DS-on process over BG can be summarized by the following three steps.
	
    \begin{itemize}
        \item \textit{Step 1}: (At iteration $t$) $\mathrm{CN}^{(u)}$ computes $\ell^{(u)}(t)$ by (\ref{equ::DE::PICLLR}), which is passed to its decoding node $\mathrm{DN}^{(u)}$.
        \item \textit{Step 2}: $\mathrm{DN}^{(u)}$ obtains $\delta^{(u)}(t)$ from $\ell^{(u)}(t)$ by SOSD. $\epsilon^{(u)}(t) \leftarrow \delta^{(u)}(t)$ is passed to $\mathrm{EN}^{(j)}$ with $j=u$.
        \item \textit{Step 3}: $\mathrm{EN}^{(j)}$ computes $\mu^{(j)}(t)$ by (\ref{equ::DE::meanX}), which is passed to its neighbour CNs, i.e., $\mathrm{CN}^{(u)}$ with $u\neq j$.
    \end{itemize}
            
       \begin{figure} 
             \centering
             \hspace{-0.81em}
             \begin{subfigure}[b]{0.4\columnwidth}
                 \centering
                 \includegraphics[scale = 0.7]{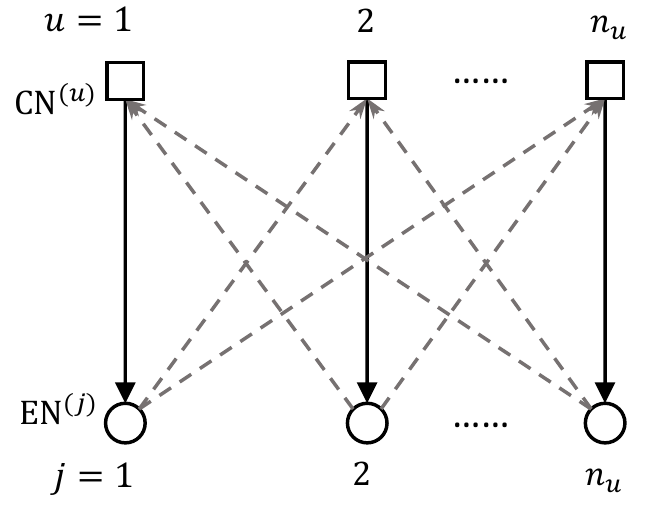}
                \vspace{-0.11em}
                \caption{BG of the DS-off phase}     
                \vspace{-0.11em}
                \label{Fig::BG-off}
             \end{subfigure}
             \hspace{-0.3em}
             \begin{subfigure}[b]{0.4\columnwidth}
                \centering
                \includegraphics[scale = 0.7]{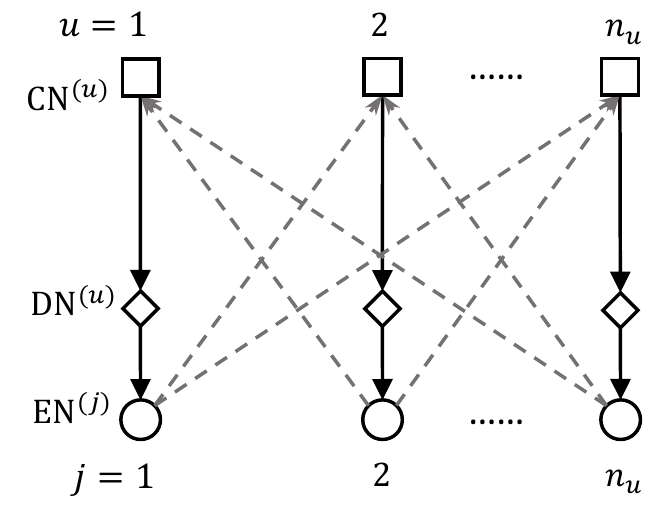}
                \vspace{-0.11em}
                 \caption{BG of the DS-on phase}
                 \vspace{-0.11em}
                \label{Fig::BG-on}
             \end{subfigure}
             \vspace{-0.11em}
             \caption{The BG representation of the iterative JD.}
             \vspace{-0.31em}
             \label{Fig::BG}
        \end{figure}
        
        We are particularly interested in the density of $\ell^{(u)}(t)$ and $\delta^{(u)}(t)$, i.e., the priori and extrinsic density.
    
    \vspace{-0.5em}
    \subsection{Density Evolution of the DS-off Phase}
    \vspace{-0.5em}
    When DS is off, the densities of $\ell^{(u)}(t)$ cannot be characterized by individually examining the density-transform features of cancellation nodes and estimation nodes, due to the correlation. Specifically, $\ell^{(u)}(t)$ and $\ell^{(j)}(t)$ can be correlated for $u\neq j$ with $t>1$, and $\ell^{(u)}(t_1)$ and $\ell^{(u)}(t_2)$ are also correlated for $t_1\neq t_2$. By this regard, we introduce a numerical approach determining the density of $\ell^{(u)}(t)$. We first rewrite (\ref{equ::DE::PICLLR}) as
    \begin{equation}\small  \label{equ::PICout::trans}
        \ell^{(u)}(t) =  2 \frac{\phi^{(u)}\left(\mathbf{x},\bm{\mu}(t)\right) + w}{\psi^{(u)}\left(\mathbf{x},\bm{\mu}(t)\right)},
    \end{equation}
     where $\bm{\mu}(t) = [\mu^{(1)}(t), \mu^{(2)}(t),\ldots, \mu^{(n_u)}(t)]$. Also, $\phi^{(u)}\left(\mathbf{x},\bm{\mu}(t)\right)$ and $\psi^{(u)}\left(\mathbf{x},\bm{\mu}(t)\right)$ are respectively given by
    \begin{equation}\small
        \phi^{(u)}\left(\mathbf{x},\bm{\mu}(t)\right) = h^{(u)} x^{(u)} - \sum_{j\neq u}\frac{h^{(j)}}{h^{(u)}}\left(x^{(j)}-\mu^{(j)}(t)\right)
    \end{equation}
    and
    \begin{equation}\small
        \psi^{(u)}\left(\mathbf{x},\bm{\mu}(t)\right) = \sum_{j\neq u}\frac{(h^{(j)})^2}{h^{(u)}} \left(1-\left(\mu^{(j)}(t)\right)^2\right) + \frac{\sigma^2}{h^{(u)}}.
    \end{equation}
    Then, we have the following recursive relationship regarding $\bm{\mu}(t)$
    \begin{equation}\small
        \bm{\mu}(t) = \tanh\left(\frac{\phi^{(u)}\left(\mathbf{x},\bm{\mu}(t-1)\right) + w}{\psi^{(u)}\left(\mathbf{x},\bm{\mu}(t-1)\right)}\right),
    \end{equation}
    which implies that $\bm{\mu}(t)$ is a deterministic function of $w$ when $\mathbf{x}$ is given. Thus, $\ell^{(u)}(t)$ is also a deterministic function of $w$. Therefore, we represent $ \ell^{(u)}(t)$ as a function $\Lambda_{t}^{(u)}(w;\mathbf{x})$ of $w$ parameterized by $\mathbf{x}$, i.e., 
    \begin{equation}\small
         \ell^{(u)}(t) = \Lambda_{t}^{(u)}(w;\mathbf{x})
    \end{equation}
    where $\Lambda_{t}^{(u)}(w;\mathbf{x})$ is recursively obtained by 
    \begin{equation}\small
        \Lambda_{t}^{(u)}(w;\mathbf{x}) = 2 \frac{\phi^{(u)}\left(\mathbf{x},\tanh\left(\frac{1}{2}\Lambda_{t-1}^{(u)}(w;\mathbf{x})\right)\right) + w}{\psi^{(u)}\left(\mathbf{x},\tanh\left(\frac{1}{2}\Lambda_{t-1}^{(u)}(w;\mathbf{x})\right)\right)},
    \end{equation}
    with $\Lambda_{1}^{(u)}(w;\mathbf{x}) = 2 \frac{\phi^{(u)}\left(\mathbf{x}, \mathbf{0}\right) + w}{\psi^{(u)}\left(\mathbf{x},\mathbf{0}\right)}$. 
    
    We denote the $\mathrm{pdf}$ of $w$ as $f_{W}(w)$, which follows $\mathcal{N}(0,\sigma^2)$. Then, the density of $\ell^{(u)}(t)$ parameterized by $\mathbf{x}$, denoted by $\mathcal{L}_t^{(u)}(\ell;\mathbf{x})$, is determined as
    \begin{equation}\small \label{equ::DSoff::PICDensity::Para}
        \mathcal{L}_{t}^{(u)}(\ell;\mathbf{x}) = \sum_{j} \left|\frac{\partial w_j}{\partial \Lambda_{t}^{(u)}(w_j;\mathbf{x})}\right|\cdot f_{W}(w_j),
    \end{equation}
    where $w_j$ is the $j$-th real solution of $\Lambda_{t}^{(u)}(w;\mathbf{x}) = \ell$ (considering that $\Lambda_{t}^{(u)}(w;\mathbf{x})$ may not be monotonic). Finally, the single-side density of $\ell^{(u)}(t)$ is obtained by deparameterizing $\mathbf{x}$, i.e.,
    \begin{equation}\small  \label{equ::DSoff::PICDensity::singleside}
        \mathcal{L}_{t}^{(u)}(\ell;x^{(u)}) = \frac{1}{|\mathcal{S}|^{(n_u-1)}} \sum_{\mathbf{x}\backslash x^{(u)}  \in \{\mathcal{S}\}^{n_u-1}}  \mathcal{L}_{t}^{(u)}(\ell;\mathbf{x}).
    \end{equation}
    where $\mathcal{S} = \{-1,1\}$ is the set of constellations of symbols transmitted, and $|\mathcal{S}|$ is its cardinality. We take $\dot{\mathcal{L}}_{t}^{(u)}(\ell) = \mathcal{L}_{t}^{(u)}(\ell|x^{(u)}=1) $ and the full density of $\ell^{(u)}(t)$ is directly given by $\mathcal{L}_{t}^{(u)}(\ell) = \frac{1}{2}\dot{\mathcal{L}}_{t}^{(u)}(\ell) + \frac{1}{2}\dot{\mathcal{L}}_{t}^{(u)}(-\ell)$.
    
    \begin{example}
        Consider a two-user NOMA system with $h^{(1)} = 1.225$, $h^{(2)} = 0.707$, and the multi-user SNR is $10$ dB, i.e., $\sigma^2 = 0.2$. We depict the single-side density $\dot{\mathcal{L}}_{t}^{(u)}(\ell)$ at different iterations of the DS-off phase in Fig. \ref{Fig::DE-off}. As shown, (\ref{equ::DSoff::PICDensity::singleside}) precisely describes the density of LLR $\ell^{(u)}(t)$ in the DS-off phase. When $t=1$, the density $\dot{\mathcal{L}}_1^{(u)}$, $u=1,2$, follows a mixture of Gaussian densities due to the inter-user interference. As the number of iteration increases, the quality of LLR $\ell^{(u)}(t)$ is improved, showing a lower BER (represented by the proportion of negative values of LLR). Thus, the DS-off phase can be performed before the DS-on phase to improve the quality of the decoder input. We note that the density $\dot{\mathcal{L}}_1^{u}$ is no more Gaussian for $t> 1$ because of the correlations regarding $\ell^{(u)}(t)$.

    \end{example}

           \begin{figure}
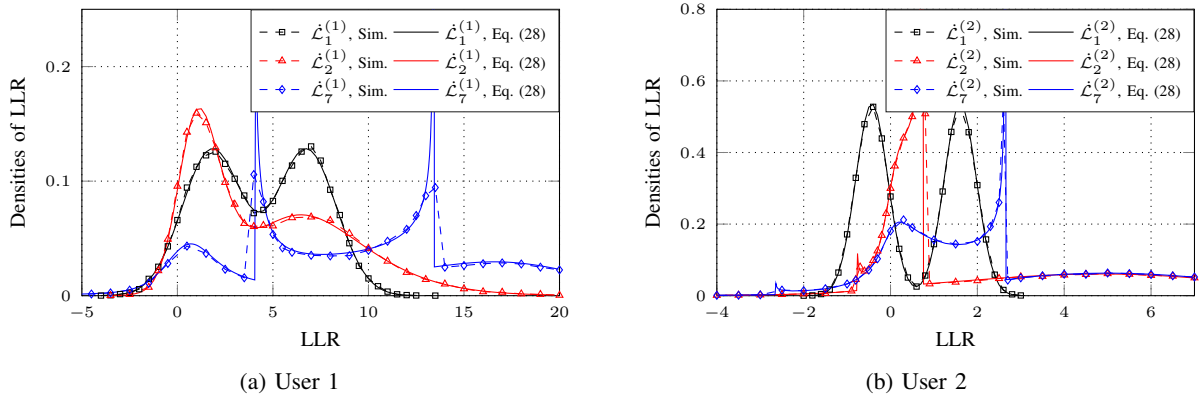
 
             \centering
             \hspace{-0.81em}
             \begin{subfigure}[b]{0.45\columnwidth}
                 \centering
                 
%
                \vspace{-0.11em}
                 \caption{User 2}
                 \vspace{-0.11em}
                \label{Fig::DE-off::user2}
             \end{subfigure}
             \vspace{-0.11em}
             \caption{Demonstration of the density $\dot{\mathcal{L}}_{t}^{(u)}$ at different iteration $t$.}
             \vspace{-0.11em}
             \label{Fig::DE-off}
        \end{figure}

      \vspace{-0.5em}  
    \subsection{Density Evolution of the DS-on Phase}
    \vspace{-0.5em}
     When DS is on and the BG in Fig. \ref{Fig::BG-on} is considered, the correlation of $\ell^{(u)}(t)$ can be removed by the decoding nodes and interleavers between PIC and SOSD, which is summarized in the following assumption.
    \begin{assumption} \label{Assum::Independence}
        In the DS-on phase, $\ell^{(u)}(t)$ is independent between different users $u$ and iterations $t$.
    \end{assumption}
    Similar assumptions were widely used in the analyses of iterative receivers \cite{liu2019capacity,wang2019near,yuan2014energy,yuan2014achievable} and the iterative turbo decoding \cite{ten2001convergence,divsalar2001iterative}. In the considered JD receiver, Assumption \ref{Assum::Independence} holds because the extrinsic density $\mathcal{D}(\ell)$ output by SOSD is independent to its corresponding priori density $\mathcal{L}(\ell)$. In other words, SOSD blocks the propagation of correlations between random variables. Under Assumption \ref{Assum::Independence}, the density-transform feature of cancellation nodes and estimation nodes can be individually characterized.
    
    \subsubsection{Cancellation Nodes} Recall (\ref{equ::PICout::trans}) and we obtain that 
    \begin{equation}\small  \label{equ::PIC::Density:Parameterized}
        \ell^{(u)}(t) \sim  \mathcal{N}\left(\frac{2\phi^{(u)}\left(\mathbf{x},\bm{\mu}(t)\right)}{\psi^{(u)}\left(\mathbf{x},\bm{\mu}(t)\right)} , \left(\frac{2\sigma}{\psi^{(u)}\left(\mathbf{x},\bm{\mu}(t)\right)}\right)^2 \right),
    \end{equation}
    which indicate that $\ell^{(u)}(t)$ follows a Gaussian distribution parameterized by $\mathbf{x}$ and $\bm{\mu}(t)$. We denote the Gaussian density of (\ref{equ::PIC::Density:Parameterized}) as $\mathcal{L}^{(u)}(\ell;\mathbf{x},\bm{\mu}(t))$, parameterized by $\mathbf{x}$ and $\bm{\mu}(t))$. Thus, $\bm{\mu}(t)$ uniquely determines $\mathcal{L}^{(u)}(\ell;\mathbf{x},\bm{\mu}(t))$ at different iteration $t$.
    
    Assume that the density of $\mu^{(u)}(t)$ is known as $\mathcal{J}_{t}^{(u)} (\mu;x^{(u)})$ parameterized by $x^{(u)}$. Thus, we can obtain $\mathcal{L}_t^{(u)}(\ell;\mathbf{x})$ by deparameterizing $\bm{\mu}(t)$ from $\mathcal{L}^{(u)}(\ell;\mathbf{x},\bm{\mu}(t))$ as follows.
    \begin{equation}\small 
        \mathcal{L}_{t}^{(u)}\left(\ell;\mathbf{x}\right) =  \underbrace{\int_{-1}^{1}\cdots \int_{-1}^{1}}_{n_u} \mathcal{L}^{(u)}\left(\ell;\mathbf{x},\bm{\mu}\right)\left(\prod_{j=1}^{n_u}  \mathcal{J}_{t}^{(j)} (\mu_j;x^{(j)})\right) \prod_{j=1}^{n_u} d \, \mu_j,
    \end{equation}
    where $\bm{\mu} = [\mu_1,\ldots, \mu_{n_u}]$. Finally, deparameterizing $\mathbf{x}$, we obtain
    \begin{equation}\small  \label{equ::DE::PIC::Densityl::deparamu}
        \mathcal{L}_{t}^{(u)}\left(\ell ; x^{(u)}\right) =  \frac{1}{|\mathcal{S}|^{(n_u-1)}} \sum_{\mathbf{x}\backslash x^{(u)}  \in \{\mathcal{S}\}^{n_u-1}} \underbrace{\int_{-1}^{1}\cdots \int_{-1}^{1}}_{n_u} \mathcal{L}^{(u)}\left(\ell;\mathbf{x},\bm{\mu}\right)\left(\prod_{j=1}^{n_u}  \mathcal{J}_{t}^{(j)} (\mu_j;x_j)\right) \prod_{j=1}^{n_u} d \, \mu_j .
    \end{equation}
    Let $\bm{\mathcal{J}}_{t}(\mu;\mathbf{x}) = \left[\mathcal{J}_{t}^{(1)} \left(\mu;x^{(1)}\right),\ldots, \mathcal{J}_{t}^{(n_u)} \left(\mu;x^{(n_u)}\right) \right]$, and we simply denote (\ref{equ::DE::PIC::Densityl::deparamu}) as
    \begin{equation}\small \label{equ::DE::CNtrans}
        \mathcal{L}_{t}^{(u)}\left(\ell; x^{(u)}\right) =  \Gamma_u\left(\bm{\mathcal{J}}_{t}(\mu;\mathbf{x})\right),
    \end{equation}
   where $\Gamma_u(\cdot)$ is referred to as the density-transform feature of the cancellation node $\mathrm{CN}_u$.

    \subsubsection{Decoding Nodes} The LLR $\ell^{(u)}(t)$ is passed to the decoding node $\mathrm{DN}_u$ to produce $\delta^{(u)}(t)$. Let $\mathcal{D}_{t}^{(u)}(\ell)$ denote the density of $\delta^{(u)}(t)$, and then $\mathcal{D}_{t}^{(u)}(\ell)$ can be determined by taking $\mathcal{L}(\ell) = \mathcal{L}_{t}^{(u)}(\ell)$ in (\ref{equ::dualOSD::transform}), i.e.,
    \begin{equation}\small   \label{equ::DE::DNtrans}
        \mathcal{D}_{t}^{(u)}\left(\ell;x^{(u)}\right) = \Delta\left(\mathcal{L}_{t}^{(u)}\left(\ell;x^{(u)} \right)\right),
    \end{equation}
     where $\Delta(\cdot)$ is the density-transform feature of the decoding nodes.

    \subsubsection{Estimation Nodes} 
    
     When estimation node $\mathrm{EN}_j$ receives the message output by $\mathrm{DN}_j$, e.g., $\delta^{(j)}(t)$, it computes $\mu^{(j)}(t+1) = \tanh\left(\frac{1}{2}\delta^{(j)}(t)\right)$. Thus, we directly obtain that 
    \begin{equation}\small \label{equ::DE::ENtrans}
    \begin{split}
        \mathcal{J}_{t+1}^{(j)}\left(\mu;x^{(u)}\right) = & 2\cosh\left(2\tanh^{-1}(\mu)\right)\cdot\mathcal{D}_{t}^{(j)}\left(2\tanh^{-1}(\mu);x^{(j)}\right) = \Theta\left(\mathcal{D}_{t}^{(j)}\left(\mu;x^{(j)}\right)\right),
    \end{split}
    \end{equation}
    where $\Theta(\cdot)$ is referred to as the density-transform feature of the estimation nodes.
    
    We note that $\Theta(\cdot)$ and $\Delta(\cdot)$ are respectively identical for each estimation node and each decoding node, while $\Gamma_u$ are specified for cancellation node $\mathrm{CN}_u$.
    
    \subsubsection{Density Evolution} For $\mathrm{CN}_u$, $1\leq u\leq n_u$, we simply denote
    \begin{equation}\small
        \bm{\mathcal{L}}_{t}\left(\ell;\mathbf{x}\right) =  \bm{\Gamma}\left(\bm{\mathcal{J}}_{t}(\mu;\mathbf{x})\right) = \left[\Gamma_1\left(\bm{\mathcal{J}}_{t}(\mu;\mathbf{x})\right),\Gamma_2\left(\bm{\mathcal{J}}_{t}(\mu;\mathbf{x})\right),\ldots, \Gamma_{n_u}\left(\bm{\mathcal{J}}_{t}(\mu;\mathbf{x})\right) \right],
    \end{equation}
    where $\bm{\mathcal{L}}_{t}\left(\ell;\mathbf{x}\right) = [\mathcal{L}_{t}^{(1)}(\ell;x^{(u)}),\ldots, \mathcal{L}_{t}^{(n_u)}(\ell;x^{(n_u)})]$. Combining (\ref{equ::DE::CNtrans}), (\ref{equ::DE::DNtrans}), and (\ref{equ::DE::ENtrans}), we have the following recursive relationship.
    \begin{equation}\small
       \bm{\mathcal{L}}_{t}\left(\ell;\mathbf{x}\right) = \bm{\Gamma}\left(\Theta\left(\Delta\left(\bm{\mathcal{L}}_{t-1}\left(\ell;\mathbf{x}\right)\right)\right)\right).
    \end{equation}
    Let $\left(f\circ g\right)(\cdot)$ denote the function composition $f(g(\cdot))$, and $f^{t}(\cdot) = (f\circ f^{t-1})(\cdot)$ for an integer $t > 1$. Thus, we have
    \begin{equation}\small \label{equ::DE::densityRecur}
        \bm{\mathcal{L}}_{t}\left(\ell;\mathbf{x}\right) = \left(\bm{\Gamma} \circ \Theta \circ \Delta\right)^t\left(\bm{\mathcal{L}}_{0}(\ell;\mathbf{x})\right) \ \ \text{and} \ \ \bm{\mathcal{L}}_{t}\left(\ell;\mathbf{x}\right) = \left(\bm{\Gamma} \circ \Theta \circ \Delta\right)^{t-t_1}\left(\bm{\mathcal{L}}_{t_1}(\ell;\mathbf{x})\right),
    \end{equation}
    for integers $t$ and $t_1$ with  $t > t_1 > 0$, where $\bm{\mathcal{L}}_{0}(\ell;\mathbf{x})$ is an initialized density. For the $u$-th entry, $\mathcal{L}_{t}^{(u)}\left(\ell;x^{(u)}\right)$, of $\bm{\mathcal{L}}_{t}\left(\ell;\mathbf{x}\right)$, we take the single-side density $\dot{\mathcal{L}}_{t}^{(u)}\left(\ell\right) = \mathcal{L}_{t}^{(u)}\left(\ell|x^{(u)}=1\right)$. 
    
    \vspace{-0.5em}
    \subsection{Density Evolution of the iterative JD}
    \vspace{-0.5em}
    As discussed in Section \ref{sec::Preliminaries}, the iterative OSD-based JD is composed of the DS-off and DS-on phases. In the DS-off phase, the density $\bm{\mathcal{L}}_{t}(\ell)$  can be numerically computed by (\ref{equ::DSoff::PICDensity::Para}), whilst in the DS-on phase, $\bm{\mathcal{L}}_{t}(\ell)$ is determined by (\ref{equ::DE::densityRecur}). Let us consider the iterative JD which has $t_{\mathrm{off}}$ iterations for the DS-off phase, and DS is turned on at iteration $t_{\mathrm{off}}+1$. In this case, each entry of $\bm{\mathcal{L}}_{t}\left(\ell;\mathbf{x}\right)$ is computed according to (\ref{equ::DSoff::PICDensity::Para}) for $t\leq t_{\mathrm{off}} + 1$. Then, for $t > t_{\mathrm{off}} + 1$, $\bm{\mathcal{L}}_{t}\left(\ell;\mathbf{x}\right)$ is obtained as
    \begin{equation}\small
        \bm{\mathcal{L}}_{t}\left(\ell;\mathbf{x}\right) = \left(\bm{\Gamma} \circ \Theta \circ \Delta\right)^{t-t_{\mathrm{off}}-1}\left(\bm{\mathcal{L}}_{t_{\mathrm{off}}+1}(\ell;\mathbf{x})\right).
    \end{equation}
    
     Next, we give numerical examples of DE of the iterative JD
    
    \begin{example} \label{exam::64DEDS-2}
        Consider a two-user NOMA transmission with $h^{(1)} = 1.225$, $h^{(2)} = 0.707$ and multi-user SNR = $8$ dB, i.e., $\sigma^2 = 0.317$. The $(64,30,14)$ eBCH code is employed and the order-3 SOSD is applied in the DS-on phase. We perform $t_{\mathrm{off}}=2$ iterations in the DS-off phase and then turn on DS at the third iteration. In Fig. \ref{Fig::DE-on-2user}, we depict densities $\dot{\mathcal{L}}_{t}^{(u)}(\ell)$ and $\dot{\mathcal{D}}_{t}^{(u)}(\ell)$ for user $u= 1$ and $u= 2$ at different iterations. We demonstrate only the DS-on phase ($t\geq 3$) for the sake of clarity.
        
        As shown, the user $u=1$, with a higher receiving power, converges at iteration $t=5$ in terms of the density of LLR. Upon the convergence, the extrinsic LLR, $\delta^{(1)}(5)$, has the mean value $\mathbb{E}[\delta^{(1)}(5)] = 76$. On the other hand, the user $u=2$ rapidly converges at iteration $t=4$ with $\mathbb{E}[\delta^{(2)}(4)] = 13$. Furthermore, despite that $\ell^{(u)}(t)$ follows a non-Gaussian distribution at the beginning of the DS-on phase (as shown by $\dot{\mathcal{L}}_3^{(1)}$ and $\dot{\mathcal{L}}_3^{(2)}$ in Fig. \ref{Fig::DE-on-2user}), we can observe that both $\ell^{(u)}(t)$ and $\delta^{(u)}(t)$ tend to be Gaussian after a few iterations in the DS-on phase. This observation is consistent with the EXIT-analysis of iterative and turbo decoders for concatenated codes \cite{ten2001convergence,divsalar2001iterative}, i.e., after a few decoding iterations, the extrinsic information passed between decoders tends to be Gaussian.
    \end{example}
    
           \begin{figure}
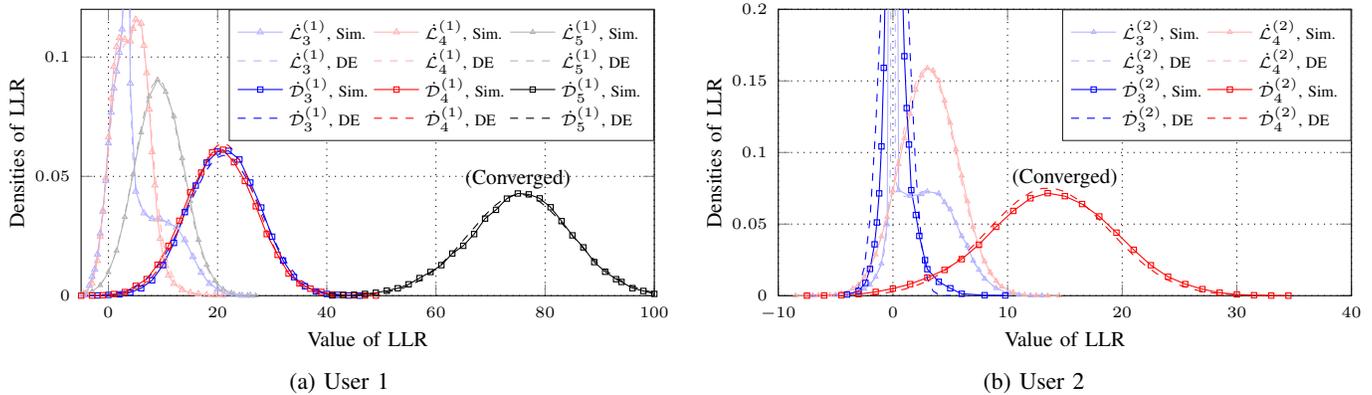
 
             \centering
             \hspace{-0.81em}
             \begin{subfigure}[b]{0.5\columnwidth}
                 \centering
                 
%
                \vspace{-0.11em}
                 \caption{User 2}
                 \vspace{-0.11em}
                \label{Fig::DE-on-2user::user2}
             \end{subfigure}
             \vspace{-0.11em}
             \caption{The densities $\dot{\mathcal{L}}_{t}^{(u)}$ and $\dot{\mathcal{D}}_{t}^{(u)}$ of different users in the two-user system.}
             \vspace{-0.31em}
             \label{Fig::DE-on-2user}
        \end{figure}
        
        \begin{example} \label{exam::64DEDS-3}
            Consider a three-user NOMA transmission with $h^{(1)} = 1.4411$, $h^{(2)} = 0.8320$, $h^{(3)} = 0.4804$ and multi-user SNR = $12$ dB, i.e., $\sigma^2 = 0.1893$. Specifically, the $(64,30,14)$ eBCH code and the order-3 SOSD are applied, and the DS-off phase has $t_{\mathrm{off}}=2$ iterations. In Fig. \ref{Fig::DE-on-3user}, we demonstrate the density $\dot{\mathcal{D}}_{t}^{(u)}$ of different users in the DS-on phase $(t\geq 3)$. We note that densities $\dot{\mathcal{D}}_3^{(3)}$ and $\dot{\mathcal{D}}_4^{(3)}$ obtained by DE slightly deviates from the simulation, because (\ref{equ::dualOSD::Extrinsicdensity}) overlooks the correlation between $V_0$ and $V_1$. However, the correlation between $V_0$ and $V_1$ may not be negligible for user $u=3$, who is subjected to the high-power interference (recall the trend shown in Fig. \ref{Fig::nc}). We also notice that the deviation of $\dot{\mathcal{D}}_3^{(3)}$ and $\dot{\mathcal{D}}_4^{(3)}$ lead to the deviation of $\dot{\mathcal{D}}_4^{(2)}$ and $\dot{\mathcal{D}}_6^{(2)}$ of user $u=2$ in the subsequent iterations; nevertheless, DE of user $u=1$ is consistent with the simulation results, because user $u=1$ is is less susceptible to interference from user $u=3$.
        \end{example}
        
        \begin{figure} 
             \centering
             \hspace{-0.81em}
             \begin{subfigure}[b]{0.33\columnwidth}
                 \centering
                 
                \begin{tikzpicture}
                
                \begin{axis}[%
                width=1.7in,
                height=1.5in,
                at={(0.785in,0.587in)},
                scale only axis,
                xmin=-10,
                xmax=250,
                xlabel style={at={(0.5,1ex)},font=\color{white!15!black},font=\scriptsize},
                xlabel={Value of LLR},
                ymin= 0,
                ymax=0.07,
                yminorticks=true,
                ylabel style={at={(3ex,0.5)},font=\color{white!15!black},font=\scriptsize},
                ylabel={Densities of LLR},
                axis background/.style={fill=white},
                tick label style={font=\tiny},
                xmajorgrids,
                ymajorgrids,
                yminorgrids,
                minor grid style={dotted},
                major grid style={dotted,black},
                legend style={at={(1,1)}, anchor=north east, legend cell align=left, align=left, draw=white!15!black,font = \tiny,row sep=-1pt,legend columns=2}
                ]
                
               \addplot [color=black]
                  table[row sep=crcr]{%
                -10	10\\
                };
                \addlegendentry{$\dot{\mathcal{D}}_{t}^{(1)}$, Sim.}
                
               \addplot [color=red, dashed]
                  table[row sep=crcr]{%
                -10	10\\
                };
                \addlegendentry{$\dot{\mathcal{D}}_{t}^{(1)}$, DE}
                
               \addplot [only marks, mark=square, mark size = 1pt, mark options={solid, black}]
                  table[row sep=crcr]{%
                -10	10\\
                };
                \addlegendentry{$t=3$}
                
               \addplot [only marks, mark=triangle, mark size = 1.3pt, mark options={solid, black}]
                  table[row sep=crcr]{%
                -10	10\\
                };
                \addlegendentry{$t=4$}
                
               \addplot [only marks, mark=diamond, mark size = 1.3pt, mark options={solid, black}]
                  table[row sep=crcr]{%
                -10	10\\
                };
                \addlegendentry{$t=5$}
                
                   \addplot [only marks, mark=o, mark size = 1.3pt, mark options={solid, black}]
                  table[row sep=crcr]{%
                -10	10\\
                };
                \addlegendentry{$t=6$}

               \addplot [color=black, mark=square, mark size = 1pt, mark options={solid, black}]
                  table[row sep=crcr]{%
                -6	2.08494047494944e-05\\
                -3	4.16988094989888e-05\\
                1	0.000437837499739382\\
                4	0.00228300982006964\\
                8	0.00878802410191189\\
                9	0.0120613806475825\\
                10	0.0155745053478723\\
                11	0.020328169630757\\
                12	0.0243625294497842\\
                13	0.0294706336134103\\
                14	0.0357671538477576\\
                15	0.0423242916414736\\
                16	0.04763046515022\\
                17	0.0528532410399683\\
                18	0.0574713841919813\\
                19	0.0603069032379126\\
                20	0.0640285219856973\\
                22	0.0634343139503367\\
                23	0.061568292225257\\
                24	0.0576173300252278\\
                25	0.0494756374705502\\
                26	0.0425849092008423\\
                27	0.0356837562287597\\
                28	0.0304088568271376\\
                29	0.0245084752830307\\
                30	0.020328169630757\\
                31	0.0162312615974814\\
                33	0.010143235410629\\
                37	0.00182432291558076\\
                41	0.000427412797364635\\
                45	0.000104247023747472\\
                49	1.04247023747472e-05\\
                };
                
                \addplot [color=red, dashed , mark=square, mark size = 1pt, mark options={solid, red}]
                  table[row sep=crcr]{%
                -6	5.10288947263409e-05\\
                -3	0.000212642657438536\\
                0	0.000742899469764047\\
                3	0.00220839304549995\\
                6	0.00559737675465026\\
                9	0.0120950782011555\\
                10	0.0150918196446045\\
                11	0.0184996905604976\\
                12	0.0222778297752929\\
                13	0.0263549265267046\\
                14	0.0306284805886201\\
                15	0.0349669045657738\\
                16	0.0392147535974831\\
                17	0.0432010054055837\\
                18	0.0467499032498419\\
                19	0.0496934768067777\\
                21	0.0532087775726943\\
                23	0.0530194052047204\\
                25	0.0491497723611433\\
                26	0.0460527226973413\\
                27	0.0423728575364517\\
                28	0.0382821018235521\\
                29	0.0339590003888168\\
                30	0.0295759888159278\\
                31	0.0252884220068439\\
                32	0.0212262780501595\\
                33	0.0174890428029429\\
                36	0.00874523835701243\\
                39	0.0036904702429419\\
                42	0.00131150976787941\\
                45	0.000391648013792028\\
                48	9.80657811947587e-05\\
                };

                \addplot [color=black, mark=triangle, mark size = 1.3pt, mark options={solid, black}]
                  table[row sep=crcr]{%
                -6	1.04247023747472e-05\\
                -1	0.00018764464274545\\
                2	0.000406563392615141\\
                5	0.00191814523695348\\
                7	0.00422200446177262\\
                9	0.00698455059108062\\
                11	0.0121864770760795\\
                12	0.01484477618164\\
                13	0.0180138857035632\\
                14	0.0220482455225903\\
                15	0.0265100181389821\\
                16	0.0317015199216062\\
                17	0.0369034464066051\\
                18	0.0431165690219544\\
                19	0.0472134770552301\\
                20	0.0515397285407502\\
                21	0.0570126972874924\\
                23	0.0583887580009591\\
                25	0.0576694535371015\\
                26	0.0551779496695369\\
                27	0.0498613514584159\\
                28	0.0449408919375352\\
                29	0.0399370347976565\\
                30	0.0356316327168859\\
                31	0.029939745220274\\
                32	0.0254362737943832\\
                33	0.0208077059399954\\
                34	0.0161895627879824\\
                36	0.0103308800533745\\
                38	0.00608802618685237\\
                40	0.00284594374830599\\
                42	0.00154285595146259\\
                46	0.00026061755936868\\
                50	4.16988094989888e-05\\
                };
                
                \addplot [color=red, dashed , mark=triangle, mark size = 1.3pt, mark options={solid, red}]
                  table[row sep=crcr]{%
                -6	8.32125976962427e-06\\
                -2	6.93226026024315e-05\\
                0	0.000179356936470746\\
                3	0.00065104406246182\\
                5	0.00140375066090362\\
                6	0.0020056498044333\\
                7	0.00281386784758833\\
                8	0.00387646279007502\\
                9	0.00524385495461204\\
                10	0.00696543969464583\\
                11	0.00908507963367465\\
                12	0.0116356474555951\\
                13	0.0146329844646016\\
                14	0.0180698371667433\\
                15	0.0219104989496493\\
                16	0.0260869776744858\\
                17	0.0304974974181458\\
                18	0.0350080007561403\\
                19	0.0394570439758845\\
                20	0.0436640934490603\\
                21	0.0474407838842438\\
                22	0.0506042541887768\\
                23	0.0529913081674388\\
                25	0.0549605995093529\\
                27	0.0528854518367713\\
                28	0.0504212792536471\\
                29	0.0471567338144426\\
                30	0.0432545822697601\\
                31	0.0389020886799084\\
                32	0.0342963249725685\\
                33	0.0296295944378563\\
                34	0.0250764359708837\\
                35	0.0207833966540178\\
                36	0.0168623464325296\\
                37	0.013387626439457\\
                38	0.0103968443279352\\
                40	0.00585913435362532\\
                43	0.00207941332830278\\
                46	0.000592173568987741\\
                50	7.75045386793033e-05\\
                };
                
                \addplot [color=black, mark=diamond, mark size = 1.3pt, mark options={solid, black}]
                  table[row sep=crcr]{%
                45	1.04247023747472e-05\\
                49	8.33976189979776e-05\\
                53	0.00026061755936868\\
                57	0.000990346725600984\\
                63	0.00322123303379689\\
                65	0.00479536309238371\\
                67	0.00714092112670183\\
                70	0.0114984467193462\\
                72	0.0148135020745158\\
                74	0.0180660092154369\\
                76	0.0228092487959469\\
                78	0.0282196693284407\\
                80	0.030679899088881\\
                82	0.0340262285511749\\
                84	0.0368825970018556\\
                86	0.039259429143298\\
                88	0.0386652211079374\\
                90	0.0366219794424869\\
                92	0.0324103996830891\\
                94	0.0289598231970477\\
                96	0.0247273940329004\\
                98	0.0206930342138732\\
                100	0.0163563580259784\\
                102	0.0133331943373017\\
                104	0.010143235410629\\
                106	0.00666138481746346\\
                108	0.00478493839000897\\
                111	0.00249150386756458\\
                114	0.00116756666597169\\
                117	0.000385713987865646\\
                120	0.000218918749869691\\
                123	6.25482142484832e-05\\
                };
                
                \addplot [color=red, dashed , mark=diamond, mark size = 1.3pt, mark options={solid, red}]
                  table[row sep=crcr]{%
                45	1.03651631391333e-05\\
                48	3.49018267308296e-05\\
                52	0.000151696954982658\\
                56	0.00055580991647319\\
                60	0.00171667754570535\\
                62	0.00282974580171834\\
                64	0.00446947490164221\\
                66	0.00676417776548192\\
                68	0.00980893840517877\\
                70	0.0136293939790783\\
                72	0.0181458655963629\\
                74	0.0231485880699845\\
                76	0.0282954150026557\\
                78	0.033139919580452\\
                80	0.0371903045446507\\
                82	0.0399898714730375\\
                84	0.0412013420959386\\
                86	0.0406736265114097\\
                88	0.0384727926295648\\
                90	0.0348684721322728\\
                92	0.0302795648806716\\
                94	0.0251943390314573\\
                96	0.0200859354496446\\
                98	0.0153432034559687\\
                100	0.011229852529264\\
                104	0.00529162120109448\\
                108	0.00210150002280394\\
                112	0.000703378951704656\\
                115	0.000276651937640107\\
                119	6.86415138560301e-05\\
                123	1.43529651845644e-05\\
                };
                
                \addplot [color=black, mark=o, mark size = 1pt, mark options={solid, black}]
                  table[row sep=crcr]{%
                110	0.000114671726122219\\
                116	0.00018764464274545\\
                124	0.000761003273356546\\
                132	0.00263744970081104\\
                136	0.00394053749765444\\
                140	0.00620269791297458\\
                144	0.00806871963805433\\
                148	0.0120718053499573\\
                152	0.0147509538602673\\
                156	0.0187436148697955\\
                160	0.0226841523674499\\
                164	0.0242687071284115\\
                168	0.0235806767716782\\
                172	0.0229239205220691\\
                176	0.0207243083209974\\
                180	0.018514271417551\\
                184	0.0140003752892855\\
                188	0.0116235431478431\\
                192	0.0083918854116715\\
                196	0.00565018868711298\\
                200	0.00423242916414736\\
                204	0.00219961220107166\\
                208	0.00115714196359694\\
                216	0.000271042261743427\\
                224	3.12741071242416e-05\\
                };
                
                \addplot [color=red, dashed, mark=o, mark size = 1pt, mark options={solid, red}]
                  table[row sep=crcr]{%
                102	2.11418138061165e-05\\
                106	4.8391995080065e-05\\
                110	0.000105127210332816\\
                114	0.000216753980629532\\
                118	0.000424159650831471\\
                122	0.000787774562049226\\
                126	0.00138862468771482\\
                130	0.00232315475212773\\
                134	0.00368877127896025\\
                138	0.005558986818941\\
                142	0.00795096588192716\\
                146	0.0107933053716017\\
                150	0.0139059085315673\\
                154	0.0170041372343622\\
                158	0.0197342289931495\\
                162	0.0217368222766661\\
                166	0.0227238688863777\\
                170	0.0225464839219808\\
                174	0.0212317456402784\\
                178	0.0189759230348948\\
                182	0.0160964626765129\\
                186	0.0129589051966245\\
                190	0.00990185375208241\\
                194	0.0071808363879479\\
                198	0.00494246821180653\\
                202	0.0032286656036592\\
                206	0.00200176263463618\\
                210	0.0011779108408262\\
                214	0.000657843500727795\\
                218	0.000348692886003892\\
                222	0.000175417902920271\\
                226	8.37558348147048e-05\\
                230	3.79547869271992e-05\\
                };
                
                 \node[] at (axis cs: 170,0.03) {\scriptsize (Converged)};
          
                \end{axis}
                \end{tikzpicture}%

                \vspace{-0.11em}
                \caption{User 1}     
                \vspace{-0.11em}
                \label{Fig::DE-on-3user::user1}

             \end{subfigure}
             \begin{subfigure}[b]{0.33\columnwidth}
                 \centering
                 
                \begin{tikzpicture}
                
                \begin{axis}[%
                width=1.7in,
                height=1.5in,
                at={(0.785in,0.587in)},
                scale only axis,
                xmin=-10,
                xmax=80,
                xlabel style={/pgf/number format/fixed,
                /pgf/number format/precision = 0,
                at={(0.5,1ex)},font=\color{white!15!black},font=\scriptsize},
                xlabel={Value of LLR},
                ymin= 0,
                ymax=0.15,
                yminorticks=true,
                ylabel style={
                at={(2ex,0.5)},font=\color{white!15!black},font=\scriptsize},
                ylabel={Densities of LLR},
                yticklabel style={
                        /pgf/number format/fixed,
                        /pgf/number format/precision=5
                },
                axis background/.style={fill=white},
                tick label style={font=\tiny},
                xmajorgrids,
                ymajorgrids,
                yminorgrids,
                minor grid style={dotted},
                major grid style={dotted,black},
                legend style={at={(1,1)}, anchor=north east, legend cell align=left, align=left, draw=white!15!black,font = \tiny,row sep=-1pt,legend columns=2}
                ]
                
               \addplot [color=black]
                  table[row sep=crcr]{%
                -10	10\\
                };
                \addlegendentry{$\dot{\mathcal{D}}_{t}^{(2)}$, Sim.}
                
               \addplot [color=red, dashed]
                  table[row sep=crcr]{%
                -10	10\\
                };
                \addlegendentry{$\dot{\mathcal{D}}_{t}^{(2)}$, DE}
                
               \addplot [only marks, mark=square, mark size = 1pt, mark options={solid, black}]
                  table[row sep=crcr]{%
                -10	10\\
                };
                \addlegendentry{$t=3$}
                
               \addplot [only marks, mark=triangle, mark size = 1.3pt, mark options={solid, black}]
                  table[row sep=crcr]{%
                -10	10\\
                };
                \addlegendentry{$t=4$}
                
               \addplot [only marks, mark=diamond, mark size = 1.3pt, mark options={solid, black}]
                  table[row sep=crcr]{%
                -10	10\\
                };
                \addlegendentry{$t=5$}
                
                   \addplot [only marks, mark=o, mark size = 1.3pt, mark options={solid, black}]
                  table[row sep=crcr]{%
                -10	10\\
                };
                \addlegendentry{$t=6$}
                
                \addplot [color=black, mark=square, mark size = 1pt, mark options={solid, black}]
                  table[row sep=crcr]{%
                -5.4	5.19167670390829e-05\\
                -3.6	0.000674917971508078\\
                -2.8	0.00269967188603231\\
                -2.6	0.00612617851061179\\
                -2.4	0.00939693483407401\\
                -2.2	0.0133426091290443\\
                -2	0.0205590397474768\\
                -1.8	0.0285542218714956\\
                -1.6	0.0439215849150642\\
                -1.4	0.0610022012709225\\
                -1.2	0.0972920214312414\\
                -1	0.142615359056361\\
                -0.8	0.203306059725049\\
                -0.6	0.295873655355734\\
                -0.4	0.421096897454002\\
                -0.2	0.601144245545541\\
                -0	0.73991776384101\\
                0.2	0.596264069443868\\
                0.4	0.444667109689745\\
                0.6	0.319028533455165\\
                0.8	0.226201353989284\\
                1	0.167691157536238\\
                1.2	0.129324666694356\\
                1.4	0.0974477717323587\\
                1.6	0.0746563110022013\\
                2	0.0432985837105952\\
                2.4	0.0224280433608838\\
                2.8	0.0158346139469203\\
                3.2	0.00944885160111309\\
                3.6	0.00680109648211987\\
                4.4	0.00207667068156332\\
                5.6	0.000986418573742576\\
                6.8	0.000207667068156332\\
                8.2	5.19167670390829e-05\\
                };
                
                \addplot [color=red, dashed, mark=square, mark size = 1pt, mark options={solid, red}]
                  table[row sep=crcr]{%
                -3.5	7.21241672397167e-05\\
                -2.5	0.00528328628266846\\
                -2.2	0.0145106242178717\\
                -2	0.026501328165214\\
                -1.8	0.045719069184544\\
                -1.7	0.0587801003145045\\
                -1.6	0.0745032665925324\\
                -1.5	0.0930962859900209\\
                -1.4	0.114683633698685\\
                -1.3	0.139278019682869\\
                -1.2	0.16675381180739\\
                -1.1	0.19682533804645\\
                -1	0.229033120698339\\
                0	0.476139070750745\\
                1	0.23810813215682\\
                1.1	0.205420857297379\\
                1.2	0.174713673533893\\
                1.3	0.146494493999291\\
                1.4	0.121095431073838\\
                1.5	0.0986838902996611\\
                1.6	0.0792824048360714\\
                1.7	0.0627941833607495\\
                1.8	0.0490313744373665\\
                1.9	0.037743378032165\\
                2	0.0286430667556087\\
                2.5	0.00582229207727113\\
                3	0.000828838051139531\\
                4	5.76932129220381e-06\\
                };
                
                \addplot [color=red, dashed, mark=diamond, mark size = 1.3pt, mark options={solid, red}]
                  table[row sep=crcr]{%
                -4	8.4486549388803e-05\\
                -1	0.000723191741952838\\
                2	0.00415771331309267\\
                3.5	0.00858684341927515\\
                5.5	0.0193455365195147\\
                6.5	0.0271732365233783\\
                7.5	0.0365165710579423\\
                8.5	0.046948490273293\\
                9.5	0.0577467523380748\\
                10.5	0.0679505664191961\\
                11.5	0.0764881583293556\\
                12.5	0.0823556027831691\\
                13.5	0.0848071343771016\\
                14.5	0.0835078521149889\\
                15.5	0.0786063808582881\\
                16.5	0.0707064844326226\\
                17.5	0.0607454796190671\\
                19	0.044317028067774\\
                21	0.0246054342411934\\
                24	0.00697433374374016\\
                27	0.00121627313719153\\
                30	0.000125760123014279\\
                };
                                
                \addplot [color=black, mark=diamond, mark size = 1.3pt, mark options={solid, black}]
                  table[row sep=crcr]{%
                -7	0.000103833534078166\\
                -5	0.000259583835195415\\
                -3	0.00160941977821157\\
                -1	0.00363417369273581\\
                1	0.00773559828882336\\
                3	0.0147962786061386\\
                5	0.0225566308094862\\
                7	0.0298002242804336\\
                9	0.0410142459608755\\
                11	0.0493209286871288\\
                12	0.0532666029820991\\
                14	0.0600157826971799\\
                16	0.0608464509698052\\
                18	0.0559167670390829\\
                20	0.0450118370228849\\
                22	0.0310462266893716\\
                24	0.0211820409519458\\
                26	0.0139136935664742\\
                28	0.00721643061843253\\
                30	0.00332267309050131\\
                32	0.00171325331228974\\
                36	0.000155750301117249\\
                };
                
                \addplot [color=red, dashed, mark=triangle, mark size = 1.3pt, mark options={solid, red}]
                  table[row sep=crcr]{%
                -7	9.37166815574403e-05\\
                -4	0.000816932220351794\\
                -2	0.00275195491792541\\
                0	0.00771865513126848\\
                1	0.0120686590176054\\
                2	0.0180255234497085\\
                3	0.0257174655287049\\
                4	0.0350493260908202\\
                5	0.0456291048050002\\
                6	0.0567432167615231\\
                7	0.0674051978949165\\
                8	0.0764848521092574\\
                9	0.0828992105450875\\
                10	0.0858222138446283\\
                11	0.0848573467735712\\
                12	0.0801236027512593\\
                13	0.0722295082067498\\
                14	0.0621436973586219\\
                15	0.0510005810771295\\
                16	0.0398948367834345\\
                17	0.0297146913970251\\
                18	0.0210454062160483\\
                19	0.0141501380973142\\
                22	0.00308482605104165\\
                26	0.000170885599020551\\
                30	3.10833295939563e-06\\
                };
                
                \addplot [color=black, mark=triangle, mark size = 1.3pt, mark options={solid, black}]
                  table[row sep=crcr]{%
                -5	0.000207667068156332\\
                -3	0.000415334136312664\\
                -1	0.00129791917597707\\
                1	0.00321883955642314\\
                3	0.00799518212401877\\
                5	0.0138098600323961\\
                7	0.0244527972754081\\
                9	0.0403912447564065\\
                10	0.0467770071022137\\
                11	0.0537338538854508\\
                12	0.0601084437429912\\
                13	0.0681147983552768\\
                14	0.075123561905553\\
                16	0.0748639780703576\\
                18	0.0667649624122607\\
                19	0.0562777754703659\\
                20	0.049476678988246\\
                21	0.0388856585122731\\
                22	0.0329671470698177\\
                23	0.0268409685592059\\
                24	0.0208705403497113\\
                26	0.00934501806703493\\
                28	0.00332267309050131\\
                30	0.00129791917597707\\
                32	0.000155750301117249\\
                35	5.19167670390829e-05\\
                };

                \addplot [color=black, mark=o, mark size = 1.1pt, mark options={solid, black}]
                  table[row sep=crcr]{%
                20	0.000103833534078166\\
                22	0.00088258503966441\\
                25	0.00114216887485982\\
                27	0.00197283714748515\\
                29	0.00399759106200939\\
                31	0.00586659467541637\\
                33	0.00867010009552685\\
                35	0.0144328612368651\\
                37	0.0191572870374216\\
                39	0.026321800888815\\
                41	0.0359264027910454\\
                43	0.0424679154379698\\
                45	0.0467250903351746\\
                47	0.0498923412385264\\
                49	0.049253677783777\\
                51	0.0479191759770736\\
                53	0.0431428334094779\\
                55	0.0340573991776384\\
                57	0.0251277152469161\\
                59	0.0193130373385389\\
                61	0.0126157743904972\\
                63	0.00778751505586244\\
                65	0.00410142459608755\\
                67	0.00197283714748515\\
                69	0.000778751505586244\\
                71	0.000311500602234498\\
                73	0.000311500602234498\\
                75	0.000103833534078166\\
                };
                
                \addplot [color=red, dashed, mark=o, mark size = 1.1pt, mark options={solid, red}]
                  table[row sep=crcr]{%
                20	4.18291834319194e-05\\
                24	0.000302366920333698\\
                28	0.00159794947367527\\
                32	0.00616666992114503\\
                36	0.0173473932257903\\
                38	0.0258238292025454\\
                40	0.0354865919241084\\
                42	0.0449974862636308\\
                44	0.0526258483721815\\
                46	0.0567399485992202\\
                48	0.0563686663520097\\
                50	0.0515722830169652\\
                52	0.043429944668128\\
                54	0.0336447934717277\\
                56	0.0239642157198517\\
                58	0.0156851572644572\\
                61	0.00707983450394056\\
                64	0.00263431376952648\\
                68	0.000520579622340604\\
                72	7.2510408393817e-05\\
                76	7.09684634676868e-06\\
                80	4.86764110150935e-07\\
                };
                
                 \node[] at (axis cs: 50,0.065) {\scriptsize (Converged)};
          
                \end{axis}
                \end{tikzpicture}%

                \vspace{-0.11em}
                \caption{User 2}     
                \vspace{-0.11em}
                \label{Fig::DE-on-3user::user2}

             \end{subfigure}             
             \begin{subfigure}[b]{0.33\columnwidth}
                \centering
                \begin{tikzpicture}
                
                \begin{axis}[%
                width=1.7in,
                height=1.5in,
                at={(0.642in,0.505in)},
                scale only axis,
                xmin=-7,
                xmax=25,
                xlabel style={at={(0.5,1ex)},font=\color{white!15!black},font=\scriptsize},
                xlabel={Value of LLR},
                ymin=0,
                ymax=0.2,
                ylabel style={at={(2ex,0.5)},font=\color{white!15!black},font=\scriptsize},
                ylabel={Densities of LLR},
                yticklabel style={
                        /pgf/number format/fixed,
                        /pgf/number format/precision=5
                },
                axis background/.style={fill=white},
                tick label style={font=\tiny},
                xmajorgrids,
                ymajorgrids,
                minor grid style={dotted},
                major grid style={dotted,black},
                legend style={at={(1,1)}, anchor=north east, legend cell align=left, align=left, draw=white!15!black,font = \tiny,row sep=-2pt,legend columns=1}
                ]
                
                \addplot [color=black]
                  table[row sep=crcr]{%
                -10	10\\
                };
                \addlegendentry{$\dot{\mathcal{D}}_{t}^{(3)}$, Sim.}
                
               \addplot [color=red, dashed]
                  table[row sep=crcr]{%
                -10	10\\
                };
                \addlegendentry{$\dot{\mathcal{D}}_{t}^{(3)}$, DE}
                
               \addplot [only marks, mark=square, mark size = 1pt, mark options={solid, black}]
                  table[row sep=crcr]{%
                -10	10\\
                };
                \addlegendentry{$t=3$}
                
               \addplot [only marks, mark=triangle, mark size = 1.3pt, mark options={solid, black}]
                  table[row sep=crcr]{%
                -10	10\\
                };
                \addlegendentry{$t=4$}
                
               \addplot [only marks, mark=diamond, mark size = 1.3pt, mark options={solid, black}]
                  table[row sep=crcr]{%
                -10	10\\
                };
                \addlegendentry{$t=5$}

                \addplot [color=black, mark=square, mark size = 1pt, mark options={solid, black}]
                  table[row sep=crcr]{%
                -2.6	0.000104488840591825\\
                -1.6	0.000104488840591825\\
                -1.2	0.000104488840591825\\
                -0.8	0.00020897768118365\\
                -0.5	0.00700075231965226\\
                -0.4	0.0355262058012204\\
                -0.3	0.169794365961715\\
                -0.2	0.727660285881468\\
                -0.1	2.24389785170944\\
                0	3.58480314302433\\
                0.1	2.26782579620497\\
                0.2	0.744796455738527\\
                0.3	0.171675165092368\\
                0.4	0.0347947839170777\\
                0.5	0.00626933043550949\\
                0.9	0.00020897768118365\\
                1.3	0.000104488840591825\\
                };
                
                \addplot [color=black, mark=triangle, mark size = 1.3pt, mark options={solid, black}]
                  table[row sep=crcr]{%
                -4.8	0.000104488840591825\\
                -4	0.000835910724734598\\
                -3.5	0.00303017637716292\\
                -3	0.00647830811669314\\
                -2.7	0.0126431497116108\\
                -2.4	0.0251818105826298\\
                -2.2	0.0329139847864248\\
                -2	0.0507815765276268\\
                -1.8	0.0650965476887068\\
                -1.6	0.0921591574019895\\
                -1.4	0.127267407840843\\
                -1.2	0.171570676251776\\
                -1	0.220993897851709\\
                -0.8	0.268327342639806\\
                -0.6	0.33499122293739\\
                -0.3	0.413253364540667\\
                -0.1	0.430807489760094\\
                0.1	0.439584552369807\\
                0.3	0.423075315556299\\
                0.5	0.364457075984285\\
                0.7	0.315033854384352\\
                0.9	0.259445791189501\\
                1.1	0.200305107414528\\
                1.3	0.150881885814595\\
                1.5	0.115146702332191\\
                1.7	0.0842180055170108\\
                1.9	0.0641561481233804\\
                2.1	0.0451391791356683\\
                2.3	0.0327050071052412\\
                2.4	0.0263311878291398\\
                2.6	0.0179720805817939\\
                2.8	0.0121207055086517\\
                3	0.00898604029089693\\
                3.2	0.00564239739195854\\
                3.5	0.00240324333361197\\
                4	0.000835910724734598\\
                5	0.00020897768118365\\
                };
                
                \addplot [color=black, mark=diamond, mark size = 1.3pt, mark options={solid, black}]
                  table[row sep=crcr]{%
                -9	0.000104488840591825\\
                -7	0.00020897768118365\\
                -5	0.00156733260887737\\
                -4	0.00365710942071387\\
                -3	0.00700075231965226\\
                -2	0.0120162166680599\\
                -1	0.0218830560896096\\
                -0	0.0341531388447714\\
                1	0.0456321992811168\\
                2	0.054304020730586\\
                3	0.063618991891666\\
                5	0.0767992978349912\\
                7	0.0827992978349912\\
                9	0.0767992978349912\\
                11	0.0591406837749728\\
                13	0.0365710942071387\\
                15	0.0194349243500794\\
                17	0.0100309286968152\\
                19	0.00334364289893839\\
                21	0.000313466521775474\\
                23	0.000104488840591825\\
                25	0.00020897768118365\\
                };
                
                \addplot [color=red, dashed, mark=triangle, mark size = 1.3pt, mark options={solid, red}]
                  table[row sep=crcr]{%
                -3	3.15908407615046e-05\\
                -2.7	0.000204000129149956\\
                -2.4	0.00108288086382291\\
                -2.2	0.00295523819958965\\
                -2	0.00739219863890652\\
                -1.8	0.0169482198517559\\
                -1.7	0.0248377822729282\\
                -1.6	0.0356158959155145\\
                -1.5	0.04997091044657\\
                -1.4	0.0686014004824531\\
                -1.3	0.09214908611274\\
                -1.2	0.121113181892065\\
                -1.1	0.155752177370796\\
                -1	0.195983346996758\\
                -0.9	0.241294012171604\\
                -0.799999999999999	0.290680735459242\\
                -0.699999999999999	0.342632288388374\\
                -0.6	0.395168805620949\\
                -0.5	0.44594300167593\\
                -0.399999999999999	0.492400383191045\\
                -0.299999999999999	0.531985435674413\\
                -0.199999999999999	0.562371669310927\\
                -0.0999999999999996	0.581687131872086\\
                0	0.588705115721643\\
                0.100000000000001	0.582973071021615\\
                0.200000000000001	0.564860894431832\\
                0.300000000000001	0.535521431168181\\
                0.4	0.496769052942909\\
                0.5	0.45089408815618\\
                0.600000000000001	0.400439472267195\\
                0.700000000000001	0.34796979759806\\
                0.800000000000001	0.29586156603696\\
                0.9	0.246137555913433\\
                1	0.200359319028483\\
                1.1	0.159581866756411\\
                1.2	0.12436548450287\\
                1.3	0.0948327887766342\\
                1.4	0.0707553867557057\\
                1.5	0.0516538654821762\\
                1.6	0.036896780804161\\
                1.7	0.0257879284248881\\
                1.8	0.0176354589796535\\
                1.9	0.0118004728220189\\
                2.1	0.00494939010516804\\
                2.3	0.0019027130042432\\
                2.6	0.000385188373321078\\
                2.7	0.000216533182734604\\
                3	3.37545493751323e-05\\
                };

                \addplot [color=red, dashed, mark=square, mark size = 1pt, mark options={solid, red}]
                  table[row sep=crcr]{%
                -1.7	8.04376712339213e-05\\
                -1.5	0.000649674787742184\\
                -1.3	0.00404304272826347\\
                -1.2	0.00914650998136995\\
                -1.1	0.019386367558189\\
                -1	0.0384974018843762\\
                -0.9	0.0716242928898616\\
                -0.799999999999999	0.124848469403807\\
                -0.699999999999999	0.203891934210479\\
                -0.6	0.311968552279502\\
                -0.5	0.44721415138491\\
                -0.399999999999999	0.600639938042243\\
                -0.299999999999999	0.75579988760592\\
                -0.199999999999999	0.891032178310244\\
                -0.0999999999999996	0.984178564483898\\
                0	1.01847028802407\\
                0.0999999999999996	0.987453774278228\\
                0.199999999999999	0.896972509263672\\
                0.299999999999999	0.763370617819132\\
                0.399999999999999	0.608675323675808\\
                0.5	0.454705177352187\\
                0.6	0.318249735498227\\
                0.699999999999999	0.208689285402548\\
                0.799999999999999	0.128211269631339\\
                0.9	0.0737982729040966\\
                1	0.0397978988471424\\
                1.1	0.0201079611698903\\
                1.2	0.00951852994052174\\
                1.3	0.00422148908884282\\
                1.5	0.000682871673276344\\
                1.7	8.51115143839765e-05\\
                };
                
                \addplot [color=red, dashed,mark=diamond, mark size = 1.3pt, mark options={solid, red}]
                  table[row sep=crcr]{%
                -10	6.58872745997817e-05\\
                -9	0.000148473571491758\\
                -8	0.0003187081719789\\
                -7	0.000651679239153504\\
                -6	0.00126932122892646\\
                -5	0.00235508349656724\\
                -4	0.00416234609052617\\
                -3	0.00700756659018115\\
                -2	0.0112381147528688\\
                -1	0.0171678881661232\\
                0	0.0249825908954116\\
                1	0.0346302209503981\\
                2	0.0457267388281767\\
                3	0.0575151620478193\\
                4	0.0689114679285229\\
                5	0.0786497324000036\\
                6	0.0855062921898818\\
                7	0.0885501271786406\\
                8	0.087348768571572\\
                9	0.0820662380755031\\
                10	0.0734218394375812\\
                11	0.0625243231505317\\
                12	0.0506360664195492\\
                13	0.0389390322047504\\
                14	0.0283618414446021\\
                15	0.0194944026335538\\
                16	0.0125830835591069\\
                17	0.00758210233776356\\
                18	0.00423688046182809\\
                19	0.00218064586069766\\
                20	0.00102687576498258\\
                21	0.000439719705188415\\
                22	0.00017029007786672\\
                23	5.93628023864247e-05\\
                };
                
                 \node[] at (axis cs: 17.5,0.05) {\scriptsize (Converged)};
                \end{axis}
                \end{tikzpicture}%
                \vspace{-0.11em}
                 \caption{User 3}
                 \vspace{-0.11em}
                \label{Fig::DE-on-3user::user3}
             \end{subfigure}
             \hspace{-0.81em}
             \vspace{-0.11em}
             \caption{The density $\dot{\mathcal{D}}_{t}^{(u)}$ of different users in the three-user system.}
             \vspace{-0.31em}
             \label{Fig::DE-on-3user}
        \end{figure}
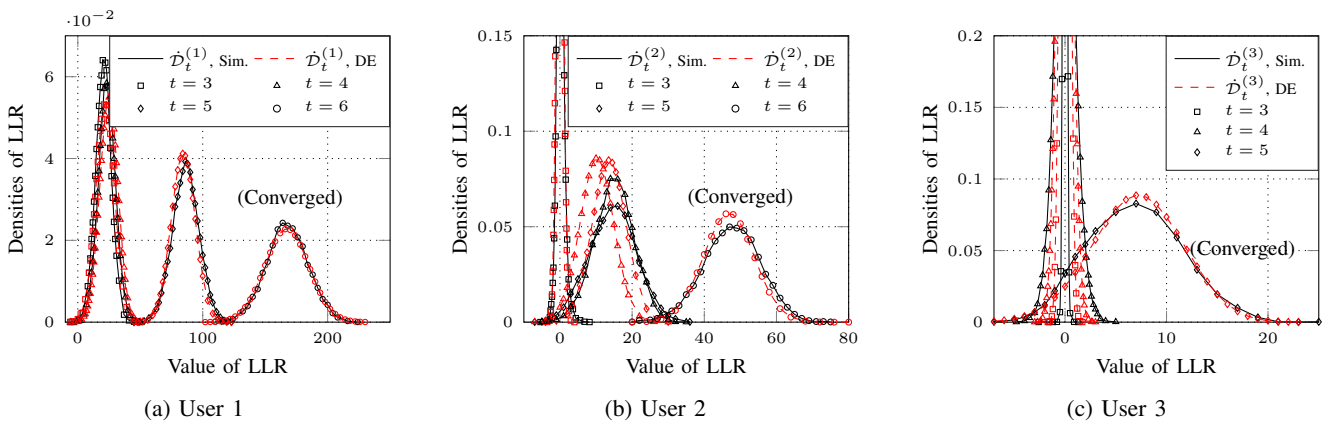
        
    \vspace{-0.5em}    
    \section{Analytical Examples based on the Density Evolution Framework}   \label{Sec::Example}
    \vspace{-0.5em}
    
    Using the DE framework introduced in Section \ref{Sec::DE}, one can analyze the BER, convergence behavior, etc., of the OSD-based JD, and accordingly conduct further optimizations. In this section, we will discuss some preliminary analytical examples and outline future works based on the DE framework due to the space limit.
    
    \vspace{-0.5em}
    \subsection{Bit Error Rate}
    \vspace{-0.5em}
    Let $\mathrm{P}_b^{(u)}(t)$ denote the BER of the $u$-th user at iteration $t$. In the DS-off phase, $\mathrm{P}_b^{(u)}(t)$ is given by
    \begin{equation}\small\label{equ::BER::DS-off}
        \mathrm{P}_b^{(u)}(t) = \int_{-\infty}^{0} \dot{\mathcal{L}}_t^{(u)}(\ell) d \ell,
    \end{equation}
    where $\dot{\mathcal{L}}_t^{(u)}(\ell)$ is given by (\ref{equ::DSoff::PICDensity::singleside}). On the other hand, in the DS-on phase, $\mathrm{P}_b^{(u)}(t)$ can be determined as 
    \begin{equation}\small \label{equ::BER::DS-on}
        \mathrm{P}_b^{(u)}(t) = \int_{-\infty}^{0} \left(\dot{\mathcal{L}}_t^{(u)}\otimes\dot{\mathcal{D}}_t^{(u)}\right)(\ell) d \ell = \int_{-\infty}^{0} \left(\dot{\mathcal{L}}_t^{(u)}\otimes \Delta\left(\dot{\mathcal{L}}_t^{(u)}\right)\right)(\ell) d \ell,
    \end{equation}
    where $\dot{\mathcal{L}}_t^{(u)}$ is obtained by the DE framework, i.e., (\ref{equ::DE::densityRecur}). We note that (\ref{equ::BER::DS-off}) and (\ref{equ::BER::DS-on}) are valid under the all-zero transmission assumption over symmetric channels without losing generality, which was widely considered in DE techniques \cite{richardson2001capacity}.
    
    We compare the BER obtained from simulation and DE for the two-user system described in Example \ref{exam::64DEDS-2}. In Fig. \ref{Fig::BER::SNR}, we illustrate BER at different SNRs, when the iteration number is set to $t=2$ and $t=5$. Specifically, when $t=2$ ($t\leq t_{\mathrm{off}}$), the BER from DE is obtained by (\ref{equ::BER::DS-off}), while when $t=5$, it is obtained by (\ref{equ::BER::DS-on}). As shown, the proposed DE framework can accurately estimate the BER performance of the iterative JD, particularly at moderate-to-high SNRs. In Fig. \ref{Fig::BER::iter}, we depict the BER at different iterations when SNR is $8$ dB. It is seen that the BER decreases as the number of iterations increases. It is worth noting that there is a significant gain in the BER performance after DS is turned on, i.e., after the JD switches from the DS-off phase to the DS-on phase.
    
           \begin{figure} 
             \centering
             \hspace{-0.81em}
             \begin{subfigure}[b]{0.48\columnwidth}
                 \centering
                 
                \begin{tikzpicture}
                
                \begin{axis}[%
                width=2.5in,
                height=1.5in,
                at={(0.785in,0.587in)},
                scale only axis,
                xmin=2,
                xmax=22,
                xlabel style={at={(0.5,1ex)},font=\color{white!15!black},font=\scriptsize},
                xlabel={SNR (dB)},
                ymode=log,
                ymin=1e-05,
                ymax=1,
                yminorticks=true,
                ylabel style={at={(1ex,0.5)},font=\color{white!15!black},font=\scriptsize},
                ylabel={Bit Error Rate},
                axis background/.style={fill=white},
                tick label style={font=\tiny},
                xmajorgrids,
                ymajorgrids,
                yminorgrids,
                minor grid style={dotted},
                major grid style={dotted,black},
                legend style={at={(1,1)}, anchor=north east, legend cell align=left, align=left, draw=white!15!black,font = \tiny,row sep=-1pt,legend columns=1}
                ]
                
                \addplot [color=black, mark=square, mark size = 1.5pt, mark options={solid, black}]
                  table[row sep=crcr]{%
                2	0.13421875\\
                3	0.085390625\\
                4	0.04746875\\
                5	0.00862847222222222\\
                6	0.000199652777777778\\
                7	0\\
                8	0\\
                };
                \addlegendentry{$\mathrm{P}_b^{(1)}(t)$, Sim.}
                
                \addplot [color=black, mark=triangle, mark size = 2pt, mark options={solid, black}]
                  table[row sep=crcr]{%
                2	0.344296875\\
                3	0.3147265625\\
                4	0.27434375\\
                5	0.183263888888889\\
                6	0.092265625\\
                7	0.0255922379032258\\
                8	0.003259375\\
                };
                \addlegendentry{$\mathrm{P}_b^{(2)}(t)$, Sim.}
                
                \addplot [color=black, mark=square, mark size = 1.5pt, mark options={solid, black}, forget plot]
                  table[row sep=crcr]{%
                2	0.183645833333333\\
                4	0.14849609375\\
                6	0.115828125\\
                8	0.0898307291666667\\
                10	0.0616232638888889\\
                12	0.0364797794117647\\
                14	0.016958984375\\
                16	0.00522313399280576\\
                18	0.000932291666666667\\
                20	6.45833333333333e-05\\
                };
                
                \addplot [color=black, mark=triangle, mark size = 2pt, mark options={solid, black}, forget plot]
                  table[row sep=crcr]{%
                2	0.357083333333333\\
                4	0.341484375\\
                6	0.2755625\\
                8	0.195872395833333\\
                10	0.119670138888889\\
                12	0.0598621323529412\\
                14	0.023041015625\\
                16	0.00603080035971223\\
                18	0.000974479166666667\\
                20	6.5625e-05\\
                };
                
                \addplot [color=blue, dashed]
                  table[row sep=crcr]{%
                2	0.453155607616922\\
                3	0.380948465900968\\
                4	0.299231474998998\\
                5	0.185986630043677\\
                6	0.107571600112698\\
                7	0.0260906521309386\\
                8	0.00197034292882732\\
                9	3.319092386450512e-05\\
                10	1.061309087844331e-07\\
                };
                \addlegendentry{$\mathrm{P}_b^{(1)}(t)$, DE}   
                
                \addplot [color=red, dashed]
                  table[row sep=crcr]{%
                2	0.223900020557001\\
                3	0.135296082363817\\
                4	0.0590635789314279\\
                5	0.00547011106515375\\
                6	1.14582546563279e-04\\
                7	9.96124944935932e-08\\
                };
                \addlegendentry{$\mathrm{P}_b^{(2)}(t)$, DE} 
                
                \addplot [color=blue, dashed,forget plot]
                  table[row sep=crcr]{%
                2	0.182368022874744\\
                4	0.148207477036597\\
                6	0.118065211855569\\
                8	0.0899502758700625\\
                10	0.0644108772000239\\
                12	0.0368704942145307\\
                14	0.016702653686712\\
                16	0.00519594652238148\\
                18	0.000938837332391173\\
                20	6.65658333784582e-05\\
                21  7.62561895574655e-06\\
                };
                
                \addplot [color=red, dashed,forget plot]
                  table[row sep=crcr]{%
                2	0.363684011011373\\
                4	0.316319178391191\\
                6	0.271710304512348\\
                8	0.21002845082578\\
                10	0.125365624965752\\
                12	0.064582779071211\\
                14	0.0239188981312445\\
                16	0.0065135594803752\\
                18	0.000781285575554525\\
                20	3.96263470966726e-05\\
                21  5.04561895574655e-06\\
                };

                \draw[dashed, color=black] 
                (axis cs:7.5,0.0001) ellipse[ x radius = 25, y radius = 0.75];    
                \node[] at (axis cs: 3.75,0.00015) {\scriptsize $t=5$};
                
                \draw[dashed, color=black] 
                (axis cs:20,0.0001) ellipse[ x radius = 25, y radius = 0.75];    
                \node[] at (axis cs: 16.25,0.00015) {\scriptsize $t=2$};
          
                \end{axis}
                \end{tikzpicture}%

                \vspace{-0.11em}
                \caption{BER at different SNRs when iteration $t=2$ and $t=5$}     
                \vspace{-0.11em}
                \label{Fig::BER::SNR}

             \end{subfigure}
             \hspace{-0.3em}
             \begin{subfigure}[b]{0.48\columnwidth}
                \centering
                \begin{tikzpicture}
                
                \begin{axis}[%
                width=2.5in,
                height=1.5in,
                at={(0.642in,0.505in)},
                scale only axis,
                xmin=1,
                xmax=7,
                xlabel style={at={(0.5,1ex)},font=\color{white!15!black},font=\scriptsize},
                xlabel={Iteration $t$},
                ymode=log,
                ymin=1e-16,
                ymax=1,
                ylabel style={at={(1ex,0.5)},font=\color{white!15!black},font=\scriptsize},
                ylabel={Bit error rate},
                axis background/.style={fill=white},
                tick label style={font=\tiny},
                xmajorgrids,
                ymajorgrids,
                minor grid style={dotted},
                major grid style={dotted,black},
                legend style={at={(0,0)}, anchor=south west, legend cell align=left, align=left, draw=white!15!black,font = \tiny,row sep=-2pt,legend columns=1}
                ]

                \addplot [color=black, mark=square, mark size = 1.5pt, mark options={solid, black}]
                  table[row sep=crcr]{%
                1	0.0899\\
                2	0.0892\\
                3	9.19e-05\\
                4	0.00010815\\
                };
                \addlegendentry{$\mathrm{P}_b^{(1)}(t)$, Sim.}
                
                \addplot [color=black, mark=triangle, mark size = 2pt, mark options={solid, black}]
                  table[row sep=crcr]{%
                1	0.4115\\
                2	0.1938\\
                3	0.2471\\
                4	0.003511\\
                5	0.003125\\
                6	0.0027\\
                };
                \addlegendentry{$\mathrm{P}_b^{(2)}(t)$, Sim.}

                \addplot [blue,dashed]
                  table[row sep=crcr]{%
                  1	0.0899\\
                2	0.0891\\
                3	0.000108265888802189\\
                4	7.41623882112298e-05\\
                5	3.08363491155045e-15\\
                6	2.59405646123362e-15\\
                7	2.13778147520072e-15\\
                };
                \addlegendentry{$\mathrm{P}_b^{(1)}(t)$, DE}
                
                \addplot [red,dashed]
                  table[row sep=crcr]{%
                1	0.4106\\
                2	0.2022\\
                3	0.259550556202123\\
                4	0.002683222239123\\
                5	0.002670342928827\\
                6	0.002697790603487\\
                7	0.002601180484814\\
                };
                \addlegendentry{$\mathrm{P}_b^{(2)}(t)$, DE}
                
                \draw[dashed, color=black] 
                (axis cs:3,0.005) ellipse[ x radius = 25, y radius = 7];    
                \node[] at (axis cs: 3,0.000001) {\tiny DS is turned on};

                \end{axis}
                \end{tikzpicture}%
                \vspace{-0.11em}
                 \caption{BER at different iterations when SNR = 8dB}
                 \vspace{-0.11em}
                \label{Fig::BER::iter}
             \end{subfigure}
             \vspace{-0.11em}
             \caption{The BER performance of the two-user system with $h^{(1)} = 1.225$, $h^{(2)} = 0.707$.}
             \vspace{-0.11em}
             \label{Fig::BER}
        \end{figure}
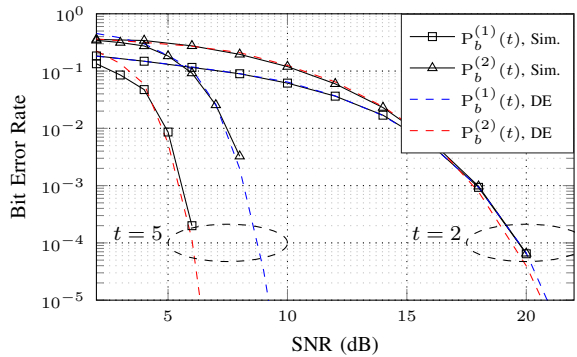
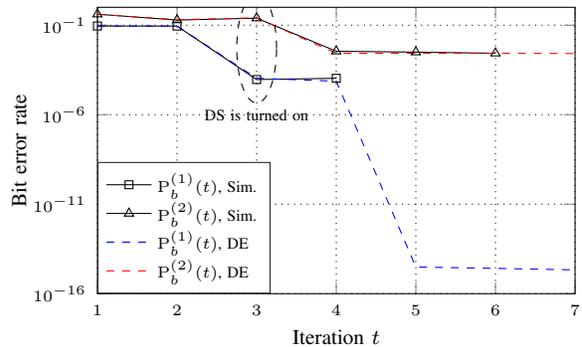

    \vspace{-0.5em}
    \subsection{Conditions of Convergence}
    \vspace{-0.5em}
    In regard to the condition of convergence, we have the following proposition.
    \begin{proposition} \label{Pro::Convergence}
    The decoding results of the JD receiver converge at iteration $t^{*}$ in the DS-on phase, when the following condition is satisfied.
    \begin{equation}\small
        \bm{\mathcal{L}}_{t^{*}}(\ell;\mathbf{x}) = \left(\bm{\Gamma} \circ \Theta \circ \Delta\right)\left(\bm{\mathcal{L}}_{t^{*}}(\ell;\mathbf{x})\right),
    \end{equation}
    or equivalently,
    \begin{equation}\small\label{equ::Convergence}
        \dot{\bm{\mathcal{L}}}_{t^{*}}(\ell) = \left(\bm{\Gamma} \circ \Theta \circ \Delta\right)\left(\dot{\bm{\mathcal{L}}}_{t^{*}}(\ell)\right).
    \end{equation}
    \end{proposition}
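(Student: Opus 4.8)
The plan is to recognize the stated convergence condition as the fixed-point equation of the density-evolution operator $T \triangleq \bm{\Gamma} \circ \Theta \circ \Delta$ that was assembled in the one-step recursion preceding (\ref{equ::DE::densityRecur}), and then to argue that a fixed point, once attained, renders the entire density sequence stationary. First I would recall that throughout the DS-on phase the per-iteration update is exactly $\bm{\mathcal{L}}_{t}(\ell;\mathbf{x}) = T\!\left(\bm{\mathcal{L}}_{t-1}(\ell;\mathbf{x})\right)$, i.e. one pass through the cancellation, decoding, and estimation nodes. The hypothesis $\bm{\mathcal{L}}_{t^*}(\ell;\mathbf{x}) = T\!\left(\bm{\mathcal{L}}_{t^*}(\ell;\mathbf{x})\right)$ then states precisely that $\bm{\mathcal{L}}_{t^*}$ is a fixed point of $T$.

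Next I would propagate this forward by induction: applying the update once gives $\bm{\mathcal{L}}_{t^*+1} = T(\bm{\mathcal{L}}_{t^*}) = \bm{\mathcal{L}}_{t^*}$, and assuming $\bm{\mathcal{L}}_{t} = \bm{\mathcal{L}}_{t^*}$ for some $t \geq t^*$ yields $\bm{\mathcal{L}}_{t+1} = T(\bm{\mathcal{L}}_{t}) = T(\bm{\mathcal{L}}_{t^*}) = \bm{\mathcal{L}}_{t^*}$. Hence every density beyond $t^*$ coincides with $\bm{\mathcal{L}}_{t^*}$, so the priori densities $\dot{\mathcal{L}}_t^{(u)}$ and, through $\Delta$, the extrinsic densities $\dot{\mathcal{D}}_t^{(u)}$ are frozen. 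Because under Assumption \ref{Assum::Independence} the densities supply the complete statistical characterization of the LLRs passed along the bipartite graph, the ensemble-averaged decoder behavior—in particular the BER of (\ref{equ::BER::DS-on})—no longer changes with $t$, which is the intended meaning of convergence in the density-evolution sense.

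Finally I would establish the equivalence of the two displayed forms. Under the all-zero transmission over a symmetric channel, the full density and the single-side density are linked by the invertible symmetry relation $\mathcal{L}_t^{(u)}(\ell) = \tfrac12 \dot{\mathcal{L}}_t^{(u)}(\ell) + \tfrac12 \dot{\mathcal{L}}_t^{(u)}(-\ell)$, while the deparameterization over $\mathbf{x}$ used to define $\bm{\Gamma}$ is a fixed averaging that commutes with the node transforms. Consequently $T$ acting on the slice $x^{(u)}=1$ reproduces the single-side slice of $T$ applied to the full density, so the fixed-point identity transfers verbatim between $\bm{\mathcal{L}}_{t^*}(\ell;\mathbf{x})$ and $\dot{\bm{\mathcal{L}}}_{t^*}(\ell)$.

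The main obstacle I anticipate is not the induction, which is immediate, but pinning down the precise sense in which ``the decoding results converge.'' Since density evolution tracks the ensemble of LLR distributions rather than any single codeword realization, I would be careful to phrase convergence as stationarity of the densities (equivalently, of the per-iteration BER and soft-output statistics), and to note that the empirical termination test $\hat{\mathbf{c}}^{(u)}(t)=\hat{\mathbf{c}}^{(u)}(t-1)$ employed by the actual receiver is the realization-level counterpart of this steady state.
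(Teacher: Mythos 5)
Your proposal is correct and follows essentially the same route as the paper, whose entire proof is the one-line observation that the proposition follows directly from the recursion (\ref{equ::DE::densityRecur}); you simply make explicit what that observation leaves implicit (the fixed-point induction, the single-side/full-density equivalence, and the density-stationarity meaning of convergence). No gap to report.
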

    \begin{IEEEproof}
        Proposition \ref{Pro::Convergence} is directly obtained from (\ref{equ::DE::densityRecur}).
    \end{IEEEproof}
     Eq. (\ref{equ::Convergence}) contains $n_u$ specific conditions for $n_u$ users, i.e., $\dot{\mathcal{L}}_{t^{*}}^{(u)}(\ell) = \left(\Gamma_u \circ \Theta \circ \Delta\right)\left(\dot{\bm{\mathcal{L}}}_{t^{*}}(\ell)\right)$ for $1\leq u \leq n_u$. According to (\ref{equ::DE::PIC::Densityl::deparamu}), $\Gamma_u$ involves multiple-integral operations, which complicates further analyses. However, we can simplify (\ref{equ::Convergence}) by taking the following assumption.
     \begin{assumption} \label{Assum::LLRGaussian}
         $\dot{\mathcal{L}}_{t^{*}}^{(u)}$ is a Gaussian density, when the JD is converged for user $u$. Also, its mean and variance, respectively, denoted by $\mathbb{E}[\dot{\mathcal{L}}_{t^{*}}^{(u)}]$ and $\mathbb{V}[\dot{\mathcal{L}}_{t^{*}}^{(u)}]$, satisfy $2\mathbb{E}[\dot{\mathcal{L}}_{t^{*}}^{(u)}] = \mathbb{V}[\dot{\mathcal{L}}_{t^{*}}^{(u)}]$.
     \end{assumption}
     
     Assumption \ref{Assum::LLRGaussian} is obtained from \cite[Eq. (8)]{ten2001convergence}, which is a common assumption for the EXIT-chart analysis. It has been argued in \cite{ten2001convergence} that if an $L$-value (i.e., LLR) follows a Gaussian distribution, then its mean and variance, denoted by $\mu$ and $\sigma^2$, satisfies $2\mu = \sigma^2$. This is valid for a symmetric Gaussian distribution. Also, the priori information input to the decoder tends to be Gaussian as the number of iterations increases in the iterative decoding of concatenated codes \cite{ten2001convergence}. In the proposed JD receiver, we take Assumption \ref{Assum::LLRGaussian} because $\dot{\mathcal{L}}_{t^{*}}^{(u)}$ is the priori LLR input to the SOSD decoder. To validate Assumption \ref{Assum::LLRGaussian}, we compare the density $\dot{\mathcal{L}}_{t^{*}}^{(u)}$ obtained from simulation and its Gaussian approximation $\mathcal{N}\left(\mathbb{E}[\dot{\mathcal{L}}_{t^{*}}^{(u)}], 2\mathbb{E}[\dot{\mathcal{L}}_{t^{*}}^{(u)}]\right)$, where $\mathbb{E}[\dot{\mathcal{L}}_{t^{*}}^{(u)}]$ is obtained from DE. The two-user and three-user NOMA systems described in Example \ref{exam::64DEDS-2} and Example \ref{exam::64DEDS-3} are considered. As shown in Fig. \ref{Fig::Gaussian-L}, the Gaussian distribution $\mathcal{N}\left(\mathbb{E}[\dot{\mathcal{L}}_{t^{*}}^{(u)}], 2\mathbb{E}[\dot{\mathcal{L}}_{t^{*}}^{(u)}]\right)$ well approximates the corresponding density $\dot{\mathcal{L}}_{t^{*}}^{(u)}$.

       \begin{figure} 
             \centering
             \hspace{-0.81em}
             \begin{subfigure}[b]{0.48\columnwidth}
                 \centering
                 
                \begin{tikzpicture}
                
                \begin{axis}[%
                width=2.5in,
                height=1.5in,
                at={(0.785in,0.587in)},
                scale only axis,
                xmin=-10,
                xmax=30,
                xlabel style={at={(0.5,1ex)},font=\color{white!15!black},font=\scriptsize},
                xlabel={Value of LLR},
                ymin= 0,
                ymax=0.2,
                yminorticks=true,
                ylabel style={at={(2ex,0.5)},font=\color{white!15!black},font=\scriptsize},
                ylabel={Densities of LLR},
                yticklabel style={
                        /pgf/number format/fixed,
                        /pgf/number format/precision=5
                },
                axis background/.style={fill=white},
                tick label style={font=\tiny},
                xmajorgrids,
                ymajorgrids,
                yminorgrids,
                minor grid style={dotted},
                major grid style={dotted,black},
                legend style={at={(1,1)}, anchor=north east, legend cell align=left, align=left, draw=white!15!black,font = \tiny,row sep=-1pt,legend columns=1}
                ]
                
                \addplot [color=black, mark=triangle, mark size = 1.3pt ,mark options={solid, black}]
                  table[row sep=crcr]{%
                -7	0.000129826773990133\\
                -6	0.000231833524982381\\
                -5	0.000454393708965466\\
                -4	0.000927334099929523\\
                -3	0.00171556808486962\\
                -2	0.00322712266775474\\
                -1	0.00593493823954894\\
                0	0.00985756148225083\\
                1	0.0157924997217998\\
                2	0.0236284728662042\\
                3	0.0329760005934938\\
                4	0.0450498905745762\\
                5	0.0571794206016544\\
                6	0.0683445231648058\\
                7	0.0785359249230313\\
                8	0.087605252420342\\
                9	0.0904428947661264\\
                10	0.0883656663822842\\
                11	0.083460068993657\\
                12	0.0752160688452836\\
                13	0.0621406580362773\\
                14	0.0511424756111132\\
                15	0.0396528061129864\\
                16	0.0278849363848807\\
                17	0.0187877888645721\\
                18	0.0128713973070218\\
                19	0.0077617864164101\\
                20	0.00494269075262436\\
                21	0.00275418227679068\\
                22	0.00137245446789569\\
                23	0.000686227233947847\\
                24	0.000306020252976742\\
                25	0.000166920137987314\\
                26	0.000120553432990838\\
                };
                \addlegendentry{$\dot{\mathcal{L}}_{5}^{(2)}$, Sim.}

                \addplot [color=black, mark=square, mark size = 1pt, mark options={solid, black}]
                  table[row sep=crcr]{%
                -13.5	9.25360427886662e-06\\
                -11.5	9.25360427886662e-06\\
                -10.5	1.85072085577332e-05\\
                -8.5	1.85072085577332e-05\\
                -8	9.25360427886662e-06\\
                -7.5	4.62680213943331e-05\\
                -7	9.25360427886662e-06\\
                -6.5	0.000129550459904133\\
                -6	0.000231340106971665\\
                -5.5	0.000453426609664464\\
                -5	0.000851331593655729\\
                -4.5	0.00153609831029186\\
                -4	0.00279458849221772\\
                -3.5	0.00495993189347251\\
                -3	0.00768974515573816\\
                -2.5	0.0130845964503174\\
                -2	0.0200340532637462\\
                -1.5	0.0287972165158329\\
                -1	0.0414468935650436\\
                -0.5	0.0549386486036311\\
                -0	0.07339958913997\\
                0.5	0.0918420224677512\\
                1	0.110025354875724\\
                1.5	0.12695945070605\\
                2	0.143680713637962\\
                2.5	0.153878185553273\\
                3	0.158930653489534\\
                3.5	0.157320526345011\\
                4	0.150195251050284\\
                4.5	0.136259323006311\\
                5	0.120685507004978\\
                5.5	0.102002479965947\\
                6	0.0838006403494161\\
                6.5	0.0658764088612515\\
                7	0.048220531897174\\
                7.5	0.0340902781633446\\
                8	0.0239390742694279\\
                8.5	0.0162955971350841\\
                9	0.0103732903966095\\
                9.5	0.00625543649251383\\
                10	0.00408083948698018\\
                10.5	0.00223937223548572\\
                11	0.00136027982899339\\
                11.5	0.00057372346528973\\
                12	0.000323876149760332\\
                12.5	0.000148057668461866\\
                13	9.25360427886662e-05\\
                13.5	3.70144171154665e-05\\
                14	1.85072085577332e-05\\
                14.5	9.25360427886662e-06\\
                18.5	1.85072085577332e-05\\
                };
                \addlegendentry{$\dot{\mathcal{L}}_{5}^{(1)}$, Sim.}

                \addplot [color=blue, dashed]
                  table[row sep=crcr]{%
                -7	7.19845923142243e-05\\
                -6	0.000167703372900479\\
                -5	0.000370461665638507\\
                -4	0.0007759683047448\\
                -3	0.00154114648270061\\
                -2	0.0029023052416623\\
                -1	0.0051825259169744\\
                0	0.00877483802822392\\
                1	0.0140875660256296\\
                2	0.0214452962468632\\
                3	0.030954754582324\\
                4	0.0423664339234212\\
                5	0.0549813861238133\\
                6	0.0676563693939023\\
                7	0.0789406908277327\\
                8	0.087335816796083\\
                9	0.0916184794185648\\
                10	0.0911324408828828\\
                11	0.0859532208344139\\
                12	0.0768688772031566\\
                12.9	0.0664261257055916\\
                14	0.0524113071530747\\
                15	0.0399586735053687\\
                16	0.028886595747618\\
                17	0.0198007159887278\\
                18	0.0128695883294634\\
                19	0.00793135967877642\\
                20	0.0046347879049412\\
                21	0.00256809635516165\\
                22	0.00134924865454339\\
                23	0.00067215882120068\\
                24	0.000317505345265029\\
                25	0.000142209761930012\\
                25.5	9.32944491197508e-05\\
                };
                \addlegendentry{$\mathcal{N}(\mathbb{E}[\mathcal{L}_{5}^{(2)}],2\mathbb{E}[\mathcal{L}_{5}^{(2)}])$}
                
                \addplot [color=red, dashed]
                  table[row sep=crcr]{%
                -7	4.84370715681166e-05\\
                -6.5	0.000105848782329167\\
                -6	0.000222310484997524\\
                -5.5	0.000448745511812118\\
                -5	0.000870575381346997\\
                -4.5	0.00162322505497079\\
                -4	0.00290882226195783\\
                -3.5	0.00500981541916414\\
                -3	0.00829263130502395\\
                -2.5	0.0131925600773589\\
                -2	0.0201712074693592\\
                -1.5	0.0296415427809798\\
                -1	0.0418635259298012\\
                -0.5	0.0568246664819093\\
                0	0.0741317204076301\\
                0.5	0.0929474233133047\\
                1	0.11200482257559\\
                1.5	0.129718578855785\\
                2	0.144388875359889\\
                2.5	0.15446545830602\\
                3	0.158816314278123\\
                3.5	0.156936853334479\\
                4	0.149046183574264\\
                4.5	0.136045089811996\\
                5	0.119346853949918\\
                5.5	0.100624832579976\\
                6	0.0815390145320488\\
                6.5	0.0635026483684303\\
                7	0.0475318052859451\\
                7.5	0.034193446146824\\
                8	0.0236410917437004\\
                8.5	0.0157093521726327\\
                9	0.010032637378184\\
                9.5	0.00615797677688803\\
                10	0.00363267939197295\\
                10.5	0.00205959669199079\\
                11	0.00112228560644222\\
                11.5	0.000587747367262455\\
                12	0.000295831274006641\\
                12.5	0.000143107886286586\\
                13	6.65348439419229e-05\\
                };
                \addlegendentry{$\mathcal{N}(\mathbb{E}[\mathcal{L}_{5}^{(1)}],2\mathbb{E}[\mathcal{L}_{5}^{(1)}])$}

                \end{axis}
                \end{tikzpicture}%

                \vspace{-0.11em}
                \caption{Two-user system}     
                \vspace{-0.11em}
                \label{Fig::Gaussian-L::2}

             \end{subfigure}
             \hspace{-0.3em}
             \begin{subfigure}[b]{0.48\columnwidth}
                \centering
                \begin{tikzpicture}
                
                \begin{axis}[%
                width=2.5in,
                height=1.5in,
                at={(0.642in,0.505in)},
                scale only axis,
                xmin=-10,
                xmax=60,
                xlabel style={at={(0.5,1ex)},font=\color{white!15!black},font=\scriptsize},
                xlabel={Value of LLR},
                ymin=0,
                ymax=0.2,
                ylabel style={at={(2ex,0.5)},font=\color{white!15!black},font=\scriptsize},
                ylabel={Densities of LLR },
                yticklabel style={
                        /pgf/number format/fixed,
                        /pgf/number format/precision=5
                },
                axis background/.style={fill=white},
                tick label style={font=\tiny},
                xmajorgrids,
                ymajorgrids,
                minor grid style={dotted},
                major grid style={dotted,black},
                legend style={at={(1,1)}, anchor=north east, legend cell align=left, align=left, draw=white!15!black,font = \tiny,row sep=-2pt,legend columns=2}
                ]
                
                \addplot [color=black, mark size=1.3pt, mark=diamond, mark options={solid, black}]
                  table[row sep=crcr]{%
                -5.5	0.000249896917521522\\
                -5	0.000374845376282284\\
                -4.5	0.000974597978333937\\
                -4	0.00159934027213774\\
                -3.5	0.00352354653705347\\
                -3.25	0.00439818574837879\\
                -3	0.00777179413491935\\
                -2.75	0.011170392213212\\
                -2.5	0.0149938150512913\\
                -2.25	0.0184174028213362\\
                -2	0.0239901040820661\\
                -1.75	0.0302625167118564\\
                -1.5	0.0373096097859633\\
                -1.25	0.0443567028600702\\
                -1	0.0553251158896955\\
                -0.75	0.0631489510576887\\
                -0.5	0.0738445391276099\\
                -0.25	0.0867392200717204\\
                -0	0.0996838803993353\\
                0.25	0.110679344770282\\
                0.5	0.119000912123749\\
                0.75	0.136393737583247\\
                1	0.14908850099334\\
                1.25	0.165181862481726\\
                1.5	0.168255594567241\\
                2	0.175977409318656\\
                2.5	0.181275223970112\\
                3	0.173878275211475\\
                3.5	0.160883635500356\\
                3.75	0.147913985480989\\
                4	0.142691139904789\\
                4.25	0.129946397111192\\
                4.5	0.116901778016568\\
                4.75	0.103957117688953\\
                5	0.090612622293304\\
                5.25	0.0812914672697512\\
                5.5	0.0703709719740607\\
                5.75	0.0576012394887109\\
                6	0.0509789711743906\\
                6.25	0.040208414029213\\
                6.5	0.0326865168118151\\
                6.75	0.0254894855871953\\
                7	0.0226156710356978\\
                7.25	0.0170429697749678\\
                7.5	0.0130945984781278\\
                7.75	0.00977096947509152\\
                8	0.00604750540402084\\
                8.5	0.00309872177726688\\
                9	0.00112453612884685\\
                9.5	0.000649731985555958\\
                10	0.000199917534017218\\
                };
                \addlegendentry{$\dot{\mathcal{L}}_{8}^{(3)}$, Sim.}
                
                \addplot [color=purple, dashed]
                  table[row sep=crcr]{%
                -7	1.39794568193492e-05\\
                -6	9.7712874515632e-05\\
                -5.6	0.000199854356131897\\
                -5.3	0.000333926936235393\\
                -5	0.000546894300369681\\
                -4.7	0.000877949330582414\\
                -4.4	0.00138149595344997\\
                -4.1	0.00213080590989033\\
                -3.8	0.0032214565530446\\
                -3.5	0.00477391601825686\\
                -3.2	0.00693443978451194\\
                -2.9	0.00987329465732453\\
                -2.6	0.0137792927002982\\
                -2.3	0.0188497616567823\\
                -2	0.0252754508604404\\
                -1.5	0.0394201870745485\\
                -1.1	0.0540458955372157\\
                -1	0.0581582030613198\\
                -0.5	0.0811663108130578\\
                0	0.107155178199201\\
                0.5	0.133820623101746\\
                0.600000000000001	0.138972774548576\\
                0.800000000000001	0.148883919160399\\
                1	0.158090386557417\\
                1.2	0.166380618690768\\
                1.4	0.173555992317015\\
                1.6	0.179438692377562\\
                1.8	0.183879021108471\\
                2	0.18676172433356\\
                2.2	0.188010961994517\\
                2.4	0.18759362373767\\
                2.6	0.185520787224426\\
                2.8	0.181847229508207\\
                3	0.176669021562686\\
                3.2	0.170119353413576\\
                3.4	0.162362843106662\\
                3.6	0.15358866883537\\
                3.8	0.1440029237148\\
                4	0.133820623101746\\
                4.5	0.107155178199201\\
                5	0.0811663108130578\\
                5.5	0.0581582030613198\\
                6	0.0394201870745485\\
                6.5	0.0252754508604404\\
                7	0.0153303358849172\\
                7.4	0.0098732946573245\\
                7.6	0.00781855902450397\\
                7.8	0.00613664421149348\\
                8	0.00477391601825686\\
                8.2	0.00368093551637\\
                8.4	0.00281307459579001\\
                8.6	0.00213080590989032\\
                8.8	0.00159972798104448\\
                9	0.00119038650366966\\
                9.2	0.000877949330582409\\
                9.4	0.000641786406767278\\
                9.6	0.000464998103168477\\
                9.8	0.000333926936235392\\
                10	0.000237679283183395\\
                11	3.80001034437649e-05\\
                };
                \addlegendentry{$\mathcal{N}(\mathbb{E}[\mathcal{L}_{8}^{(3)}],2\mathbb{E}[\mathcal{L}_{8}^{(3)}])$}
                
                \addplot [color=black, mark=square,mark size = 1, mark options={solid, black}]
                  table[row sep=crcr]{%
                -7	0.000175056893490384\\
                -6	0.000225073148773351\\
                -5	0.000687723510140796\\
                -4	0.00127541450971566\\
                -3	0.00260084527471428\\
                -2	0.00656463350588941\\
                -1	0.0139045189686648\\
                -0	0.0233825993447871\\
                1	0.0344111836346813\\
                2	0.0489033936029209\\
                3	0.0691349688648811\\
                4	0.084164203366094\\
                5	0.0974431940380624\\
                6	0.104071323180034\\
                7	0.101808087628479\\
                8	0.0959061695050892\\
                9	0.0826643659189236\\
                10	0.0706729687148323\\
                11	0.0543676694925851\\
                12	0.0396753945032135\\
                13	0.0278465501287919\\
                14	0.0183184534973867\\
                15	0.0108910395878661\\
                16	0.00636456848475755\\
                17	0.00201315427513942\\
                18	0.000900292595093405\\
                19	0.000462650361367444\\
                20	0.000150048765848901\\
                };
                \addlegendentry{$\dot{\mathcal{L}}_{8}^{(2)}$, Sim.}

                \addplot [color=red, dashed]
                  table[row sep=crcr]{%
                -7	9.72265050459144e-05\\
                -6	0.000255016710462716\\
                -5	0.000622775572478731\\
                -4	0.00141603329659742\\
                -3	0.00299774251242083\\
                -2	0.00590873011701515\\
                -1	0.0108435864569714\\
                0	0.0185280922618741\\
                1	0.02947592402131\\
                2	0.043659938308066\\
                3	0.060211267919438\\
                4	0.0773127983795536\\
                5	0.0924280901234842\\
                5.5	0.0983891897848094\\
                6	0.10288108382246\\
                6.5	0.105674066163494\\
                7	0.106621809311462\\
                7.5	0.105674066163494\\
                8	0.10288108382246\\
                8.5	0.0983891897848094\\
                9	0.0924280901234842\\
                9.5	0.0852914141987776\\
                10	0.0773127983795536\\
                10.5	0.0688402140670209\\
                11	0.060211267919438\\
                12	0.043659938308066\\
                13	0.02947592402131\\
                14	0.0185280922618741\\
                15	0.0108435864569714\\
                16	0.00590873011701515\\
                17	0.00299774251242083\\
                18	0.00141603329659742\\
                19	0.000622775572478731\\
                20	0.000255016710462716\\
                21	9.72265050459144e-05\\
                };
                \addlegendentry{$\mathcal{N}(\mathbb{E}[\mathcal{L}_{8}^{(2)}],2\mathbb{E}[\mathcal{L}_{8}^{(2)}])$}
                
                \addplot [color=black, mark=triangle, mark size = 1.3pt, mark options={solid, black}]
                  table[row sep=crcr]{%
                -9	1.87575030012005e-05\\
                -8	6.25250100040016e-06\\
                -7	1.25050020008003e-05\\
                -6	6.25250100040016e-05\\
                -5	6.25250100040016e-05\\
                -4	0.000137555022008804\\
                -3	0.000181322529011605\\
                -2	0.000418917567026811\\
                -1	0.000568977591036415\\
                0	0.000900360144057623\\
                1	0.00150685274109644\\
                2	0.00221963785514206\\
                3	0.00332633053221289\\
                4	0.00458308323329332\\
                5	0.00666516606642657\\
                6	0.00870348139255702\\
                7	0.0107292917166867\\
                8	0.0135491696678671\\
                9	0.0171506102440976\\
                10	0.0203706482593037\\
                11	0.0243472388955582\\
                12	0.0287677571028411\\
                13	0.0323942076830732\\
                14	0.0372899159663866\\
                15	0.0419917967186875\\
                16	0.0455369647859144\\
                17	0.04703131252501\\
                18	0.0509766406562625\\
                19	0.0534026110444178\\
                20	0.0543342336934774\\
                21	0.0548656962785114\\
                22	0.0540591236494598\\
                23	0.0517269407763105\\
                24	0.0497011304521809\\
                25	0.0461184473789516\\
                26	0.0408725990396158\\
                27	0.0369585334133653\\
                28	0.0317001800720288\\
                29	0.0273984593837535\\
                30	0.0226965786314526\\
                31	0.0190263605442177\\
                32	0.0148809523809524\\
                33	0.0119297719087635\\
                34	0.00884728891556623\\
                35	0.00626500600240096\\
                36	0.00485819327731092\\
                37	0.00343887555022009\\
                38	0.00241971788715486\\
                39	0.00166316526610644\\
                40	0.00125050020008003\\
                41	0.000812825130052021\\
                42	0.000550220088035214\\
                43	0.000287615046018407\\
                44	0.000206332533013205\\
                45	8.75350140056022e-05\\
                46	6.25250100040016e-05\\
                47	5.00200080032013e-05\\
                48	1.25050020008003e-05\\
                50	6.25250100040016e-06\\
                };
                \addlegendentry{$\dot{\mathcal{L}}_{8}^{(1)}$, Sim.}
                
                \addplot [color=blue, dashed]
                  table[row sep=crcr]{%
                -10	6.6183501586185e-07\\
                -9	1.36813910822865e-06\\
                -8	2.76166139603842e-06\\
                -7	5.44340009062114e-06\\
                -6	1.04768261741369e-05\\
                -5	1.969014325212e-05\\
                -4	3.61349674932137e-05\\
                -3	6.47539277178504e-05\\
                -2	0.000113308951049586\\
                -1	0.000193607410931204\\
                0	0.000323027421407587\\
                1	0.000526279506989294\\
                2	0.000837246195816377\\
                3	0.00130061740609831\\
                4	0.00197290237663727\\
                5	0.00292227636322941\\
                6	0.00422665334306516\\
                7	0.00596941303396761\\
                8	0.00823239659602556\\
                9	0.0110861460185749\\
                10	0.0145778863386864\\
                11	0.0187183765804583\\
                12	0.0234693713054937\\
                13	0.0287338874819706\\
                14	0.0343516002194535\\
                15	0.0401013703908639\\
                16	0.0457120934138618\\
                17	0.0508818229169133\\
                18	0.0553036593754151\\
                19	0.0586954897934029\\
                20	0.0608296402766786\\
                21	0.0615581303074573\\
                22	0.0608296402766786\\
                23	0.0586954897934029\\
                24	0.0553036593754151\\
                25	0.0508818229169133\\
                26	0.0457120934138618\\
                27	0.0401013703908639\\
                28	0.0343516002194535\\
                29	0.0287338874819706\\
                30	0.0234693713054937\\
                31	0.0187183765804583\\
                32	0.0145778863386864\\
                33	0.0110861460185749\\
                34	0.00823239659602556\\
                35	0.00596941303396761\\
                36	0.00422665334306516\\
                37	0.00292227636322941\\
                38	0.00197290237663727\\
                39	0.00130061740609831\\
                40	0.000837246195816377\\
                41	0.000526279506989294\\
                42	0.000323027421407587\\
                43	0.000193607410931204\\
                44	0.000113308951049586\\
                45	6.47539277178504e-05\\
                46	3.61349674932137e-05\\
                47	1.969014325212e-05\\
                48	1.04768261741369e-05\\
                49	5.44340009062114e-06\\
                50	2.76166139603842e-06\\
                };
                \addlegendentry{$\mathcal{N}(\mathbb{E}[\mathcal{L}_{8}^{(1)}],2\mathbb{E}[\mathcal{L}_{8}^{(1)}])$}

                \end{axis}
                \end{tikzpicture}%
                \vspace{-0.11em}
                 \caption{Three-user system}
                 \vspace{-0.11em}
                \label{Fig::Gaussian-L::3}
             \end{subfigure}
             \vspace{-0.11em}
             \caption{Gaussian approximation of the converged density $\dot{\mathcal{L}}_{t^{*}}^{(u)}$.}
             \vspace{-0.31em}
             \label{Fig::Gaussian-L}
        \end{figure}
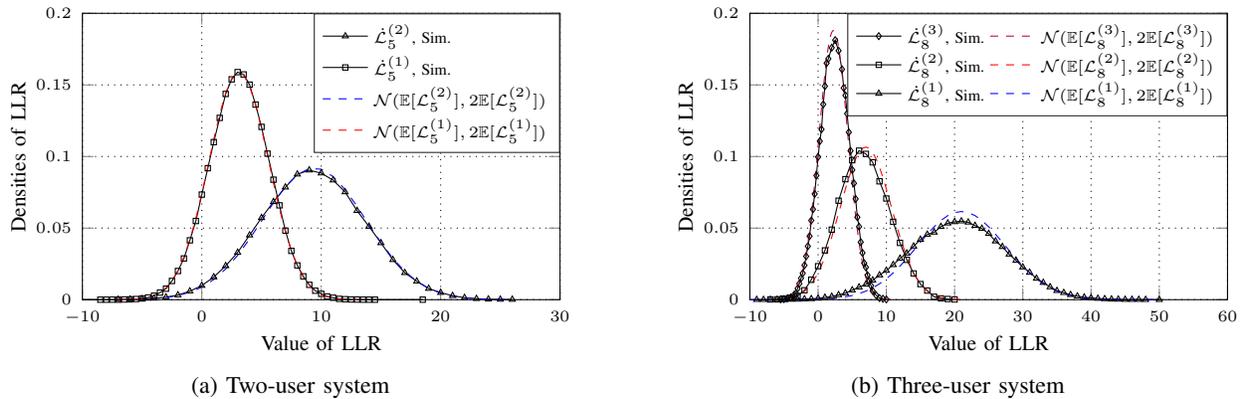
        
        Under Assumption \ref{Assum::LLRGaussian}, we can conclude that $\dot{\mathcal{L}}_{t^{*}}^{(u)}$ is the density of the LLR of an observation $\bar{x}^{(u)}(t^{*}) = x^{(u)} + w^{(u)}(t^{*})$ of symbol $x^{(u)}$, where $w^{(u)}(t^{*})$ is the interference-noise variable following a Gaussian distribution $\mathcal{N}(0,\sigma_u^2(t^{*}))$, and $\sigma_u^2(t^{*})$ is the power sum of the noise and interference. Therefore, $\dot{\mathcal{L}}_{t^{*}}^{(u)}$ follows $\mathcal{N}(2/\sigma_u^2(t^{*}),4/\sigma_u^2(t^{*}))$. With $\bar{x}^{(u)}(t^{*})$, it is assumed that the interference over the $u$-th user is Gaussian. This assumption was widely used in the analysis of NOMA and MIMO systems, when the number of users is large and deep interleavers are applied \cite{liu2019capacity,yuan2008evolution,wang2019near,yuan2014energy}; however, in this paper, we only consider Gaussian interference under Assumption \ref{Assum::LLRGaussian}, i.e., upon convergence, which holds even for small numbers of users.
        
        According to (\ref{equ::Receiver::PICLLR}), $\sigma_u^2(t^{*})$ is derived as
        \begin{equation}\small
            \sigma_u^2(t^{*}) = \frac{1}{\left(h^{(u)}\right)^2}\left(\sum_{j\neq u}\left(h^{(j)}\right)^2  \upsilon^{(j)}(t^{*}) + \sigma^2 \right),
        \end{equation}
        where 
        \begin{equation}\small
            \upsilon^{(j)}(t^{*}) = \int_{-\infty}^{+\infty} \frac{4\exp(\ell)}{(1+\exp(\ell))^2}\dot{\mathcal{D}}_{t^{*}}^{(j)}(\ell) d\ell .
        \end{equation}
        Therefore, we can rewrite condition (\ref{equ::Convergence}) as
        \begin{equation}\small\label{equ::Convergence::simple}
            \sigma_u^2(t^{*}) = \frac{1}{\left(h^{(u)}\right)^2}\left(\sum_{j\neq u}\left(h^{(j)}\right)^2 \int_{-\infty}^{+\infty} \frac{4\exp(\ell)}{(1+\exp(\ell))^2} \Delta\left(\mathcal{N}\left(\frac{2}{\sigma_j^2(t^{*})},\frac{4}{\sigma_j^2(t^{*})} \right)(\ell)\right) d\ell + \sigma^2 \right),
        \end{equation}
        for $1 \leq u \leq n_u$. Eq. (\ref{equ::Convergence::simple}) is referred to as the condition of convergence for the OSD-based JD. As shown, the converged interference-noise power $\sigma_u^2(t^{*})$ is determined by the noise power $ \sigma^2$, the receiving power $\mathbf{h}$, as well as the density-transform feature $\Delta(\cdot)$ of the decoder. Moreover, $\Delta(\cdot)$ is determined by the decoding order of SOSD and the structure of $\mathcal{C}(n,k)$, as as shown by Theorem \ref{the::Extrinsicdensity} and Remark \ref{rem::GeneralSOSD}. 
        
        Next, we provide some preliminary discussions on the convergence behavior for the two-user system and the equal-power system.
        
        \subsubsection{Two-User System}
        We define a function $g_d(\xi)$ as 
        \begin{equation}\small
            g_d(\xi) \triangleq \int_{-\infty}^{+\infty} \frac{4\exp(\ell)}{(1+\exp(\ell))^2} \Delta\left(\mathcal{N}\left(\frac{2}{\xi}, \frac{4}{\xi}\right)(\ell)\right) d\ell.
        \end{equation}
        In a two-user system, the condition (\ref{equ::Convergence::simple}) can be rewritten as
        \begin{equation}\small\label{equ::Convergence::2user}
            \begin{cases}
                   g_{(-)}\left(\xi_1, h^{(1)} ,h^{(2)}\right) = g_{(+)}\left(\xi_1, h^{(1)} ,h^{(2)}\right) \ \ \text{for user 1}, \\
                   g_{(-)}\left(\xi_2, h^{(2)} ,h^{(1)}\right) = g_{(+)}\left(\xi_2, h^{(2)} ,h^{(1)}\right)  \ \ \text{for user 2},
            \end{cases}
        \end{equation}
        where
        \begin{equation}\small
            g_{(-)}\left(x, y ,z\right) = g_d^{-1}\left(\left(\frac{y}{z}\right)^2 x-\left(\frac{1}{z}\right)^2\sigma^2\right),
        \end{equation}
        and
        \begin{equation}\small
            g_{(+)}\left(x, y ,z\right) = \left(\frac{y}{z}\right)^2 g_d(x) + \left(\frac{1}{z}\right)^2\sigma^2.
        \end{equation}
        
        Therefore, we have $\sigma_1^2(t^{*}) = \xi_1^{*}$ and $\sigma_2^2(t^{*}) = \xi_2^{*}$, where $\xi_1^{*}$ and $\xi_2^{*}$ are the solutions of (\ref{equ::Convergence::2user}). Accordingly, the converged density $\dot{\mathcal{L}}_{t^{*}}^{(1)}$ and $\dot{\mathcal{L}}_{t^{*}}^{(2)}$ are respectively given by $\mathcal{N}(2/\xi_1^{*}, 4/\xi_1^{*})$ and $\mathcal{N}(2/\xi_2^{*}, 4/\xi_2^{*})$. We validate the condition (\ref{equ::Convergence::2user}) in following numerical results.
        
        In Fig. \ref{Fig::Converge::2user::special}, we depict the convergence point of the two-user system described in Example \ref{exam::64DEDS-2}. It is found that $\sigma_1^2(t^{*}) = 0.213$ and $\sigma_1^2(t^{*}) = 0.637$. Thus, when converged, we have $\dot{\mathcal{L}}_{t^{*}}^{(1)} = \mathcal{N}(2/0.213,4/0.213)$ and $\dot{\mathcal{L}}_{t^{*}}^{(2)} = \mathcal{N}(2/0.637,4/0.637)$, which is consistent with $\dot{\mathcal{L}}_{5}^{(1)}$ and $\dot{\mathcal{L}}_{4}^{(2)}$ depicted in Fig. \ref{Fig::DE-on-2user}.
            
        Furthermore, for given $y$, $z$, $\xi_1$ and $\xi_2$, we observe $\xi_2 = g_{(-)}\left(\xi_1, y ,z\right)$ and $\xi_1 = g_{(+)}\left(\xi_2, z ,y\right)$. This indicates that the coordinate $\left(\xi_1^{*}, \xi_2^{*}\right)$ is exactly the intersection point of $g_{(-)}\left(\xi_1, h^{(1)} ,h^{(2)}\right)$ and  $g_{(+)}\left(\xi_1, h^{(1)} ,h^{(2)}\right)$. In Fig \ref{Fig::Converge::2user::trend}. we illustrate the intersection point $\left(\xi_1^{*}, \xi_2^{*}\right)$ with different values of $h^{(1)}$ and $h^{(2)}$, satisfying $(h^{(1)})^2 + (h^{(1)})^2 = 2$ at SNR = 5dB. For comparison, we also include results of the uncoded system, where $g_d(\xi) = \int_{-\infty}^{+\infty} \frac{4\exp(\ell)}{(1+\exp(\ell))^2}\mathcal{N}\left(2/\xi,4/\xi\right)(\ell) d\ell$, and the decoding-error-free system, i.e., $g_d(\xi) = 0$. The error-free decoding can be achieved by asymptotically increasing the code length to infinity ($n\rightarrow \infty$), as considered in the Shannon Capacity Theorem \cite{Shannon}. We can observe that finite block-length codes (e.g., length-32 and length-64 codes demonstrated) will have the JD performance between that of the error-free and uncoded system.

           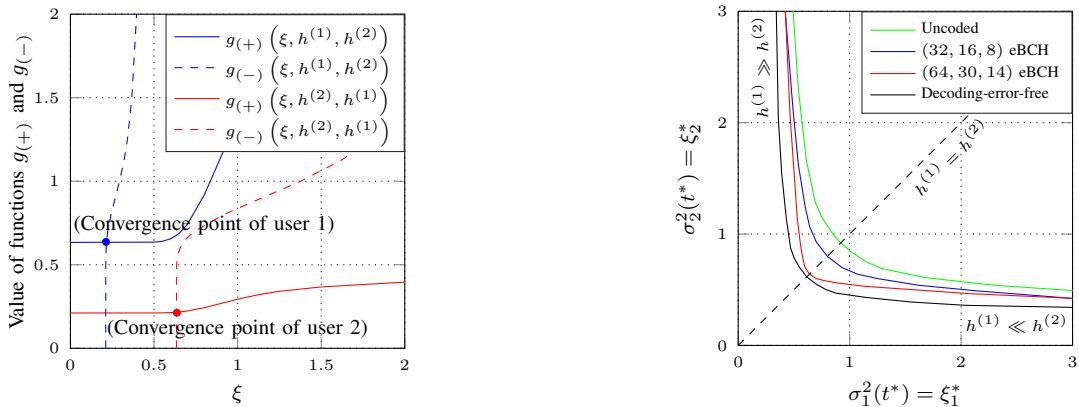
\begin{figure} 
             \centering
             \hspace{-0.81em}
             \begin{subfigure}[b]{0.48\columnwidth}
                 \centering
                 
                \begin{tikzpicture}
                
                \begin{axis}[%
                width=1.75in,
                height=1.75in,
                at={(0.785in,0.587in)},
                scale only axis,
                xmin=0,
                xmax=2,
                xlabel style={at={(0.5,1ex)},font=\color{white!15!black},font=\scriptsize},
                xlabel={$\xi$},
                ymin= 0,
                ymax=2,
                yminorticks=true,
                ylabel style={at={(3ex,0.5)},font=\color{white!15!black},font=\scriptsize},
                ylabel={Value of functions $g_{(+)}$ and $g_{(-)}$},
                axis background/.style={fill=white},
                tick label style={font=\tiny},
                xmajorgrids,
                ymajorgrids,
                yminorgrids,
                minor grid style={dotted},
                major grid style={dotted,black},
                legend style={at={(1,1)}, anchor=north east, legend cell align=left, align=left, draw=white!15!black,font = \tiny,row sep=-1pt,legend columns=1}
                ]
                
                \addplot [color=blue]
                  table[row sep=crcr]{%
                0	0.634\\
                0.5	0.6354196\\
                0.55	0.640555\\
                0.6	0.6556\\
                0.65	0.6868\\
                0.7	0.742\\
                0.8	0.9172\\
                0.875	1.0876\\
                1	1.3768\\
                1.125	1.6192\\
                1.25	1.8016\\
                1.5	2.038\\
                2	2.296\\
                3	2.5864\\
                4	2.77\\
                };
                \addlegendentry{$g_{(+)}\left(\xi, h^{(1)} ,h^{(2)}\right)$}
                
                \addplot [color=blue, dashed]
                  table[row sep=crcr]{%
                0.211333333333333	0\\
                0.211491066666667	0.5\\
                0.212061666666667	0.55\\
                0.213733333333333	0.6\\
                0.2172	0.65\\
                0.223333333333333	0.7\\
                0.2428	0.8\\
                0.261733333333333	0.875\\
                0.293866666666667	1\\
                0.3208	1.125\\
                0.341066666666667	1.25\\
                0.367333333333333	1.5\\
                0.396	2\\
                0.428266666666667	3\\
                0.448666666666667	4\\
                };
                \addlegendentry{$g_{(-)}\left(\xi, h^{(1)} ,h^{(2)}\right)$}
                
                \addplot [color=red]
                  table[row sep=crcr]{%
                0	0.211333333333333\\
                0.5	0.211491066666667\\
                0.55	0.212061666666667\\
                0.6	0.213733333333333\\
                0.65	0.2172\\
                0.7	0.223333333333333\\
                0.8	0.2428\\
                0.875	0.261733333333333\\
                1	0.293866666666667\\
                1.125	0.3208\\
                1.25	0.341066666666667\\
                1.5	0.367333333333333\\
                2	0.396\\
                3	0.428266666666667\\
                4	0.448666666666667\\
                };
                \addlegendentry{$g_{(+)}\left(\xi, h^{(2)} ,h^{(1)}\right)$}
                
                \addplot [color=red, dashed]
                  table[row sep=crcr]{%
                0.634	0\\
                0.6354196	0.5\\
                0.640555	0.55\\
                0.6556	0.6\\
                0.6868	0.65\\
                0.742	0.7\\
                0.9172	0.8\\
                1.0876	0.875\\
                1.3768	1\\
                1.6192	1.125\\
                1.8016	1.25\\
                2.038	1.5\\
                2.296	2\\
                2.5864	3\\
                2.77	4\\
                };
                \addlegendentry{$g_{(-)}\left(\xi, h^{(2)} ,h^{(1)}\right)$}
          
                \addplot [mark = *, mark size = 1.3pt, mark options={solid, red}]
                  table[row sep=crcr]{%
            	0.637   0.213 \\
                };
                
                \addplot [mark = *, mark size = 1.3pt, mark options={solid, blue}]
                  table[row sep=crcr]{%
            	0.213   0.637 \\
                };
                 \node[] at (axis cs: 0.8,0.737) {\scriptsize (Convergence point of user 1)};
                 \node[] at (axis cs: 1,0.113) {\scriptsize (Convergence point of user 2)};
          
                \end{axis}
                \end{tikzpicture}%

                \vspace{-0.11em}
                \caption{Convergence points with $h^{(1)}=1.225$ and $h^{(2)}=0.707$.}     
                \vspace{-0.11em}
                \label{Fig::Converge::2user::special}

             \end{subfigure}
             \hspace{-0.3em}
             \begin{subfigure}[b]{0.48\columnwidth}
                \centering
                \begin{tikzpicture}
                
                \begin{axis}[%
                width=1.75in,
                height=1.75in,
                at={(0.642in,0.505in)},
                scale only axis,
                xmin=0,
                xmax=3,
                xlabel style={at={(0.5,1ex)},font=\color{white!15!black},font=\scriptsize},
                xlabel={$\sigma_1^2(t^{*}) = \xi_1^{*}$},
                ymin=0,
                ymax=3,
                ylabel style={at={(3ex,0.5)},font=\color{white!15!black},font=\scriptsize},
                ylabel={$\sigma_2^2(t^{*}) = \xi_2^{*}$},
                axis background/.style={fill=white},
                tick label style={font=\tiny},
                xmajorgrids,
                ymajorgrids,
                minor grid style={dotted},
                major grid style={dotted,black},
                legend style={at={(1,1)}, anchor=north east, legend cell align=left, align=left, draw=white!15!black,font = \tiny,row sep=-2pt,legend columns=1}
                ]
                
                \addplot [color= green ]
                  table[row sep=crcr]{%
                0.37	8.67\\
                0.43	4.49\\
                0.49	3.04\\
                0.54	2.3\\
                0.61	1.67\\
                0.69	1.29\\
                0.75	1.15\\
                0.85	1\\
                0.92	0.92\\
                1	0.85\\
                1.15	0.75\\
                1.29	0.69\\
                1.67	0.61\\
                2.3	0.54\\
                3.04	0.49\\
                4.49	0.43\\
                8.67	0.37\\
                };
                \addlegendentry{Uncoded}
                
                \addplot [color=blue]
                  table[row sep=crcr]{%
                0.35	6.24\\
                0.41	3.15\\
                0.49	2.14\\
                0.54	1.62\\
                0.6	1.25\\
                0.64	1.07\\
                0.7	0.93\\
                0.8	0.8\\
                0.93	0.7\\
                1.07	0.64\\
                1.25	0.6\\
                1.62	0.54\\
                2.14	0.49\\
                3.15	0.41\\
                6.24	0.35\\
                };
                \addlegendentry{$(32,16,8)$ eBCH}

                \addplot [color=red]
                  table[row sep=crcr]{%
                0.369	6.235\\
                0.418	3.11\\
                0.465	2.07\\
                0.501	1.558\\
                0.53	1.1394\\
                0.56	0.895\\
                0.58	0.78\\
                0.6	0.7\\
                0.64	0.64\\
                0.7	0.6\\
                0.78	0.58\\
                0.895	0.56\\
                1.1394	0.53\\
                1.558	0.501\\
                2.07	0.465\\
                3.11	0.418\\
                6.235	0.369\\
                };
                \addlegendentry{$(64,30,14)$ eBCH}
                
                \addplot [color=black]
                  table[row sep=crcr]{%
                0.34	3.09\\
                0.36	2.04\\
                0.39	1.545\\
                0.432	1.137\\
                0.47	0.875\\
                0.501	0.787\\
                0.550	0.7\\
                0.615	0.615\\
                0.7	0.550\\
                0.787	0.501\\
                0.875	0.47\\
                1.137	0.432\\
                1.545	0.39\\
                2.04	0.36\\
                3.09	0.34\\
                };
                \addlegendentry{Decoding-error-free}
                
                \addplot [color=black, dashed]
                  table[row sep=crcr]{%
                0	0\\
                3   3\\
                };
                
                 \node[rotate=45] at (axis cs:1.9,1.7) {\tiny $h^{(1)} = h^{(2)}$};
                 
                 \node[] at (axis cs:2.5, 0.2) {\tiny $h^{(1)}\ll h^{(2)}$};
                 \node[rotate=90] at (axis cs:0.2 ,2.5) {\tiny $h^{(1)}\gg h^{(2)}$};

                \end{axis}
                \end{tikzpicture}%
                \vspace{-0.11em}
                 \caption{The converged coordinate $\left(\xi_1^{*}, \xi_2^{*}\right)$ for different codes}
                 \vspace{-0.11em}
                \label{Fig::Converge::2user::trend}
             \end{subfigure}
             \vspace{-0.11em}
             \caption{The Convergence point of the two-user system.}
             \vspace{-0.31em}
             \label{Fig::Converge::2user}
        \end{figure}

        \subsubsection{Equal-Power Case}
        If all users have the equal receiving power (e.g., the equal-power interleave-division multiple-access (IDMA) system \cite{ping2006interleave}), we further simplify (\ref{equ::Convergence::simple}) to 
        \begin{equation}\small \label{equ::Convergence::equal}
            \sigma_u^2(t^{*}) = (n_u-1) g_d(\sigma_u^2(t^{*})) + \sigma^2,
        \end{equation}
        by assuming $h^{(u)} = 1$, for $1\leq u \leq n_u$. Taking $\sigma^2 = n_u\cdot\mathrm{SNR}^{-1}$, we conclude that the converged interference-noise power, $\sigma_u^2(t^{*})$, is given by the intersection point of functions $g_d(\xi)$ and
        \begin{equation}\small
            g_e(\xi) = \frac{1}{n_u-1}\left(\xi  - n_u \cdot \mathrm{SNR}^{-1}\right).
        \end{equation}
        Therefore, for $\xi^{*}$ that makes $g_e(\xi^{*}) = g_d(\xi^{*})$, the converged density $\dot{\mathcal{L}}_{t^{*}}^{(u)}$ is given by $\mathcal{N}\left(2/\xi^{*},4/\xi^{*} \right)$. This will be elaborated in following numerical results.
        
        Consider an equal-power system with $h^{(u)}=1$ for $1\leq u \leq n_u$, which is decoded by the iterative JD with the order-3 SOSD. In Fig. \ref{Fig::Convergence::2user::Numerical}, we demonstrate convergence points obtained by $g_e(\xi^{*}) = g_d(\xi^{*})$ when $n_u = 2$ at different SNRs. As shown, for a specific SNR, the $(30,16,8)$ eBCH code will give a larger $\xi^{*}$ than that of the $(64,30,14)$ eBCH code; accordingly, the converged density $\dot{\mathcal{L}}_{t^{*}}^{(u)}$ will have a smaller mean value (i.e., $\frac{2}{\xi^{*}}$). This is because the $(64,30,14)$ code has a higher minimum Hamming distance and thus provides a larger coding gain.
        
        In Fig. \ref{Fig::Convergence::2user::Numerical}, we find the values of $\xi^{*}$ with applying the $(64,30,14)$ eBCH code as $\xi^{*} = 0.159$, $\xi^{*} = 0.318$, $\xi^{*} = 0.656$, and $\xi^{*} = 1.452$, for $\mathrm{SNR} = 11$ dB, $8$ dB, $5$ dB, and $3$ dB, respectively. In Fig. \ref{Fig::Convergence::2user::Simulation}, we compare the density $\dot{\mathcal{L}}_{t^{*}}^{(u)}$ obtained from simulations and that of $\mathcal{N}(2/\xi^{*},4/\xi^{*})$. As shown, the Gaussian density $\mathcal{N}(2/\xi^{*},4/\xi^{*})$ accurately describes the converged density $\dot{\mathcal{L}}_{t^{*}}^{(u)}$, especially at moderate-to-high SNRs.
        
        Furthermore, in Fig. \ref{Fig::Convergence::muser::Numerical}, we illiterate values of $\xi^{*}$ at SNR = 8 dB for different numbers of users. The corresponding densities $\mathcal{N}(2/\xi^{*},4/\xi^{*})$ with the $(64,30,14)$ eBCH code is depicted in Fig. \ref{Fig::Convergence::muser::Simulation}. As shown, as the number of user increases, the value of $\xi^{*}$ increases, indicating that the JD performance is degraded. Moreover, we notice that $g_e(\xi)$ for $n_u=3$ has multiple intersection points (multiple possible values of $\xi^{*}$) with $g_d(\xi)$ when applying the $(64,30,14)$ eBCH code. Despite this, JD will converge at the largest $\xi^{*}$, because at the beginning of the JD iterations without any interference cancellation, $\sigma_u^2(t)$ is initialized to be $\sigma_u^2(0) = (n_u-1)+\sigma^2$, which is larger than all possible values of $\xi^{*}$.
    
       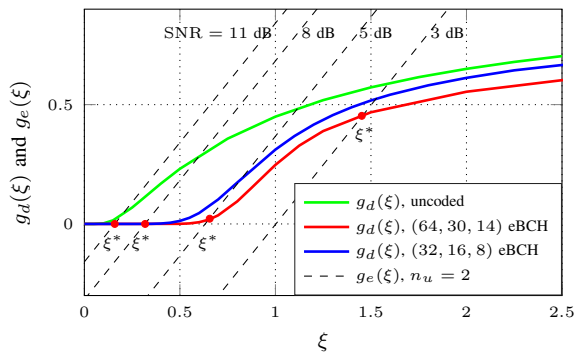
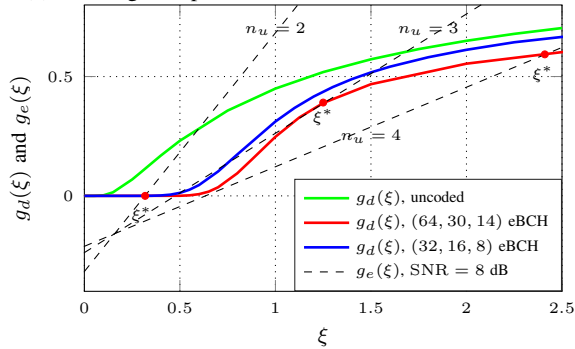
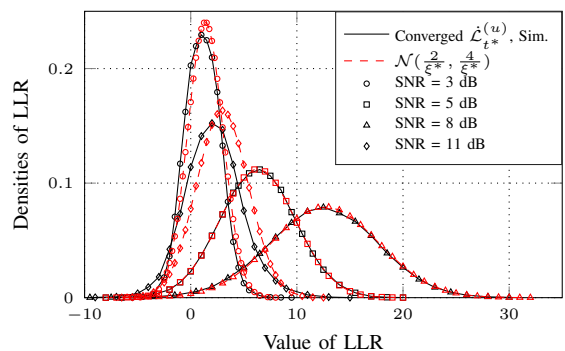
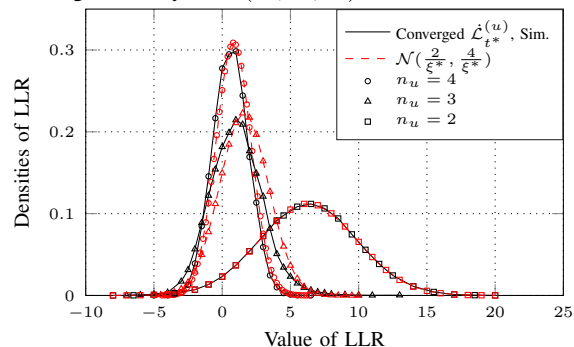
\begin{figure} 
             \centering
             
             \begin{subfigure}[b]{0.48\columnwidth}
                 \centering
                 
                \begin{tikzpicture}
                
                \begin{axis}[%
                width=2.5in,
                height=1.5in,
                at={(0.785in,0.587in)},
                scale only axis,
                xmin=0,
                xmax=2.5,
                xlabel style={at={(0.5,1ex)},font=\color{white!15!black},font=\scriptsize},
                xlabel={$\xi$},
                ymin= -0.3,
                ymax=0.9,
                yminorticks=true,
                ylabel style={at={(2ex,0.5)},font=\color{white!15!black},font=\scriptsize},
                ylabel={$g_d(\xi)$ and $g_e(\xi)$},
                axis background/.style={fill=white},
                tick label style={font=\tiny},
                xmajorgrids,
                ymajorgrids,
                yminorgrids,
                minor grid style={dotted},
                major grid style={dotted,black},
                legend style={at={(1,0)}, anchor=south east, legend cell align=left, align=left, draw=white!15!black,font = \tiny,row sep=-1pt,legend columns=1}
                ]
                
                \addplot [color=green, line width=1.0pt]
                  table[row sep=crcr]{%
                   0          0 \\
                    0.0500    1.2037e-05\\
                    0.1          0.0024\\
                    0.15            0.0150
                    0.2          0.0385\\
                    0.25          0.0686\\
                    0.3        0.1017 \\
                    0.4        0.1688\\
                    0.5         0.2310\\
                    0.75         0.3574\\
                    1             0.4496\\
                    1.25          0.5188\\
                    1.5         0.5723\\
                    1.75     0.6151  \\
                    2         0.6499\\
                    2.25     0.6789\\
                    2.5      0.7033\\
                };
                \addlegendentry{$g_d(\xi)$, uncoded}
                
                \addplot [color=red, line width=1.0pt]
                  table[row sep=crcr]{%
                    4	0.712\\
                    3	0.6508\\
                    2	0.554\\
                    1.5	0.468\\
                    1.25	0.3892\\
                    1.125	0.3284\\
                    1	0.2476\\
                    0.875	0.1512\\
                    0.8	0.0944\\
                    0.7	0.036\\
                    0.65	0.0176\\
                    0.6	0.0072\\
                    0.55	0.00218552\\
                    0.5	0.0004732\\
                    0	0\\
                };
                \addlegendentry{$g_d(\xi)$, $(64,30,14)$ eBCH}
                
                \addplot [color=blue, line width=1.0pt]
                  table[row sep=crcr]{%
                    0       0\\
                    0.25	8.90692465565192e-06\\
                    0.3	7.47036609804761e-05\\
                    0.35	0.000419807064338085\\
                    0.4	0.00168945080462062\\
                    0.45	0.00516479472899244\\
                    0.5	0.0126145455978419\\
                    0.55	0.0257061297476133\\
                    0.6	0.0453246065802833\\
                    0.7	0.102346339674643\\
                    0.8	0.173278146866535\\
                    1	0.311596994044812\\
                    1.1	0.368426654006934\\
                    1.2	0.416121073278132\\
                    1.3	0.455882823776456\\
                    1.4	0.489174700798229\\
                    1.5	0.517335442170611\\
                    1.6	0.541466361762104\\
                    1.8	0.580879982416608\\
                    2	0.612121466593376\\
                    2.25	0.643703627119796\\
                    3	0.710652628304421\\
                };
                \addlegendentry{$g_d(\xi)$, $(32,16,8)$ eBCH}

                \addplot [color=black, dashed]
                  table[row sep=crcr]{%
                0	-0.317\\
                1	0.683\\
                2	1.683\\
                3	2.683\\
                4	3.683\\
                5	4.683\\
                };
                \addlegendentry{$g_e(\xi)$, $n_u=2$}
                
                \addplot [color=black, dashed, forget plot]
                  table[row sep=crcr]{%
                0	-0.6324\\
                1	0.3676\\
                2	1.3676\\
                3	2.3676\\
                4	3.3676\\
                5	4.3676\\
                };
                
                \addplot [color=black, dashed, forget plot]
                  table[row sep=crcr]{%
                0	-0.1588\\
                1	0.8412\\
                2	1.8412\\
                3	2.8412\\
                4	3.8412\\
                5	4.8412\\
                };
                
                \addplot [color=black, dashed, forget plot]
                  table[row sep=crcr]{%
                0	-1.0024\\
                1	-0.00239999999999996\\
                2	0.9976\\
                3	1.9976\\
                4	2.9976\\
                5	3.9976\\
                };
                \node[] at (axis cs: 0.70,0.8) {\tiny $\mathrm{SNR} = 11$ dB};
                \node[] at (axis cs: 1.225,0.8) {\tiny $8$ dB};
                \node[] at (axis cs: 1.525,0.8) {\tiny $5$ dB};
                \node[] at (axis cs: 1.9,0.8) {\tiny $3$ dB};
                \node[] at (axis cs: 0.15,-0.08) {\tiny $\xi^{*}$};
                \node[] at (axis cs: 0.3,-0.08) {\tiny $\xi^{*}$};
                \node[] at (axis cs: 0.65,-0.08) {\tiny $\xi^{*}$};
                \node[] at (axis cs: 1.47,0.375) {\tiny $\xi^{*}$};

                \addplot [mark = *, mark size = 1.3pt, mark options={solid, red}]
                  table[row sep=crcr]{%
            	0.159   0\\
                };
                \addplot [mark = *, mark size = 1.3pt, mark options={solid, red}]
                  table[row sep=crcr]{%
            	0.318   0\\
                };
                \addplot [mark = *, mark size = 1.3pt, mark options={solid, red}]
                  table[row sep=crcr]{%
            	0.656   0.022 \\
                };
                \addplot [mark = *, mark size = 1.3pt, mark options={solid, red}]
                  table[row sep=crcr]{%
            	1.452   0.453 \\
                };

                \end{axis}
                \end{tikzpicture}%

                \vspace{-0.11em}
                \caption{Convergence points when $n_u=2$ at different SNR}     
                \label{Fig::Convergence::2user::Numerical}

             \end{subfigure}
             \hspace{-0.3em}
             \begin{subfigure}[b]{0.48\columnwidth}
                \centering
                \begin{tikzpicture}
                
                \begin{axis}[%
                width=2.5in,
                height=1.5in,
                at={(0.642in,0.505in)},
                scale only axis,
                xmin=-10,
                xmax=35,
                xlabel style={at={(0.5,1ex)},font=\color{white!15!black},font=\scriptsize},
                xlabel={Value of LLR},
                ymin=0,
                ymax=0.25,
                ylabel style={at={(2ex,0.5)},font=\color{white!15!black},font=\scriptsize},
                ylabel={Densities of LLR },
                axis background/.style={fill=white},
                tick label style={font=\tiny},
                xmajorgrids,
                ymajorgrids,
                minor grid style={dotted},
                major grid style={dotted,black},
                legend style={at={(1,1)}, anchor=north east, legend cell align=left, align=left, draw=white!15!black,font = \tiny,row sep=-2pt,legend columns=1}
                ]
                
                \addplot [color=black]
                  table[row sep=crcr]{%
                -10	10\\
                };
                \addlegendentry{Converged $\dot{\mathcal{L}}_{t^{*}}^{(u)}$, Sim.}
                
               \addplot [color=red, dashed]
                  table[row sep=crcr]{%
                -10	10\\
                };
                \addlegendentry{$\mathcal{N}(\frac{2}{\xi^{*}},\frac{4}{\xi^{*}})$}
                
               \addplot [only marks, mark=o, mark size = 1pt, mark options={solid, black}]
                  table[row sep=crcr]{%
                -10	10\\
                };
                \addlegendentry{SNR = 3 dB}
                
               \addplot [only marks, mark=square, mark size = 1pt, mark options={solid, black}]
                  table[row sep=crcr]{%
                -10	10\\
                };
                \addlegendentry{SNR = 5 dB}
                
               \addplot [only marks, mark=triangle, mark size = 1.3pt, mark options={solid, black}]
                  table[row sep=crcr]{%
                -10	10\\
                };
                \addlegendentry{SNR = 8 dB}
                
               \addplot [only marks, mark=diamond, mark size = 1.3pt, mark options={solid, black}]
                  table[row sep=crcr]{%
                -10	10\\
                };
                \addlegendentry{SNR = 11 dB}
                
                \addplot [color=black, mark= o, mark size = 1pt, mark options={solid, black}]
                  table[row sep=crcr]{%
                -6.5	2.50475904218014e-05\\
                -5.5	0.000112714156898106\\
                -4.5	0.000726380122232241\\
                -3.5	0.00328123434525599\\
                -2.5	0.01765855124737\\
                -2	0.0361061015930267\\
                -1.5	0.0706592525799018\\
                -1	0.116947199679391\\
                -0.5	0.162208195571586\\
                -0	0.20272267307885\\
                0.5	0.219604749023144\\
                1	0.229410880673279\\
                1.5	0.224739505059613\\
                2	0.209786093577798\\
                2.5	0.179465985372207\\
                3	0.132714657849915\\
                3.5	0.0838217613465585\\
                4	0.0487551347560365\\
                4.5	0.0262749223524697\\
                5	0.0126365093677988\\
                5.5	0.00652489730487927\\
                6.5	0.0018034265103697\\
                7.5	0.000375713856327021\\
                8.5	0.000162809337741709\\
                9.5	2.50475904218014e-05\\
                };
                
                \addplot [color=black, mark= diamond, mark size = 1.3pt, mark options={solid, black}]
                  table[row sep=crcr]{%
                -9	0.000221660681828257\\
                -7	0.000731480250033249\\
                -5	0.00374606552289755\\
                -4	0.00919891829587268\\
                -3	0.0181318437735514\\
                -2	0.042226359888283\\
                -1	0.0733222502992419\\
                -0	0.113996985414727\\
                1	0.141087023983686\\
                2	0.152369552688744\\
                3	0.146539876756661\\
                4	0.116020304118455\\
                5	0.0759389989803609\\
                6	0.0444872988429312\\
                7	0.0260229640466374\\
                8	0.0148512656824932\\
                9	0.00797978454581726\\
                11	0.002416101431928\\
                13	0.000531985636387818\\
                15	6.64982045484772e-05\\
                };
                
                \addplot [color=black, mark= square, mark size = 1pt, mark options={solid, black}]
                  table[row sep=crcr]{%
                -8	3.12836299021448e-05\\
                -6.5	0.000131391245589008\\
                -5	0.000744550391671046\\
                -3.5	0.00270290562354531\\
                -2	0.00695747929023701\\
                -1	0.013389393598118\\
                -0	0.0229371574442526\\
                1	0.0369522236404134\\
                2	0.0541519633606127\\
                3	0.0726656155367019\\
                4	0.091642265435343\\
                4.5	0.0977801136221438\\
                5.5	0.10852291213054\\
                6.5	0.11187026053007\\
                7.5	0.106383111845234\\
                8.5	0.0940949020196711\\
                9.5	0.0746677678504392\\
                10.5	0.0557912255674851\\
                11.5	0.0385101484095403\\
                12.5	0.0244137447756338\\
                14	0.011187026053007\\
                15.5	0.00427334384463298\\
                17	0.00116375103235979\\
                18.5	0.000337863202943164\\
                20	8.75941637260055e-05\\
                };
                
                \addplot [color=black, mark= triangle, mark size = 1.3pt, mark options={solid, black}]
                  table[row sep=crcr]{%
                -9.5	9.99518981490158e-05\\
                -7.5	0.000187409809029405\\
                -5.5	0.000243632751738226\\
                -3.5	0.000649687337968603\\
                -1.5	0.00216145979747247\\
                0.5	0.00530369759553215\\
                2	0.00940797241327611\\
                3.5	0.0171417505325562\\
                5	0.0263435721559\\
                6.5	0.0384377518319309\\
                8	0.0527621082354117\\
                9.5	0.0654060233512622\\
                11	0.0753137552552834\\
                12.5	0.0788620476395735\\
                14	0.0738269707703168\\
                15.5	0.0665054942309014\\
                17	0.0540489889240803\\
                18.5	0.038818818443624\\
                20	0.0263873011113402\\
                21.5	0.0159985506974768\\
                23	0.00896443586523985\\
                24.5	0.00475396215571256\\
                26	0.00217395378474109\\
                28	0.000718404267946051\\
                30	0.000237385758103912\\
                };

                \addplot [color=red, dashed , mark= triangle, mark size = 1.3pt, mark options={solid, red}]
                  table[row sep=crcr]{%
                -6	8.18827019538721e-05\\
                -5	0.000167760565788849\\
                -4	0.000330394620322295\\
                -3	0.000625491511578361\\
                -2	0.00113829608296343\\
                -1	0.00199128962223193\\
                0	0.00334856695676193\\
                1	0.00541288639531214\\
                2	0.00841093361420983\\
                3	0.0125633328509476\\
                4	0.0180389350505004\\
                5	0.0248978755415586\\
                6	0.0330338396571758\\
                7	0.0421309439005178\\
                8	0.0516521840482948\\
                9	0.0608725588891182\\
                10	0.0689604036833106\\
                11	0.07509714035493\\
                12	0.0786126383852428\\
                13	0.0791055059820048\\
                14	0.0765184952981636\\
                15	0.0711494418787325\\
                16	0.0635948492362225\\
                17	0.0546408887242093\\
                18	0.0451293350764341\\
                19	0.0358298906791733\\
                20	0.0273449667325275\\
                21	0.0200610972654861\\
                22	0.0141474228988827\\
                23	0.00959059067894197\\
                24	0.00624969386041238\\
                25	0.00391487139738406\\
                26	0.00235733673625816\\
                27	0.00136449238481274\\
                28	0.000759217091227645\\
                29	0.000406074954353638\\
                30	0.000208781397414439\\
                31	0.000103186472001345\\
                32	4.90229109477914e-05\\
                };

                \addplot [color=red, dashed , mark= square, mark size = 1pt, mark options={solid, red}]
                  table[row sep=crcr]{%
                -8	3.36398382019608e-05\\
                -7	0.000100492794197494\\
                -6	0.000277330802192349\\
                -5	0.000707039140253133\\
                -4	0.00166521750147777\\
                -3	0.00362310265736995\\
                -2	0.0072823655412947\\
                -1	0.0135221725114033\\
                0	0.0231954408997449\\
                1	0.0367570758190361\\
                2	0.053809807517546\\
                3	0.0727719622011034\\
                4	0.0909177949835009\\
                5	0.104933912414137\\
                6	0.111883230926832\\
                7	0.110203730300099\\
                8	0.100278944370926\\
                9	0.0842956908131069\\
                10	0.0654610786159555\\
                11	0.0469616234774885\\
                12	0.0311232692377362\\
                13	0.0190550230964501\\
                14	0.0107774473927786\\
                15	0.00563124608090209\\
                16	0.00271816153741573\\
                17	0.001212071338931\\
                18	0.00049930195121559\\
                19	0.000190011771480367\\
                20	6.68005307186608e-05\\
                };

                \addplot [color=red, dashed , mark= diamond, mark size = 1.3pt, mark options={solid, red}]
                  table[row sep=crcr]{%
                -6.5	8.69259442226473e-05\\
                -5.5	0.000392115539636646\\
                -4.5	0.00149563559156078\\
                -3.5	0.00482374820306277\\
                -2.5	0.0131549914079281\\
                -2	0.020399696256476\\
                -1.5	0.0303349726971966\\
                -1	0.0432563918696095\\
                -0.5	0.0591485018042762\\
                0	0.0775575321557951\\
                0.5	0.0975193930547551\\
                1	0.117583063983863\\
                1.5	0.13595191558603\\
                2	0.150734500040247\\
                2.5	0.160260603380191\\
                3	0.163390821904117\\
                3.5	0.159740601879354\\
                4	0.149757903168251\\
                4.5	0.13463282542326\\
                5	0.116064377005547\\
                5.5	0.0959475090645184\\
                6	0.0760598105234261\\
                6.5	0.0578180644941747\\
                7	0.0421462204484271\\
                7.5	0.0294605258750148\\
                8	0.0197473639460084\\
                8.5	0.0126930076277953\\
                9.5	0.00462419018894249\\
                10.5	0.00142447206069377\\
                11.5	0.000371038763261108\\
                12.5	8.17206349982153e-05\\
                };
                
                \addplot [color=red, dashed , mark= o, mark size = 1pt, mark options={solid, red}]
                  table[row sep=crcr]{%
                -5.5	4.436117835912e-05\\
                -4.75	0.00026151449853076\\
                -4	0.00125622280734717\\
                -3.5	0.00319187796294999\\
                -3	0.00740465937026985\\
                -2.5	0.015683503801662\\
                -2.25	0.0220593036401654\\
                -2	0.0303291545629631\\
                -1.75	0.0407613548174029\\
                -1.5	0.0535496599481061\\
                -1.25	0.0687677198561104\\
                -1	0.0863241543145567\\
                -0.75	0.10592533368393\\
                -0.5	0.127053651079671\\
                -0.25	0.148968436315802\\
                0	0.170734453620995\\
                0.25	0.191279258513718\\
                0.5	0.209476057055165\\
                0.75	0.224243933168583\\
                1	0.234653380030354\\
                1.25	0.240022926700484\\
                1.5	0.239992925709747\\
                1.75	0.234565401509846\\
                2	0.224103824498873\\
                2.25	0.209292845671912\\
                2.5	0.191064190346418\\
                2.75	0.170499855070731\\
                3	0.148726559185183\\
                3.25	0.126815648680873\\
                3.5	0.105700481340079\\
                3.75	0.0861193777165254\\
                4	0.0685874413105984\\
                4.25	0.0533959257745223\\
                4.5	0.0406341744064903\\
                4.75	0.0302269662061642\\
                5	0.0219794834260152\\
                5.5	0.0108548224502108\\
                6	0.00489448688754843\\
                6.5	0.0020149803564643\\
                7.25	0.000448760674039618\\
                8	8.14398511993284e-05\\
                };

                \end{axis}
                \end{tikzpicture}%
                \vspace{-0.11em}
                 \caption{Converged density with $(64,30,14)$ eBCH code when $n_u=2$}
                \label{Fig::Convergence::2user::Simulation}
             \end{subfigure}

             \begin{subfigure}[b]{0.48\columnwidth}
                 \centering
                 
                \begin{tikzpicture}
                
                \begin{axis}[%
                width=2.5in,
                height=1.5in,
                at={(0.785in,0.587in)},
                scale only axis,
                xmin=0,
                xmax=2.5,
                xlabel style={at={(0.5,1ex)},font=\color{white!15!black},font=\scriptsize},
                xlabel={$\xi$},
                ymin= -0.4,
                ymax=0.8,
                yminorticks=true,
                ylabel style={at={(2ex,0.5)},font=\color{white!15!black},font=\scriptsize},
                ylabel={$g_d(\xi)$ and $g_e(\xi)$},
                axis background/.style={fill=white},
                tick label style={font=\tiny},
                xmajorgrids,
                ymajorgrids,
                yminorgrids,
                minor grid style={dotted},
                major grid style={dotted,black},
                legend style={at={(1,0)}, anchor=south east, legend cell align=left, align=left, draw=white!15!black,font = \tiny,row sep=-1pt,legend columns=1}
                ]
                
                \addplot [color=green, line width=1.0pt]
                  table[row sep=crcr]{%
                   0          0 \\
                    0.0500    1.2037e-05\\
                    0.1          0.0024\\
                    0.15            0.0150
                    0.2          0.0385\\
                    0.25          0.0686\\
                    0.3        0.1017 \\
                    0.4        0.1688\\
                    0.5         0.2310\\
                    0.75         0.3574\\
                    1             0.4496\\
                    1.25          0.5188\\
                    1.5         0.5723\\
                    1.75     0.6151  \\
                    2         0.6499\\
                    2.25     0.6789\\
                    2.5      0.7033\\
                };
                \addlegendentry{$g_d(\xi)$, uncoded}
                
                \addplot [color=red, line width=1.0pt]
                  table[row sep=crcr]{%
                    4	0.712\\
                    3	0.6508\\
                    2	0.554\\
                    1.5	0.468\\
                    1.25	0.3892\\
                    1.125	0.3284\\
                    1	0.2476\\
                    0.875	0.1512\\
                    0.8	0.0944\\
                    0.7	0.036\\
                    0.65	0.0176\\
                    0.6	0.0072\\
                    0.55	0.00218552\\
                    0.5	0.0004732\\
                    0	0\\
                };
                \addlegendentry{$g_d(\xi)$, $(64,30,14)$ eBCH}
                
                \addplot [color=blue, line width=1.0pt]
                  table[row sep=crcr]{%
                    0       0\\
                    0.25	8.90692465565192e-06\\
                    0.3	7.47036609804761e-05\\
                    0.35	0.000419807064338085\\
                    0.4	0.00168945080462062\\
                    0.45	0.00516479472899244\\
                    0.5	0.0126145455978419\\
                    0.55	0.0257061297476133\\
                    0.6	0.0453246065802833\\
                    0.7	0.102346339674643\\
                    0.8	0.173278146866535\\
                    1	0.311596994044812\\
                    1.1	0.368426654006934\\
                    1.2	0.416121073278132\\
                    1.3	0.455882823776456\\
                    1.4	0.489174700798229\\
                    1.5	0.517335442170611\\
                    1.6	0.541466361762104\\
                    1.8	0.580879982416608\\
                    2	0.612121466593376\\
                    2.25	0.643703627119796\\
                    3	0.710652628304421\\
                };
                \addlegendentry{$g_d(\xi)$, $(32,16,8)$ eBCH}

                \addplot [color=black, dashed]
                  table[row sep=crcr]{%
                0	-0.316978638492223\\
                1	0.683021361507777\\
                2	1.68302136150778\\
                3	2.68302136150778\\
                };
                \addlegendentry{$g_e(\xi)$, $\mathrm{SNR} = 8$ dB}
                
                \addplot [color=black, dashed]
                  table[row sep=crcr]{%
                0	-0.237733978869167\\
                1	0.262266021130833\\
                2	0.762266021130833\\
                3	1.26226602113083\\
                };
                
                \addplot [color=black, dashed]
                  table[row sep=crcr]{%
                0	-0.211319092328148\\
                1	0.122014241005185\\
                2	0.455347574338518\\
                3	0.788680907671852\\
                4	1.12201424100518\\
                5	1.45534757433852\\
                };

                \node[] at (axis cs: 1,0.7) {\tiny $n_u = 2$};
                \node[] at (axis cs: 1.8,0.7) {\tiny $n_u = 3$};
                \node[] at (axis cs: 1.5,0.25) {\tiny $n_u = 4$};
                \node[] at (axis cs: 0.3,-0.08) {\tiny $\xi^{*}$};
                \node[] at (axis cs: 1.25,0.33) {\tiny $\xi^{*}$};
                \node[] at (axis cs: 2.4,0.51) {\tiny $\xi^{*}$};

                \addplot [mark = *, mark size = 1.3pt, mark options={solid, red}]
                  table[row sep=crcr]{%
            	0.318   0\\
                };
                \addplot [mark = *, mark size = 1.3pt, mark options={solid, red}]
                  table[row sep=crcr]{%
            	1.250   0.391 \\
                };
                \addplot [mark = *, mark size = 1.3pt, mark options={solid, red}]
                  table[row sep=crcr]{%
            	2.41   0.593 \\
                };

                \end{axis}
                \end{tikzpicture}%

                \vspace{-0.11em}
                \caption{Convergence points at SNR = 8 dB for various user numbers}     
                \label{Fig::Convergence::muser::Numerical}

             \end{subfigure}             
             \hspace{-0.3em}
             \begin{subfigure}[b]{0.48\columnwidth}
                \centering
                \begin{tikzpicture}
                
                \begin{axis}[%
                width=2.5in,
                height=1.5in,
                at={(0.642in,0.505in)},
                scale only axis,
                xmin=-10,
                xmax=25,
                xlabel style={at={(0.5,1ex)},font=\color{white!15!black},font=\scriptsize},
                xlabel={Value of LLR},
                ymin=0,
                ymax=0.35,
                ylabel style={at={(2ex,0.5)},font=\color{white!15!black},font=\scriptsize},
                ylabel={Densities of LLR },
                axis background/.style={fill=white},
                tick label style={font=\tiny},
                xmajorgrids,
                ymajorgrids,
                minor grid style={dotted},
                major grid style={dotted,black},
                legend style={at={(1,1)}, anchor=north east, legend cell align=left, align=left, draw=white!15!black,font = \tiny,row sep=-2pt,legend columns=1}
                ]
                
                \addplot [color=black]
                  table[row sep=crcr]{%
                -10	10\\
                };
                \addlegendentry{Converged $\dot{\mathcal{L}}_{t^{*}}^{(u)}$, Sim.}
                
               \addplot [color=red, dashed]
                  table[row sep=crcr]{%
                -10	10\\
                };
                \addlegendentry{$\mathcal{N}(\frac{2}{\xi^{*}},\frac{4}{\xi^{*}})$}
                
               \addplot [only marks, mark=o, mark size = 1pt, mark options={solid, black}]
                  table[row sep=crcr]{%
                -10	10\\
                };
                \addlegendentry{$n_u=4$}
                
               \addplot [only marks, mark=triangle, mark size = 1.3pt, mark options={solid, black}]
                  table[row sep=crcr]{%
                -10	10\\
                };
                \addlegendentry{$n_u=3$}
                
               \addplot [only marks, mark=square, mark size = 1pt, mark options={solid, black}]
                  table[row sep=crcr]{%
                -10	10\\
                };
                \addlegendentry{$n_u=2$}

                \addplot [color=black, mark= o, mark size = 1pt, mark options={solid, black}]
                  table[row sep=crcr]{%
                -5	4.70639913402256e-05\\
                -4	0.000219631959587719\\
                -3.5	0.00122366377484587\\
                -3	0.00365530332742419\\
                -2.5	0.0129112216243352\\
                -2	0.0391415527979543\\
                -1.5	0.0846994964152927\\
                -1	0.145553237218204\\
                -0.5	0.216588488147718\\
                -0	0.277520668936197\\
                0.5	0.294761777763833\\
                1	0.298322953108577\\
                1.5	0.244309179047111\\
                2	0.169320552845018\\
                2.5	0.113502659115511\\
                3	0.0591594371146636\\
                3.5	0.0245830914767112\\
                4	0.00999325416124123\\
                4.5	0.00288659146886717\\
                5	0.00108247180082519\\
                5.5	0.000251007953814536\\
                6.5	7.84399855670427e-05\\
                };
                
                \addplot [color=red, dashed ,mark= o, mark size = 1pt, mark options={solid, red}]
                  table[row sep=crcr]{%
                -4.4	8.34982671997996e-05\\
                -4.2	0.000154597641346079\\
                -4	0.000279450696900982\\
                -3.8	0.000493156160538933\\
                -3.6	0.000849651172146646\\
                -3.4	0.00142913676414973\\
                -3.2	0.00234684201394329\\
                -3	0.00376245097677348\\
                -2.8	0.00588890811228664\\
                -2.6	0.00899861431570859\\
                -2.4	0.0134243550807224\\
                -2.2	0.0195518634800434\\
                -2	0.0278009613368557\\
                -1.8	0.0385929879468674\\
                -1.6	0.052303882346431\\
                -1.4	0.0692048260335527\\
                -1.2	0.0893955228154343\\
                -1	0.112738457491467\\
                -0.8	0.13880507357812\\
                -0.6	0.166845894766946\\
                -0.399999999999999	0.195795471289367\\
                -0.199999999999999	0.2243193263041\\
                0	0.250904051362369\\
                0.2	0.273984243856459\\
                0.4	0.29209251975579\\
                0.600000000000001	0.304013041622829\\
                0.800000000000001	0.308916372330514\\
                1	0.306454900367558\\
                1.2	0.296803588116405\\
                1.4	0.280639408236754\\
                1.6	0.25906282518362\\
                1.8	0.233473971676886\\
                2	0.205422862922229\\
                2.2	0.176455831846485\\
                2.4	0.147979033851165\\
                2.6	0.121154985824716\\
                2.8	0.0968410137740017\\
                3	0.0755708480156673\\
                3.2	0.0575739725749087\\
                3.4	0.0428227969823957\\
                3.6	0.0310957327935688\\
                3.8	0.0220446645162587\\
                4	0.0152574910762666\\
                4.2	0.0103095494889509\\
                4.4	0.00680100639544329\\
                4.6	0.00438009581868486\\
                4.8	0.00275404461586276\\
                5	0.00169057819993546\\
                5.2	0.00101315639424481\\
                5.4	0.000592781508225217\\
                5.6	0.000338602170380215\\
                5.8	0.000188825985545157\\
                6	0.000102804166210903\\
                6.2	5.4643260153556e-05\\
                6.4	2.83556379401574e-05\\
                };
                
                \addplot [color=black, mark= triangle, mark size = 1.3pt, mark options={solid, black}]
                  table[row sep=crcr]{%
                -7	8.5404389785635e-05\\
                -6	0.00068323511828508\\
                -5	0.00153727901614143\\
                -4	0.00358698437099667\\
                -3	0.0171662823469126\\
                -2.5	0.0312580066615424\\
                -2	0.056623110427876\\
                -1.5	0.0888205653770604\\
                -1	0.122128277393458\\
                -0.5	0.154154923563071\\
                0	0.18156973268426\\
                0.5	0.198976855410368\\
                1	0.214442736356649\\
                1.5	0.208981381843027\\
                2	0.176530873686908\\
                2.5	0.149116064565719\\
                3	0.121018020326245\\
                3.5	0.0817320010248527\\
                4	0.0508156119224528\\
                4.5	0.0342471603040396\\
                5	0.0231445896319071\\
                5.5	0.0148603638227005\\
                6	0.00956529165599112\\
                6.5	0.00649073362370826\\
                7	0.00427021948928175\\
                7.5	0.00315996242206849\\
                8	0.0017080877957127\\
                9	0.000939448287641985\\
                10	0.000427021948928175\\
                11	0.00017080877957127\\
                13	0.00034161755914254\\
                };
                
                \addplot [color=red, dashed, mark= triangle, mark size = 1.3pt, mark options={solid, red}]
                  table[row sep=crcr]{%
                -5	0.000246841546803984\\
                -4.5	0.000665763245336339\\
                -4	0.00166070355497138\\
                -3.5	0.00383120295319876\\
                -3	0.00817426938459663\\
                -2.5	0.0161299669131699\\
                -2	0.0294366748518869\\
                -1.5	0.0496837958584363\\
                -1	0.0775553085762858\\
                -0.5	0.111964162327786\\
                0	0.149491769959102\\
                0.5	0.184597695985942\\
                1	0.210817186003452\\
                1.5	0.222667324871662\\
                2	0.217509241846573\\
                2.5	0.196503224506039\\
                3	0.164184589544398\\
                3.5	0.126872017340654\\
                4	0.0906713418549037\\
                4.5	0.0599300974705028\\
                5	0.0366345337589458\\
                5.5	0.0207112874868425\\
                6	0.0108291474944576\\
                6.5	0.00523663355676915\\
                7	0.00234196738951433\\
                7.5	0.000968679806132724\\
                8	0.000370553050179831\\
                8.5	0.000131096555928559\\
                9	4.28946289955358e-05\\
                9.5	1.29803168272456e-05\\
                10	3.63277427643464e-06\\
                };
                
                \addplot [color=black, mark= square, mark size = 1pt, mark options={solid, black}]
                  table[row sep=crcr]{%
                -8	3.12836299021448e-05\\
                -6.5	0.000131391245589008\\
                -5	0.000744550391671046\\
                -3.5	0.00270290562354531\\
                -2	0.00695747929023701\\
                -1	0.013389393598118\\
                -0	0.0229371574442526\\
                1	0.0369522236404134\\
                2	0.0541519633606127\\
                3	0.0726656155367019\\
                4	0.091642265435343\\
                4.5	0.0977801136221438\\
                5.5	0.10852291213054\\
                6.5	0.11187026053007\\
                7.5	0.106383111845234\\
                8.5	0.0940949020196711\\
                9.5	0.0746677678504392\\
                10.5	0.0557912255674851\\
                11.5	0.0385101484095403\\
                12.5	0.0244137447756338\\
                14	0.011187026053007\\
                15.5	0.00427334384463298\\
                17	0.00116375103235979\\
                18.5	0.000337863202943164\\
                20	8.75941637260055e-05\\
                };

                \addplot [color=red, dashed , mark= square, mark size = 1pt, mark options={solid, red}]
                  table[row sep=crcr]{%
                -8	3.36398382019608e-05\\
                -7	0.000100492794197494\\
                -6	0.000277330802192349\\
                -5	0.000707039140253133\\
                -4	0.00166521750147777\\
                -3	0.00362310265736995\\
                -2	0.0072823655412947\\
                -1	0.0135221725114033\\
                0	0.0231954408997449\\
                1	0.0367570758190361\\
                2	0.053809807517546\\
                3	0.0727719622011034\\
                4	0.0909177949835009\\
                5	0.104933912414137\\
                6	0.111883230926832\\
                7	0.110203730300099\\
                8	0.100278944370926\\
                9	0.0842956908131069\\
                10	0.0654610786159555\\
                11	0.0469616234774885\\
                12	0.0311232692377362\\
                13	0.0190550230964501\\
                14	0.0107774473927786\\
                15	0.00563124608090209\\
                16	0.00271816153741573\\
                17	0.001212071338931\\
                18	0.00049930195121559\\
                19	0.000190011771480367\\
                20	6.68005307186608e-05\\
                };

                \end{axis}
                \end{tikzpicture}%
                \vspace{-0.11em}
                 \caption{Converged density with $(64,30,14)$ eBCH code at SNR = 8 dB}
                \label{Fig::Convergence::muser::Simulation}
             \end{subfigure}

             \vspace{-0.11em}
             \caption{The convergence point of the multi-user transmission with equal receiving power.}
             \vspace{-0.31em}
             \label{Fig::Convergence}
        \end{figure}
        
        \vspace{-0.5em}
       \subsection{Future Works based on the Analytical Framework}
       \vspace{-0.5em}
        Based on the analytical framework introduced in this paper, one can further analyze, improve, and optimize the OSD-based JD. We outline a few potential directions as follows.
        
        \subsubsection{Improving the BER Performance} As shown by Fig. \ref{Fig::Convergence::muser::Numerical}, there may exist multiple convergence points in certain circumstances when using finite block-length codes, where JD may converge to a local suboptimal convergence point. Therefore, one can investigate the convergence behavior of JD utilizing the proposed DE framework, and devise the JD strategies that avoid suboptimal convergence points.      
        
        \subsubsection{Speeding up the Convergence} One can devise techniques such as adaptive DS, to reduce the number of iterations to achieve convergence. This can be done by investigating the density-transform feature of SOSD; that is, DS is turned on only when the extrinsic density $\mathcal{D}(\ell)$ exhibits a higher quality than the priori density $\mathcal{L}(\ell)$. This ensures that the decoding results is always improved towards the convergence point. Moreover, combining techniques, e.g., DSC and DC, can be thoroughly studied by examining their impact on the LLR density during the JD process.
        
        \subsubsection{Reducing the Complexity per Iteration}
        To reduce the complexity per iteration is to reduce the complexity of single SOSD decoding. A potential approach is to select relatively low decoding orders at some early JD iterations, when the low-order decoding can provide sufficient improvement over the priori LLR. This can be done by examining the density transform feature of SOSD in conjunction with the JD convergence behavior. However, low-order decoding will inevitably increase the total number of JD iterations. Therefore, one can optimize the overall complexity to answer the following question: Is it better to reduce complexity per iteration or to reduce the number of iterations?

\section{Conclusion} \label{sec::Conclusion}

In this paper, we introduced an analytical framework for the joint decoding (JD) of NOMA systems for short-packet communications, based on ordered-statistics decoding (OSD).  We first introduced  a  variant OSD algorithm, namely Dual-OSD. It was shown that by carefully selecting the parameters in Dual-OSD, it can approach the density-transform feature of SOSD or its variants. The density-transform feature of SOSD, i.e., computing the density of extrinsic LLR based on the priori LLR, was derived by analyzing Dual-OSD. Next, we developed the density evolution (DE) framework for the OSD-based JD by representing it as bipartite graphs (BGs). It was shown that the  proposed  DE  can  accurately describe the evolution of the priori and extrinsic LLRs during the JD process. Finally, we discussed the BER performance and the convergence point of the OSD-based JD, and analyzed the converged densities of LLR in the two-user system and equal-power system. The proposed DE framework can be used to further optimize the OSD-based NOMA JD in future works.







%

\begin{spacing}{1.3}
\bibliographystyle{IEEEtran}
\bibliography{BibAbrv/IEEEabrv, BibAbrv/OSDAbrv, BibAbrv/LPAbrv, BibAbrv/SurveyAbrv, BibAbrv/ClassicAbrv, BibAbrv/MLAbrv, BibAbrv/MathAbrv,BibAbrv/NOMAAbrv}
\end{spacing}

\end{document}